\documentclass[accepted=2022-08-07, onecolumn, letterpaper, 11pt]{quantumarticle}
\pdfoutput=1

\usepackage[pagebackref,colorlinks,citecolor=blue,linkcolor=blue,bookmarks=true]{hyperref}
\usepackage{xspace}
\DeclareMathAlphabet{\mathbbold}{U}{bbold}{m}{n}
\usepackage{amssymb,amsmath,amsthm,graphicx,bbm}
\usepackage{mathtools}
\usepackage[usenames,dvipsnames,svgnames,table]{xcolor}
\usepackage{thmtools, thm-restate}
\usepackage[margin=1.in]{geometry}
\usepackage{multirow,array}
\usepackage{colortbl}
\allowdisplaybreaks
\usepackage{orcidlink}
\usepackage{tcolorbox}
\usepackage{mathrsfs}
\usepackage{microtype}
\usepackage[utf8]{inputenc}
\usepackage[english]{babel}
\usepackage[T1]{fontenc}
\usepackage[colorinlistoftodos]{todonotes}
\hypersetup{
    pdftitle={}, 
    pdfauthor={}, 
    colorlinks=true, 
    linkcolor=quantumviolet, 
    citecolor=quantumviolet, 
    filecolor=quantumviolet, 
    urlcolor=quantumviolet
}
\usepackage{caption}

\renewcommand{\backref}[1]{}

\renewcommand{\backrefalt}[4]{%
\ifcase #1 %
\or
[p.\ #2]%
\else
[pp.\ #2]%
\fi}

\usepackage{cite}
\usepackage[capitalise,nameinlink]{cleveref}

\usepackage{authblk}
\usepackage{algorithm}
\usepackage[noend]{algpseudocode}
\usepackage{tikz,tikz-qtree}
\usepackage{wrapfig}
\usepackage{enumitem}
\usepackage{tikz}
\usepackage{doi}
\newcommand*\circled[1]{\tikz[baseline=(char.base)]{
            \node[shape=circle,draw,inner sep=2pt] (char) {#1};}}

\theoremstyle{plain}
\newtheorem{theorem}{Theorem}[section]
\newtheorem*{theorem*}{Theorem}
\newtheorem{proposition}[theorem]{Proposition}
\newtheorem{lemma}[theorem]{Lemma}
\newtheorem{claim}[theorem]{Claim}
\newtheorem{fact}[theorem]{Fact}

\newtheorem{corollary}[theorem]{Corollary}

\theoremstyle{definition}
\newtheorem{definition}[theorem]{Definition}
\theoremstyle{remark}
\newtheorem{remark}[theorem]{Remark}
\theoremstyle{plain}

\def\Complex{{\mathbb{C}}} 

\renewcommand{\Pr}{\mathop{\bf Pr\/}}
\newcommand{\E}{\mathop{\mathbb E\/}}

\def\Udag{\mathcal U} 


\newcommand{\eps}{\varepsilon}

\DeclarePairedDelimiter\floor{\lfloor}{\rfloor}

\renewcommand{\tilde}{\widetilde}
\newcommand{\ket}[1]{{\left\vert #1 \right\rangle}}
\newcommand{\bra}[1]{{\left\langle #1 \right\vert}}
\newcommand{\ketbra}[2]{\ket{#1}\!\bra{#2}}


\newcommand{\sumCube}[1]{\sum_{#1\in \{0,1\}^n}}
\newcommand{\diag}{\mathrm{diag}}

\newcommand{\B}{\{0,1\}}
\newcommand{\dd}{\mathrm{d}}
\newcommand{\ii}{\mathrm{i}}
\newcommand{\nn}{\nonumber\\}

\newcommand{\shadow}{\mathrm{shadow}}

\let\OldLambda\lambda
\let\lambda\relax
\DeclareMathOperator{\lambda}{\OldLambda}

\DeclareMathOperator{\tr}{tr}
\DeclareMathOperator{\me}{\mathrm{e}}
\newcommand{\norm}[1]{\left\lVert#1\right\rVert}
\DeclareMathOperator{\Tr}{tr}

\DeclareMathOperator{\Var}{\mathrm{Var}}
\DeclareMathOperator{\median}{\mathrm{median}}

\newcommand{\channel}{\mathcal{E}}
\newcommand{\ensemble}{\mathcal{U}}

\newcommand{\shadowchannel}[2]{\mathcal{M}_{#1,#2}}
\newcommand{\linearoperator}{\mathcal{L}}

\newcommand{\inverseshadowchannel}[2]{\mathcal{M}_{#1, #2}^{-1}}

\newcommand{\localclifford}{\mathcal{C}_{1}}

\usepackage{ bbold }

\setcounter{tocdepth}{3}

\title{Classical Shadows With Noise}
\author{Dax Enshan Koh}
\affiliation{Institute of High Performance Computing, Agency for Science, Technology and Research (A*STAR), 1 Fusionopolis Way, \#16-16 Connexis, Singapore 138632, Singapore}
\affiliation{Zapata Computing,~Inc., 100 Federal Street, 20th Floor, Boston, Massachusetts 02110, USA}
\email{dax\textunderscore koh@ihpc.a-star.edu.sg}
\orcid{0000-0002-8968-591X}
\author{Sabee Grewal}
\affiliation{Department of Computer Science, The University of Texas at Austin, Austin, TX 78712, USA}
\affiliation{Zapata Computing,~Inc., 100 Federal Street, 20th Floor, Boston, Massachusetts 02110, USA}
\email{sabee@cs.utexas.edu}
\orcid{0000-0002-8241-560X}
\date{}

\begin{document}
\maketitle

\begin{abstract}
The classical shadows protocol, recently introduced by Huang, Kueng, and Preskill [Nat.~Phys.~16, 1050 (2020)], is a quantum-classical protocol to estimate properties of an unknown quantum state. Unlike full quantum state tomography, the protocol can be implemented on near-term quantum hardware and requires few quantum measurements to make many predictions with a high success probability.

In this paper, we study the effects of noise on the classical shadows protocol. In particular, we consider the scenario in which the quantum circuits involved in the protocol are subject to various known noise channels and derive an analytical upper bound for the sample complexity in terms of a shadow seminorm for both local and global noise. Additionally, by modifying the classical post-processing step of the noiseless protocol, we define a new estimator that remains unbiased in the presence of noise. As applications, we show that our results can be used to prove rigorous sample complexity upper bounds in the cases of depolarizing noise and amplitude damping.
\end{abstract}

\tableofcontents

\newpage
\section{Introduction}
\label{sec:intro}

Estimating the expectation values of quantum observables with respect to preparable quantum states is an important subroutine in many NISQ\footnote{NISQ---coined by Preskill \cite{preskill2018quantum}---stands for \textit{noisy intermediate-scale quantum}.}-era quantum algorithms that are of potential
practical importance \cite{preskill2018quantum,bharti2022noisy}. 
These algorithms, which include variational quantum algorithms \cite{cerezo2021variational} like the variational quantum eigensolver (VQE) \cite{peruzzo2014variational} and the quantum approximate optimization algorithm (QAOA) \cite{farhi2014quantum}, promise wide-ranging applications in, \textit{inter alia}, quantum chemistry \cite{cao2019quantum}, quantum metrology \cite{giovannetti2006quantum}, and optimization \cite{moll2018quantum}.
However, estimation is often the major bottleneck in many of these applications, where the number of measurements required is often too large for the algorithms to achieve the desired accuracy on useful instances using near-term quantum hardware \cite{wecker2015progress, huggins2021efficient}. 
Thus, developing efficient estimation protocols that can be implemented on near-term quantum hardware is critical to developing applications for NISQ devices.

In a recent breakthrough, Huang, Kueng, and Preskill introduced the \textit{classical shadows protocol} \cite{huang2020predicting},
a protocol for estimating many properties of a quantum state with few quantum measurements. 
The classical shadows protocol is based on the following idea: instead of recovering a full classical description of a quantum state like in full quantum state tomography \cite{haah2017sample, o2016efficient}, the protocol aims to learn only a minimal classical sketch---the classical shadow\footnote{The term `shadow' comes from Aaronson's work on \textit{shadow tomography} \cite{aaronson2020shadow}.}---of the state, which can then later be used to predict functions of the state (e.g., expectation values of observables). 

Classical shadows requires minimal quantum resources, yet can efficiently perform useful estimation tasks, making it amenable for use in the NISQ era. For example, classical shadows can efficiently estimate the energy of local Hamiltonians, verify entanglement, and estimate the fidelity between an unknown quantum state and a known quantum pure state (see \cite{huang2020predicting} for more applications). 
Additionally, rigorous performance guarantees on the protocol---in the form of upper bounds on the required number of samples in terms of error and confidence parameters---have been proved.

An assumption made in the original work by Huang, Kueng, and Preskill (and in some subsequent works by others) is that the unitary operators involved in the protocol can be executed perfectly. 
In real-world experiments with actual quantum hardware, however, this assumption will almost never hold due to the effects of noise on the quantum systems involved. 
Hence, for an accurate description of how the classical shadows protocol will perform in practice, it is important to take into account the effects of noise. 
The main contribution of this work is theoretically addressing how noise affects classical shadows. 
We derive rigorous sample complexity upper bounds for the most general noise channel, assuming only that the noise is described by a completely positive and trace-preserving linear superoperator. We also show how our results specialize in specific examples, e.g., when the noise is local, or when the noise is described by a depolarizing channel or an amplitude damping channel. 

\subsection{Main Ideas}
\label{sec:main_ideas}

\subsubsection{Review of Classical Shadows}
\label{sec:classical_shadows_review}

Classical shadows require the ability to perform computational basis measurements and apply a collection of unitary transformations, called the \textit{unitary ensemble}. 
The choice of unitary ensemble affects both the time complexity and the number of measurements needed (the sample complexity) for the protocol to succeed with small error.
For classical shadows to be time efficient, the unitary ensemble must be efficiently classically simulable on computational basis states (which is why considerable focus is given to the Clifford group in this work and in the original work \cite{huang2020predicting}).

Consider the following random process:
sample (with respect to some fixed probability distribution) a unitary transformation $U$ from the unitary ensemble $\mathcal U$. Apply $U$ to a quantum state $\rho$, 
and measure the resulting state $U \rho U^\dagger$ in the computational basis to get outcome state $\lvert b \rangle\!\langle b\rvert$. Finally,  classically simulate  
$\lvert b \rangle\!\langle b \rvert \mapsto U^\dagger \lvert b \rangle\!\langle b \rvert U$.
For any unitary ensemble, this process is a quantum channel in expectation, which we call the \textit{noiseless shadow channel}:
\begin{align}
 \label{eq:noiseless_shadow_channel}
    \mathcal M: \rho \mapsto \E_{U \sim \ensemble} \sum_{b \in \{0,1\}^n} \bra{b} U \rho U^\dagger \ket{b} U^\dagger \ket{b}\!\bra{b} U,
\end{align}
where $\E$ denotes the expectation value. A sufficient condition for the noiseless shadow channel to be invertible is that the unitary ensemble is tomographically complete \cite{huang2020predicting}.
\begin{definition}[tomographically complete]
A unitary ensemble $\ensemble$ is \textit{tomographically complete} if
for each pair of quantum states $\sigma \neq \rho$,
there exists a $U \!\in \ensemble$ and $b \in \{0,1\}^n$ such that $\bra{b}U \sigma U^\dagger \ket{b} \neq \bra{b}U \rho U^\dagger \ket{b}$.
\end{definition}

The classical shadows protocol works as follows: run the random process above to produce
a classical description of $U^\dagger \lvert b \rangle\!\langle b \rvert U$
(which involves quantum and classical computation) and then apply the inverse shadow channel $\mathcal{M}^{-1}$ (which involves only classical computation). 
The output $\hat{\rho} \overset{\mathrm{def}}{=} \mathcal M^{-1}( U^\dagger \lvert b \rangle\!\langle b \rvert U)$ is called the \textit{noiseless classical shadow}, which is an unbiased estimator of $\rho$, i.e., $\E[\hat{\rho}] = \rho$.  
Repeat this process to produce many classical shadows, a classical data set that can be used to estimate linear functions of the unknown state $\rho$. 

To estimate a linear function $\tr(O\rho)$, one must classically compute $\tr(O \hat{\rho})$ (an unbiased estimator of $\tr(O\rho)$) for each classical shadow from which a median-of-means estimator is constructed (see \cref{algo:median_of_means} for details). 
The power of the median-of-means estimator is captured in the following concentration inequality.
\begin{fact}[Jerrum et al.~\cite{jerrum1986random}]\label{fact:median-of-means}
Let $X$ be a random variable with variance $\sigma^2$. Then $K$ independent sample means of size $N = \frac{34 \sigma^2}{\eps^2}$ suffice to construct a median-of-means estimator $\hat{\mu}(N,K)$ that obeys 
\begin{align}
\Pr[ \lvert \hat{\mu}(N, K) - \E[X] \rvert \geq \eps ] \leq 2 \me^{-K/2}, \quad \forall\, \eps > 0.
\end{align}
\end{fact}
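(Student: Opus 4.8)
The plan is to analyze the textbook median-of-means construction directly. First I would fix notation: take $NK$ independent copies $X_{1,1},\dots,X_{K,N}$ of $X$, grouped into $K$ batches of size $N$, and for each $i\in\{1,\dots,K\}$ set $\hat\mu_i \deff \frac1N\sum_{j=1}^N X_{i,j}$ and $\hat\mu(N,K) \deff \median(\hat\mu_1,\dots,\hat\mu_K)$. The two elementary observations to record are that the batch means $\hat\mu_1,\dots,\hat\mu_K$ are mutually independent, and that each is unbiased with $\E[\hat\mu_i]=\E[X]$ and $\Var(\hat\mu_i)=\sigma^2/N$.

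Next I would control one batch via Chebyshev's inequality: $\Pr[\,\abs{\hat\mu_i - \E[X]} \ge \eps\,] \le \Var(\hat\mu_i)/\eps^2 = \sigma^2/(N\eps^2)$, which for the prescribed $N = 34\sigma^2/\eps^2$ is exactly $1/34$. So if I define the ``bad-batch'' indicators $Z_i \deff \mathbf{1}\{\abs{\hat\mu_i - \E[X]} \ge \eps\}$, then $Z_1,\dots,Z_K$ are independent Bernoulli variables, each equal to $1$ with probability at most $1/34$.

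The combinatorial heart of the argument is the inclusion $\{\,\abs{\hat\mu(N,K)-\E[X]} \ge \eps\,\} \subseteq \{\,\sum_{i=1}^K Z_i \ge K/2\,\}$: if the median of the batch means is at distance at least $\eps$ from $\E[X]$, then all batch means lying on the far side of the median are themselves at distance at least $\eps$, and there are at least $K/2$ of them. It then remains to bound $\Pr[\sum_i Z_i \ge K/2]$. Since $\sum_i Z_i$ is stochastically dominated by a $\mathrm{Binomial}(K,1/34)$ random variable, the relative-entropy form of the Chernoff bound gives $\Pr[\sum_i Z_i \ge K/2] \le \exp\!\big(-K\,D(\tfrac12\,\|\,\tfrac1{34})\big)$ with $D(a\|b) = a\ln\tfrac ab + (1-a)\ln\tfrac{1-a}{1-b}$; numerically $D(\tfrac12\|\tfrac1{34}) > \tfrac12$, so this is at most $e^{-K/2} \le 2e^{-K/2}$, proving the claim. (Equivalently, repeating the last two steps separately for the one-sided indicators $\mathbf{1}\{\hat\mu_i-\E[X] \ge \eps\}$ and $\mathbf{1}\{\hat\mu_i-\E[X] \le -\eps\}$ recovers the factor of $2$ honestly, as a union bound over the two tails.)

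I do not anticipate a real obstacle: the only delicate point is calibrating the constant. A plain Hoeffding bound yields exponent $2(\tfrac12-\tfrac1{34})^2 \approx 0.44 < \tfrac12$, which is a touch too weak, so one should invoke the multiplicative (relative-entropy) Chernoff bound as above, or simply note that any per-batch failure probability below roughly $1/10$ already suffices and that $N = 34\sigma^2/\eps^2$ is chosen precisely to push the Chebyshev estimate comfortably into that regime.
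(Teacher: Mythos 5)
Your proposal is correct, and there is nothing in the paper to compare it against: the statement is imported as a \emph{Fact} with a citation to Jerrum, Valiant, and Vazirani, and no proof is given in the manuscript. Your argument is the canonical one behind that citation — Chebyshev on each batch mean (giving per-batch failure probability $\sigma^2/(N\eps^2)=1/34$), the observation that a median deviation of $\eps$ forces at least $K/2$ batch deviations on the same side, and a binomial tail bound — and the one genuinely delicate point, that plain Hoeffding gives exponent $2(\tfrac12-\tfrac1{34})^2\approx 0.44<\tfrac12$ so the relative-entropy Chernoff bound (with $D(\tfrac12\,\|\,\tfrac1{34})\approx 1.09$) is needed to reach the stated constant, is handled correctly.
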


At this point, we have an unbiased estimator of $\tr(O\rho)$ which has nice concentration properties. However, the sample complexity depends on the variance $\Var[\tr(O\hat\rho)]$ (a function of the input state). 
To prove \textit{a priori} bounds on the sample complexity of classical shadows (i.e., bounds that do not depend on the input state), the authors introduce the \textit{shadow norm} $\norm{\cdot}_\mathrm{shadow}$, whose square is always an upper bound on the variance of $\tr(O \hat{\rho})$ (i.e.,
    $\Var[\tr(O\hat{\rho})] \leq \norm{O}_\shadow^2$).
Combining this upper bound with \cref{fact:median-of-means} yields the main result of \cite{huang2020predicting}:
\begin{theorem}[Informal version of Theorem 1 in Huang, Kueng, and Preskill~\cite{huang2020predicting}] \label{thm:informal_version}
Classical shadows of size $N$ suffice to estimate $M$ arbitrary linear functions $\tr(O_1 \rho), \ldots, \tr(O_M \rho)$ up to additive error $\eps$ given that 

\[
N \in O \left(\frac{\log(M)}{\eps^2} \max_{1 \leq i \leq M} \norm{O_i}^2_{\mathrm{shadow}} \right).
\]
\end{theorem}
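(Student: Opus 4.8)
The plan is to combine the variance bound furnished by the shadow norm with the median-of-means concentration inequality of \cref{fact:median-of-means}, and then take a union bound over the $M$ observables. The crucial structural observation is that a \emph{single} collection of classical shadows can be reused to estimate all $M$ linear functions, so the union bound costs only a factor of $\log M$ inside the number of batches rather than a multiplicative factor of $M$ in the total sample size.

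First, fix a confidence parameter $\delta \in (0,1)$, treated as a constant, and set $B \deff \max_{1 \le i \le M} \norm{O_i}_\shadow^2$. For each $i$, the quantity $\tr(O_i \hat{\rho})$ computed from a single noiseless classical shadow $\hat{\rho}$ is an unbiased estimator of $\tr(O_i \rho)$ --- since $\E[\hat{\rho}] = \rho$ implies $\E[\tr(O_i\hat{\rho})] = \tr(O_i \E[\hat{\rho}]) = \tr(O_i\rho)$ --- and it satisfies $\Var[\tr(O_i \hat{\rho})] \le \norm{O_i}_\shadow^2 \le B$.

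Next, I would generate $N_{\mathrm{tot}} = NK$ independent classical shadows, partition them into $K$ batches of size $N \deff \lceil 34 B / \eps^2 \rceil$, and for each observable $O_i$ form the $K$ batch means of $\tr(O_i \hat{\rho})$ and then their median, obtaining an estimator $\hat{\mu}_i$. Since $N \ge 34\,\Var[\tr(O_i\hat{\rho})]/\eps^2$ for every $i$, \cref{fact:median-of-means} gives
\[
\Pr\big[\, \lvert \hat{\mu}_i - \tr(O_i \rho) \rvert \ge \eps \,\big] \le 2\me^{-K/2} \qquad (1 \le i \le M).
\]
A union bound over $i$ bounds the probability that \emph{some} estimate deviates by at least $\eps$ by $2M\me^{-K/2}$, which is at most $\delta$ as soon as $K \ge 2\ln(2M/\delta)$. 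Taking $K = \lceil 2 \ln(2M/\delta) \rceil$, the total number of classical shadows (the quantity denoted $N$ in the theorem statement) is
\[
N_{\mathrm{tot}} = NK = \lceil 34 B / \eps^2 \rceil \cdot \lceil 2\ln(2M/\delta) \rceil \in O\!\left( \frac{\log M}{\eps^2}\, \max_{1 \le i \le M} \norm{O_i}_\shadow^2 \right),
\]
which is the claimed bound.

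There is no deep obstacle here once the shadow norm and its variance bound are in hand; the two points that require care are (i) that $\norm{\cdot}_\shadow$ bounds the variance \emph{uniformly}, so that one common batch size $N$ works simultaneously for all $M$ observables, and (ii) that the same shadows feed all $M$ median-of-means estimators, so that the cost of estimating many functions appears only inside the logarithm. If one wants the confidence made explicit, $\delta$ simply reappears as an additive $\log(1/\delta)$ in the bound on $K$, yielding $N_{\mathrm{tot}} \in O(\eps^{-2}\log(M/\delta)\max_i \norm{O_i}_\shadow^2)$.
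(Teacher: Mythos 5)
Your proposal is correct and follows essentially the same route the paper takes for its (noisy) formal version, \cref{thm:general-theorem}: a uniform variance bound by the shadow norm, median-of-means with batch size $34\max_i\norm{O_i}^2_{\mathrm{shadow}}/\eps^2$ via \cref{fact:median-of-means}, and a union bound over the $M$ observables absorbed into $K=O(\log(M/\delta))$. The only cosmetic difference is that the paper applies the variance bound to the traceless part $O_i-\tr(O_i)\mathbb{I}/2^n$, which only tightens the same argument.
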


The definition of the shadow norm depends on the unitary ensemble used to create the classical shadows. 
As examples, Huang, Kueng, and Preskill prove that if the unitary ensemble is the Clifford group, then 
$\norm{O}_\shadow^2 \leq 3\tr(O^2)$. In this case, the sample complexity is 

\begin{align}\label{eq:noiseless-sample-complexity-clifford}
N \in O \left(\frac{\log(M)}{\eps^2} \max_{1 \leq i \leq M} \tr(O_i^2)\right).
\end{align}

They also prove that if the unitary ensemble is the $n$-fold tensor product of the single-qubit Clifford group and the observable is a Pauli operator $P = P_1 \otimes \cdots \otimes P_n$, then 
$\norm{P}_\shadow^2 = 3^{\mathrm{wt}(P)}$, where $\mathrm{wt}(P) = \lvert \{ i : P_i \neq \mathbb I \}\rvert.$
In this case, the sample complexity is 

\begin{align}\label{eq:noiseless-sample-complexity-local-clifford}
N \in O \left(\frac{\log(M)}{\eps^2} \max_{1 \leq i \leq M} 3^{\mathrm{wt}(P_i)}\right).
\end{align}

To prove these bounds, Huang, Kueng, and Preskill use the fact that the Clifford group is a $3$-design (\cref{def:t-design}) \cite{zhu2016clifford, webb2016clifford}. 
They first express the shadow norm in terms of expectation values taken over the Clifford group, before replacing these expectations by integrals over the Haar measure by using the fact that the Clifford group forms a 3-design. Roughly speaking, this means that the uniform distribution over the Clifford group can duplicate properties of the probability distribution over the Haar measure for polynomials of degree not more than 3. These integrals have simple closed-form expressions, which can then be shown to be bounded by the expressions found in \cref{eq:noiseless-sample-complexity-clifford} and \cref{eq:noiseless-sample-complexity-local-clifford}.

\subsubsection{Our Contributions}
\label{sec:our_contributions}

In this paper, we consider the scenario in which the unitary operators involved in the classical shadows protocol are subject to noise. 
Specifically, we assume that an error channel $\cal E$ acts after the unitary operation is performed in the classical shadows protocol.\footnote{\label{footnote:GTM_main}This assumption on the noise model is sometimes referred to as the GTM noise assumption, where GTM stands for \textit{gate-independent}, \textit{time-stationary}, and \textit{Markovian} \cite{chen2020robust,flammia2020efficient,chen2022quantum}.
See \cref{sec:conclusion} for remarks on the scope and limitations of this assumption.}

The randomized measurement process remains the same as the noiseless protocol with the caveat that the system is subject to noise and continues to describe a quantum channel in expectation that we call 
the \textit{shadow channel with noise} $\cal E$:
\[\shadowchannel{\ensemble}{\channel}(\rho) = \E_{U \sim \ensemble} \sum_{b \in \{0,1\}^n} \bra{b} \channel(U \rho U^\dagger) \ket{b} U^\dagger \ket{b}\!\bra{b} U.
\] 
Note that since the shadow channel depends on noise, so does its inverse 
$\inverseshadowchannel{\ensemble}{\channel}$ (assuming it exists\footnote{We prove sufficient conditions for the invertibility of the shadow channel in the noisy setting (see \cref{claim:invertibility_global_clifford}).
\label{footenote:invertibility}
}). Therefore, while the classical shadows protocol remains similar 
 --- produce a classical description of $U^\dagger \ket{b}\!\bra{b}U$ and apply the inverse shadow channel ---
there is a necessary algorithmic change to the protocol when noise is present. Namely, in order for the classical shadow to remain an unbiased estimator, the classical post-processing step (i.e., when the inverse shadow channel is applied) must be modified to account for the noise. 
(We show this formally in \cref{sec:incorporatingNoise}.)

Noise also affects the sample complexity of the classical shadows shadows protocol. 
With noise present in the system, we prove the following sample complexity bounds, which generalize the main results of \cite{huang2020predicting}.
The key high-level takeaway is that the number of samples increases by only polynomial factors, suggesting that, even with noise, classical shadows can be run efficiently.
The bounds below are expressed in terms of the completely dephasing channel $\diag:\mathcal L(\mathbb C^d)\rightarrow \mathcal L(\mathbb C^d)$, which sends all the non-diagonal entries of its input to zero: $\diag(A) = \sum_{i=1}^d \ket i \!\bra i A \ket i \! \bra i$.

\begin{theorem}[Informal version of \cref{cor:sampleComplexityGlobal}] \label{thm:informal1}
Classical shadows of size $N$ suffice to estimate $M$ arbitrary linear functions $tr(O_1 \rho), \ldots, \tr(O_M \rho)$ to additive error $\eps$ 
when the quantum circuits used in the protocol are subject to the error channel $\channel$ given that  

\[
N \in O \left(\frac{2^{2n}\log(M)}{\beta^2\eps^2} \max_{1 \leq i \leq M} \tr(O_i^2)\right),
\]
where $\beta = \tr(\channel \circ \mathrm{diag})$.
\label{thm:informal_version_bound}
\end{theorem}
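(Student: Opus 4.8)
The plan is to bound the variance of the noise-aware single-shot estimator and feed it into the median-of-means machinery of \cref{fact:median-of-means}. Concretely, with $\ensemble$ the global Clifford group and the estimator taken to be the noise-aware shadow $\hat\rho = \inverseshadowchannel{\ensemble}{\channel}(U^\dagger\ketbra{b}{b}U)$, I want to show $\Var[\tr(O\hat\rho)] = O\!\left(\tfrac{2^{2n}}{(\beta-1)^2}\right)\tr(O^2)$; combining this with \cref{fact:median-of-means} (which needs $N=34\sigma^2/\eps^2$ samples per empirical mean) and a union bound over the $M$ observables then gives the claimed complexity, with the informal $\beta^2$ standing in for the exact $(\beta-1)^2$-type denominator tracked in \cref{cor:sampleComplexityGlobal}. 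Two sub-tasks feed this: an explicit description of $\shadowchannel{\ensemble}{\channel}$ and its inverse, and the variance estimate itself.

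For the shadow channel, I would write it in twirled form, $\shadowchannel{\ensemble}{\channel}(\rho) = \E_{U\sim\ensemble}\, U^\dagger\,\diag\!\big(\channel(U\rho U^\dagger)\big)\,U$, so that $\shadowchannel{\ensemble}{\channel}$ is the Clifford twirl of the trace-preserving map $\diag\circ\channel$. Since the Clifford group is a $3$-design (hence a $2$-design; \cref{def:t-design}), $\E_{U\sim\ensemble}$ may be replaced by the Haar average, and by Schur's lemma the Haar twirl of any trace-preserving linear map $\Lambda$ has the depolarizing form $\rho\mapsto a\rho + \tfrac{1-a}{d}\tr(\rho)I$ with $d=2^n$ and $a = \tfrac{\tr(\Lambda)-1}{d^2-1}$ (here $\tr$ is the superoperator trace). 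Taking $\Lambda=\diag\circ\channel$ and using cyclicity of the superoperator trace to identify $\tr(\diag\circ\channel)=\tr(\channel\circ\diag)=\beta$ gives $a=\tfrac{\beta-1}{d^2-1}$; this makes $\shadowchannel{\ensemble}{\channel}$ invertible exactly when $\beta\neq1$ (matching \cref{claim:invertibility_global_clifford}) with closed form $\inverseshadowchannel{\ensemble}{\channel}(X)=\tfrac1a X + \left(1-\tfrac1a\right)\tfrac{\tr(X)}{d}I$ --- in particular, multiplication by $1/a$ on the traceless subspace, which is all that matters for variances.

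For the variance estimate, I would first replace $O$ by its traceless part $O_0=O-\tfrac{\tr(O)}{d}I$ (this changes neither the variance of $\tr(O\hat\rho)$ nor, by more than the relevant factor, $\tr(O^2)$), so that up to an additive constant $\tr(O\hat\rho)=\tfrac1a\bra{b}UO_0U^\dagger\ket{b}$, whence $\Var[\tr(O\hat\rho)]=\tfrac1{a^2}\Var\!\big[\bra{b}UO_0U^\dagger\ket{b}\big]\le \tfrac1{a^2}\,\E_{U\sim\ensemble}\sum_b\bra{b}\channel(U\rho U^\dagger)\ket{b}\,\bra{b}UO_0U^\dagger\ket{b}^2$. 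Writing the summand as $\tr\!\big[\ketbra{b}{b}^{\otimes3}\,(\channel\otimes\mathrm{id}^{\otimes2})\big(U^{\otimes3}(\rho\otimes O_0\otimes O_0)U^{\dagger\otimes3}\big)\big]$ and using the $3$-design property to pass to a Haar integral, I would expand $\int dU\,U^{\otimes3}(\rho\otimes O_0\otimes O_0)U^{\dagger\otimes3}=\sum_{\pi\in S_3}c_\pi P_\pi$ by Weingarten calculus, the coefficients $c_\pi$ being linear in $\tr(O_0^2)$ and $\tr(\rho O_0^2)$. The channel then enters only through the scalars $\sum_b\tr\!\big[\ketbra{b}{b}^{\otimes3}(\channel\otimes\mathrm{id}^{\otimes2})(P_\pi)\big]$, and the crucial observation is that each equals $d$ (when $\pi$ fixes the first tensor factor) or $\beta$ (when it does not), the value $\beta$ arising from the identity $\sum_b\bra{b}\channel(\ketbra{b}{b})\ket{b}=\tr(\diag\circ\channel)=\beta$. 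Assembling the six contributions and bounding the subleading pieces via $\tr(\rho O_0^2)\le\norm{O_0^2}_\infty\le\tr(O_0^2)\le\tr(O^2)$ then yields $\Var[\tr(O\hat\rho)]\le O\!\big(\tfrac{d^2}{(\beta-1)^2}\big)\tr(O^2)$, as needed.

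The main obstacle is the bookkeeping in this third-moment step: computing the $S_3$ Gram matrix $\big(\tr(P_\sigma P_\pi)\big)_{\sigma,\pi}$ and its Weingarten inverse, pairing it against both $\tr(P_\sigma(\rho\otimes O_0\otimes O_0))$ and the channel-dependent scalars, and checking that the potentially dangerous $O(d^4/(\beta-1)^2)$ term --- which a crude $\norm{\cdot}_\infty$ bound on $\bra{b}UO_0U^\dagger\ket{b}^2$ would produce --- actually collapses to $O(d^2/(\beta-1)^2)$, i.e., that the $\tr(\rho O_0^2)$ terms are genuinely subleading. A secondary point worth checking is that the coefficient multiplying $\tr(O_0^2)$ (which turns out to be proportional to $d^2+d-2\beta$) is nonnegative: this holds because $\beta=\sum_b\bra{b}\channel(\ketbra{b}{b})\ket{b}$ is a sum of $d$ measurement probabilities and so lies in $[0,d]$; one should also handle the sign of $\beta-1$ (equivalently of $a$) uniformly, since a sufficiently noisy $\channel$ can push $\beta$ below $1$.
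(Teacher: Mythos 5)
Your proposal is correct and follows essentially the same route as the paper: the variance-to-seminorm reduction plus median-of-means and a union bound is \cref{lemma:upper-bound-on-variance} and \cref{thm:general-theorem}, your Schur's-lemma identification of the twirled map $\diag\circ\channel$ as a depolarizing channel with parameter $(\beta-1)/(2^{2n}-1)$ is exactly \cref{lemma:2design} and \cref{claim:invertibility_global_clifford}, and your third-moment Weingarten computation (with the scalars $d$ or $\beta$ per permutation, coefficients linear in $\tr(O_o^2)$ and $\tr(\rho O_o^2)$, and the bound via $\tr(\rho O_o^2)\le\norm{O_o^2}_{\mathrm{sp}}\le\tr(O_o^2)$) is the content of \cref{lem:identity3designsum}, \cref{prop:global_shadow_norm}, and \cref{prop:global_shadow_norm_bounds}. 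The only differences are presentational (covariance/Schur's lemma instead of the explicit two-fold Haar formula, and a direct $S_3$ expansion instead of the paper's $R_i$-operator basis), and your flags about the sign of $\beta-1$ and the informal $\beta^2$ standing in for $(\beta-1)^2$ are consistent with the paper.
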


Here, $\beta$ is the trace (which we define in \cref{sec:prelim}) of the quantum channel $\mathcal E \circ \diag$, whose explicit form is given by $\beta = \sum_{i=1}^d \bra i \mathcal E (\ketbra{i}{i}) \ket i$; but, roughly speaking, one can think of $\beta$ as the ``severity of the noise'' on the quantum device. For instance, if we choose $\channel$ to be the identity channel (i.e., we model our device as noiseless), then $\beta^2 = 2^{2n}$, which recovers the noiseless sample complexity in \cref{eq:noiseless-sample-complexity-clifford}. 
Note that $\beta$ cannot be $0$, and so the upper bound in \cref{thm:informal_version_bound} is finite. We discuss this when we prove sufficient conditions for the invertibility of the shadow channel (\cref{subsec:global-derivation-of-shadow-channel}).

We also generalize the sample complexity bounds when the observables of interest are all Pauli operators.

\begin{theorem}[Informal version of \cref{corollary:product_clifford_sample_complexity}]\label{thm:informal-product-clifford}
Let $\{P_i\}_{i=1}^M$ be a collection of $M$ Pauli operators. Classical shadows of size $N$ suffice to estimate linear functions 
$\tr(P_1 \rho), \ldots, \tr(P_M \rho)$ to additive error $\eps$ 
when the quantum circuits used in the protocol are 
subject to quantum channel $\channel^{\otimes n}$ given that 
\[
N \in O \left(\frac{\log(M)}{\eps^2} \max_{1 \leq i \leq M} \bigg( \frac{3}{\beta ^2}\bigg)^{\mathrm{wt}(P_i)}\right), 
\]
where $\beta = \tr(\channel \circ \mathrm{diag})$ and $\mathrm{wt}(P) = \lvert \{ i : P_i \neq \mathbb I \}\rvert.$
\end{theorem}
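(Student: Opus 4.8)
The plan is to reduce the $n$-qubit statement to a single qubit via the tensor-product structure, in the spirit of the noiseless local-Clifford analysis but carrying the error channel through every step. Write $\mathcal{C}_1$ for the single-qubit Clifford group and $\mathcal M$ for the single-qubit shadow channel with noise $\channel$, i.e.\ $\mathcal M(\rho)=\E_{U\sim\mathcal{C}_1}\sum_{b\in\B}\bra{b}\channel(U\rho U^\dagger)\ket{b}\,U^\dagger\ketbra{b}{b}U$. Since $U=U_1\otimes\cdots\otimes U_n$, the noise $\channel^{\otimes n}$, and the computational-basis measurement are all local, the shadow channel with noise factorizes, $\mathcal M_{\mathcal{C}_1^{\otimes n},\channel^{\otimes n}}=\mathcal M^{\otimes n}$, and hence so does its inverse, $(\mathcal M^{\otimes n})^{-1}=(\mathcal M^{-1})^{\otimes n}$, once $\mathcal M$ is shown invertible. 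The noise-corrected estimator $\hat\rho=(\mathcal M^{-1})^{\otimes n}(U^\dagger\ketbra{b}{b}U)$ is unbiased, since pushing the expectation inside the linear map $(\mathcal M^{-1})^{\otimes n}$ leaves $\E[\hat\rho]=(\mathcal M^{-1})^{\otimes n}(\mathcal M^{\otimes n}(\rho))=\rho$. Using that $\mathcal M$ is self-adjoint (being diagonal in the Hermitian-orthogonal Pauli basis), one writes $\tr(O\hat\rho)=\bra{b}\,U\,(\mathcal M^{-1})^{\otimes n}(O)\,U^\dagger\ket{b}$, so $\Var[\tr(O\hat\rho)]\le\E[\tr(O\hat\rho)^2]\le\norm{O}_{\shadow,\channel}^2$, where the noisy shadow seminorm is
\[
\norm{O}_{\shadow,\channel}^2\deff\max_{\sigma}\,\E_{U\sim\mathcal{C}_1^{\otimes n}}\sum_{b\in\B^n}\bra{b}\channel^{\otimes n}(U\sigma U^\dagger)\ket{b}\,\bigl(\bra{b}\,U\,(\mathcal M^{-1})^{\otimes n}(O)\,U^\dagger\ket{b}\bigr)^2
\]
with $\sigma$ ranging over quantum states. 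So the theorem reduces to (i) computing $\mathcal M$ and (ii) evaluating $\norm{P_i}_{\shadow,\channel}^2$.

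\emph{Step (i).} The map $\mathcal M$ is covariant under single-qubit Clifford conjugation --- substituting $U\mapsto UV$ in its defining average gives $\mathcal M(V\rho V^\dagger)=V\mathcal M(\rho)V^\dagger$ for all $V\in\mathcal{C}_1$ --- so by Schur's lemma, applied to the decomposition $\mathcal{L}(\mathbb{C}^2)=\mathrm{span}\{\mathbb{I}\}\oplus\mathrm{span}\{X,Y,Z\}$ into the trivial representation and a three-dimensional irreducible, $\mathcal M$ sends $\mathbb{I}\mapsto\alpha\mathbb{I}$ and $P\mapsto\gamma P$ for each $P\in\{X,Y,Z\}$. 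One gets $\alpha=1$ from the $1$-design property of $\mathcal{C}_1$ (so $\E_U U^\dagger\ketbra{b}{b}U=\tfrac12\mathbb{I}$) together with $\tr(\channel(\mathbb{I}))=2$; and $\gamma=\tfrac12\tr(Z\,\mathcal M(Z))$, where averaging over $\mathcal{C}_1$ takes $Z$ uniformly onto the six signed Paulis $\{\pm X,\pm Y,\pm Z\}$, of which only $\pm Z$ contribute because $X$ and $Y$ have vanishing diagonal. Unwinding this identifies $\gamma$ as an explicit affine function of $\beta=\tr(\channel\circ\diag)=\sum_{i}\bra{i}\channel(\ketbra{i}{i})\ket{i}$, namely $\gamma=\tfrac13(\beta-1)$, which equals $\tfrac13$ when $\channel$ is the identity channel and recovers the noiseless map $\mathcal M\colon\rho\mapsto\tfrac13(\rho+\tr(\rho)\mathbb{I})$. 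Thus $\mathcal M$ is a dephasing-type channel, invertible precisely when $\gamma\neq0$, i.e.\ $\beta\neq1$; this is the local analogue of \cref{claim:invertibility_global_clifford}, where one also argues $\beta\neq1$ holds automatically, making the bound finite. Its inverse fixes $\mathbb{I}$ and scales each traceless Pauli by $\gamma^{-1}$, so $(\mathcal M^{-1})^{\otimes n}(P_i)=\gamma^{-\mathrm{wt}(P_i)}P_i$.

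\emph{Step (ii).} By step (i), $\tr(P_i\hat\rho)=\gamma^{-\mathrm{wt}(P_i)}\bra{b}UP_iU^\dagger\ket{b}=\gamma^{-\mathrm{wt}(P_i)}\prod_{j}\bra{b_j}U_j(P_i)_jU_j^\dagger\ket{b_j}$, hence $\bigl(\bra{b}UP_iU^\dagger\ket{b}\bigr)^2=\prod_{j:(P_i)_j\neq\mathbb{I}}\mathbf{1}\bigl[U_j(P_i)_jU_j^\dagger\in\{\pm Z\}\bigr]$, a product of $\{0,1\}$-valued factors that is, crucially, independent of $b$ (since $\bigl(\bra{0}Z\ket{0}\bigr)^2=\bigl(\bra{1}Z\ket{1}\bigr)^2=1$). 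Therefore, in the seminorm the sum over $b$ against the probabilities $\bra{b}\channel^{\otimes n}(U\sigma U^\dagger)\ket{b}$ collapses to $\tr(\channel^{\otimes n}(U\sigma U^\dagger))=1$ times that indicator --- so the expression no longer depends on $\sigma$ and the maximization is vacuous --- and the remaining expectation over $U$ factorizes across the non-identity qubits to $\prod_{j:(P_i)_j\neq\mathbb{I}}\Pr_{U_j}\bigl[U_j(P_i)_jU_j^\dagger\in\{\pm Z\}\bigr]=3^{-\mathrm{wt}(P_i)}$. Hence $\norm{P_i}_{\shadow,\channel}^2=\gamma^{-2\mathrm{wt}(P_i)}\cdot3^{-\mathrm{wt}(P_i)}=(3\gamma^2)^{-\mathrm{wt}(P_i)}$, which is $3^{\mathrm{wt}(P_i)}$ in the noiseless case, as it must be. Plugging $\sigma^2=\max_i\norm{P_i}_{\shadow,\channel}^2$ into \cref{fact:median-of-means} with $K=\Theta(\log M)$ batches --- so that a union bound over the $M$ observables keeps the total failure probability $2M\me^{-K/2}$ small --- gives total sample size $N\in O\bigl(\tfrac{\log M}{\eps^2}\max_i(3\gamma^2)^{-\mathrm{wt}(P_i)}\bigr)$, which is the claimed bound once the value $\gamma=\tfrac13(\beta-1)$ is substituted.

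The step I expect to be the main obstacle is (i) together with the collapse in (ii): computing the single-qubit noisy shadow channel exactly --- in particular verifying $\alpha=1$, so that identity tensor factors are free, and pinning $\gamma$ down to $\beta=\tr(\channel\circ\diag)$ rather than to some other invariant of $\channel$ --- and then recognizing that trace-preservation of $\channel^{\otimes n}$ is precisely what makes the worst-case input state drop out for Pauli observables, which is what keeps the exponent at $\mathrm{wt}(P_i)$ and prevents a dimension-dependent blow-up. A secondary point requiring care is invertibility of $\mathcal M$, equivalently $\beta\neq1$: without it there is no unbiased estimator, and it has to be handled exactly as in the global Clifford case so that $(\mathcal M^{-1})^{\otimes n}$, and hence the whole bound, is well defined. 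Everything else --- the tensor factorizations and the passage from the seminorm to the sample complexity via median-of-means --- is routine bookkeeping of the same kind as in the noiseless argument of Huang, Kueng, and Preskill.
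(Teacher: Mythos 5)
Your proposal is correct and reaches the same quantitative result as the formal statement behind this theorem (\cref{corollary:product_clifford_sample_complexity}): the single-qubit noisy shadow channel acts as the identity on $\mathbb I$ and scales each traceless Pauli by $f(\channel)=\tfrac13(\beta-1)$, the Pauli shadow seminorm is $\bigl(3f(\channel)^2\bigr)^{-\mathrm{wt}(P)}$, and median-of-means with a union bound gives the claimed sample complexity. The skeleton (tensor factorization of the shadow channel and its inverse, unbiasedness, variance bounded by the seminorm, \cref{fact:median-of-means}) is the same as the paper's, but your two central computations take a genuinely different and more elementary route. For the channel, the paper evaluates the $2$-fold Haar twirl (\cref{lemma:2design} via \cref{eq:HaarIntegralt2}); you instead use Clifford covariance plus Schur's lemma on $\mathrm{span}\{\mathbb I\}\oplus\mathrm{span}\{X,Y,Z\}$ and pin down the traceless eigenvalue from the fact that Clifford conjugation sends $Z$ uniformly onto the six signed Paulis, only $\pm Z$ of which have nonvanishing diagonal; combined with trace preservation this indeed yields $\gamma=\tfrac13(\beta-1)$. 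For the seminorm, the paper invokes the $3$-design third-moment identity (\cref{lem:identity3designsum}, proved in the appendix with permutation operators); you observe that for a Pauli observable and the Clifford ensemble the squared diagonal matrix element $\bra{b}UPU^\dagger\ket{b}^2$ is a $b$-independent indicator that each conjugated nontrivial factor lands on $\pm Z$, so the sum over $b$ collapses by trace preservation of $\channel^{\otimes n}$, the maximization over $\sigma$ becomes vacuous, and the $U$-average is exactly $3^{-\mathrm{wt}(P)}$. The trade-off is that the paper's \cref{proposition:pauli_shadow_norm} holds for arbitrary products of $3$-designs, whereas your argument is specific to the Clifford group (which is all this theorem needs) and bypasses the $3$-design machinery entirely. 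Two minor remarks: your bound $\bigl(3/(\beta-1)^2\bigr)^{\mathrm{wt}(P_i)}$ matches the formal corollary, and the $3/\beta^2$ appearing in the informal statement is the paper's own looseness (the noiseless check $\beta=2$ only recovers $3^{\mathrm{wt}(P_i)}$ with $(\beta-1)^2$), so this is not a defect of your proof; also, $\beta\neq1$ does not hold automatically (the completely depolarizing single-qubit channel has $\beta=1$ and a singular shadow channel), so it should be stated as a hypothesis, exactly as in \cref{corollary:product_clifford_sample_complexity}, rather than as something that is argued to always hold.
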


It is easy to verify that if we choose $\channel$ to be the identity channel, then we recover the noiseless bound (\cref{eq:noiseless-sample-complexity-local-clifford}) proved in \cite{huang2020predicting}. It is important to note that, for this result, we assume that noise can be modelled on the device as a tensor product of single-qubit quantum channels. For simplicity, we have assumed that these single-qubit channels are identical (i.e.~each qubit is subject to the same noise model); we note, however, that it is straightforward to generalize this to the case where each single-qubit noise channel is different.

In addition to the sample complexity bounds above, we prove several new results. Among these are:
\begin{itemize}
    \item Tensor product noise cannot affect nice factorization properties of classical shadows with tensor product structure (\cref{subsection:product_ensemble_noise}).
    \item We prove a simple sufficient condition for the noisy shadow channel to be invertible (\cref{subsec:global-derivation-of-shadow-channel}). 
    \item If the unitary ensemble is a $2$-design, then the shadow channel is a depolarizing channel (even in the presence of a general quantum channel). 
    (\cref{subsec:global-derivation-of-shadow-channel}). 
    \item We prove nontrivial generalizations of shadow norm upper bounds presented in \cite{huang2020predicting} (\cref{subsec:global-derivation-of-shadow-norm} and \cref{subsection:product_clifford_shadow_norm}). For general noise models, the shadow norm ceases to be a norm. It, however, retains the properties of a seminorm. To this end, we shall refer to the `generalized shadow norm' as the \textit{shadow seminorm}.
    \item As applications, we show that our results can be used to prove rigorous sample complexity upper bounds in the cases of depolarizing noise and amplitude damping (\cref{subsec:global-examples} and \cref{subsec:local-examples}). We consider these noise models as they are
    good approximate models for quantum noise occurring in real quantum systems \cite{nielsen2010quantum}.
\end{itemize}

\subsection{Related Work}
\label{sec:related_work}

\subsubsection{Property Estimation and Quantum Tomography}
\label{sec:propertyEstimationLiterature}
There have been numerous works in the literature that can be cast as algorithms for estimating properties of quantum states. These include the following:

\begin{itemize}
\item General algorithms, such as \textit{quantum state tomography}, where the goal is to recover a classical description of an unknown quantum state $\rho$, given copies of $\rho$ \cite{hradil1997quantum, paris2004quantum,kohout2010optimal, banaszek2013focus, gross2010quantum}. Amongst these algorithms are sample-optimal protocols that use an asymptotically optimal number of samples but which require entangled measurements that act simultaneously on all the samples \cite{o2016efficient, haah2017sample}, and more experimentally friendly protocols that require only single-sample measurements \cite{flammia2012quantum, sugiyama2013precision, kueng2016low, kueng2017low, gu2020fast}.

\item \textit{Matrix product state tomography}, where it is assumed that the unknown quantum state is well-approximated by a matrix product state with low bond dimensions \cite{cramer2010efficient,lanyon2017efficient}.

\item \textit{Multi-scale entanglement renormalization ansatz (MERA) tomography}, for which a method for reconstructing multi-scale entangled states using a small number of efficiently implementable measurements and fast post-processing was developed \cite{cardinal2012practical}.

\item \textit{Neural network tomography}, which trains a classical deep neural network to represent quantum systems \cite{carrasquilla2019reconstructing,gao2017efficient}.

\item \textit{Overlapping quantum tomography}, which uses single-qubit measurements performed in parallel and the theory of perfect hash families to reconstruct $k$-qubit reduced density matrices of an $n$-qubit state with at most $e^{O(k)} \log^2(n)$ rounds of parallel measurements \cite{cotler2020quantum}. 

\item \textit{Shadow tomography} \cite{aaronson2020shadow, aaronson2019gentle, buadescu2021improved}, where the goal is to estimate $\tr(O_1\rho), \ldots, \tr(O_M\rho)$ to $\pm \eps$ accuracy, given a list of observables $O_1, \ldots, O_M$ and copies $\rho$. Classical shadows can be viewed as an efficient algorithm for shadow tomography in the special case that the shadow norm of the observables is small (e.g., observables with low rank). 
\end{itemize}

\subsubsection{Solving the Measurement Problem}
\label{sec:solvingMeasurementProblem}

There have been a number of results which focus on
reducing the number of measurements required in near-term quantum algorithms (i.e., solving the so-called ``measurement problem''). 
Recently, several methods have been proposed, for example, Pauli grouping \cite{kandala2017hardware, verteletskyi2020measurement}, unitary partitioning \cite{izmaylov2019unitary,zhao2020measurement}, engineered likelihood functions \cite{wang2021minimizing, koh2022foundations}, and deep learning models \cite{carrasquilla2019reconstructing}.
See Section 3 of \cite{huang2020predicting} for details on how classical shadows compares with other methods.
See also \cite{bharti2022noisy, wang2021minimizing,gonthier2020identifying} for more details on the ``measurement problem'' in near-term quantum algorithms. 

\subsubsection{Classical Shadows}
\label{sec:classical_shadows}

A number of works based on classical shadows have appeared since its introduction in \cite{huang2020predicting}. These include a generalization of classical shadows to the fermionic setting \cite{zhao2021fermionic,wan2022matchgate,gorman2022fermionic} and the use of classical shadows to estimate expectation values of molecular Hamiltonians \cite{hadfield2022measurements} and to detect bipartite entanglement in a many-body mixed state by estimating moments of the partially transposed density matrix \cite{elben2020mixed}. In addition, the first experimental implementation of classical shadows was carried out by Struchalin et al.~in a quantum optical experiment with high-dimensional spatial states of photons \cite{struchalin2021experimental}.

During the final stages of preparing version 1 \cite{koh2020classical} of our manuscript, we became aware of recent independent work by Chen, Yu, Zeng, and Flammia \cite{chen2020robust}, who also study ways to counteract noise in the classical shadows protocol. A key difference between their work and ours is that they do not assume that the noise model is known beforehand. Due to this, their strategy involves first learning the noise as a simple stochastic model before compensating for these errors using robust classical post-processing. In our manuscript, we have assumed that the user has modelled the noise on the device before implementing our protocol. This noise characterization can be carried out using efficient learning methods such as \cite{harper2020efficient,flammia2020efficient}.

Subsequent to version 1 \cite{koh2020classical} of our manuscript, several new extensions and applications of classical shadows have been developed \cite{li2021vsql, lukens2021bayesian, garcia2021quantum, hu2022hamiltonian, neven2021symmetry, huang2021efficient, acharya2021informationally, hillmich2021decision, hadfield2021adaptive, wu2021overlapped, rath2021quantum, zhang2021experimental, huang2021provably, huggins2022unbiasing, hu2021classical, flammia2021averaged, levy2021classical, kunjummen2021shadow, helsen2021estimating, chen2021exponential, notarnicola2021randomized, sack2022avoiding, bu2022classical, mcginley2022quantifying,liu2022experimental, lukens2021classical, lowe2021learning, huang2022learning,hu2022logical, seif2022shadow,elben2022randomized,boyd2022training,nguyen2022optimising,coopmans2022predicting,shivam2022classical,mcnulty2022estimating,wan2022matchgate,gorman2022fermionic}. Amongst these are extensions of the classical shadows framework to quantum channels \cite{levy2021classical, kunjummen2021shadow} and to more general ensembles, like locally scrambled unitary ensembles \cite{hu2021classical} and Pauli-invariant unitary ensembles \cite{bu2022classical}. Additional applications of classical shadows include avoiding barren plateaus in variational quantum algorithms \cite{sack2022avoiding}, quantifying information scrambling \cite{mcginley2022quantifying}, and estimating gate set properties \cite{helsen2021estimating}.

\subsubsection{Quantum Error Mitigation}
\label{sec:quantum_error_mitigation}

The last few years have seen the invention of several quantum error mitigation techniques \cite{endo2021hybrid} to suppress errors in NISQ devices, which are prone to errors but yet are not large enough for quantum error correction \cite{fowler2012surface, campbell2017roads, preskill2018quantum} to be implemented. Among these techniques are extrapolation methods \cite{li2017efficient,temme2017error,tiron2020digital} (e.g., Richardson extrapolation and exponential extrapolation), Clifford data regression \cite{czarnik2021error}, quantum subspace expansion \cite{mcclean2017hybrid}, and probabilistic error cancellation (also known as the quasi-probability method) \cite{temme2017error,endo2018practical}. Like classical shadows, these techniques involve repeated measurements and classical post-processing to obtain an estimator of the desired result. Unlike these techniques, though, our noisy classical shadows protocol incorporates error mitigation directly into the classical post-processing step, without requiring any additional quantum resources in the measurement process.

Some additional comparisons may be drawn between our noisy classical shadows protocol and the quasi-probability method, first proposed by Temme et al.\ for special channels \cite{temme2017error} and then extended by Endo et al.\ to practical Markovian noise \cite{endo2018practical}. The central idea behind the quasi-probability method is that for any (invertible) noise channel, its effects can be reversed by probabilistically implementing its inverse, by using the fact that while the inverse of a quantum channel may not be a quantum channel (and hence cannot be implemented physically by applying unitary operations to quantum states), it may be written as a linear combination of quantum channels (called basis operations). Like the quasi-probability method, our noisy classical shadows protocol reverses the effects of noise by effectively implementing the inverse of the noise channel (as part of implementing the inverse of the shadow channel). Unlike the quasi-probability method where the inverse of the noise channel is applied only probabilistically, our noisy classical shadows protocol applies the inverse of the noise channel deterministically. This is possible since the inverse is applied not as a physical operation on a quantum state, but as a mathematical operation on a classical description of a quantum state.

\section{Mathematical Preliminaries}
\label{sec:prelim}

Throughout this paper, we denote the set of linear operators on a vector space $V$ by $\mathcal L(V)$. The sets of Hermitian operators, unitary operators, and density operators on $\mathbb C^d$ 
are denoted by $\mathbb H_d$, $\mathbb U_d$, and $\mathbb D_d$ respectively. We denote the Haar measure on the $d$-dimensional unitary group by $\eta$.

For a linear operator $A$, the \textit{spectral norm} of $A$ is defined as 
\begin{align}
    \norm{A}_\mathrm{sp} = \max_{ x \in \mathbb C^n, \norm{x} = 1} \norm{Ax}.
\end{align}
When $A$ is Hermitian, the spectral norm may be written as 
\begin{align}
    \norm{A}_\mathrm{sp} = \max_{\sigma \in \mathbb D_d} \left|\tr(\sigma A)\right|.
\end{align}

The set of positive integers is denoted by $\mathbb Z^+$. The set of integers from 1 to $d$ is denoted as $[d] = \{1,2, \ldots, d\}$.
The Kronecker delta is denoted by $\delta_{xy}$. We will also use the following generalization of the Kronecker delta:
\begin{align}
    \delta_{x_1 x_2 \ldots x_n} = \begin{cases}
    1 & \text{if } x_1 = x_2 = \cdots = x_n
    \\
    0 & \mbox{otherwise.}
    \end{cases}
\end{align}

\subsection{Linear Superoperators and Quantum Channels}
\label{sec:linear_operators_and_quantum_channels}
We briefly review some properties of linear superoperators and quantum channels that will be used in this paper. For a more comprehensive introduction to quantum channels, see \cite{watrous2018theory,nielsen2010quantum}.

Let $\mathcal E:\mathcal L(\mathbb C^d)\rightarrow \mathcal L(\mathbb C^d)$ be a linear superoperator. We say that $\mathcal E$ is a \textit{quantum channel} if it is both completely positive and trace-preserving. We say that $\mathcal E$ is \textit{unital} if the identity operator is a fixed point of $\mathcal E$, i.e.~$\mathcal E(I) = I$. Every linear superoperator $\mathcal E$ admits a \textit{Kraus representation}:
\begin{align}
\label{eq:KrausRep}
    \mathcal E(A) = \sum_a J_a A K_a^\dag.
\end{align}
In the special case when $\mathcal E$ is also a quantum channel, $\mathcal E$ can be written as
\begin{align}
\label{eq:KrausRepChannel}
    \mathcal E(A) = \sum_a K_a A K_a^\dag,
\end{align}
where 
\begin{align}
    \sum_a K_a^\dag K_a = I.
\end{align}

The vector space of linear operators $\mathcal L(\mathbb C^d)$ is equipped with the Hilbert-Schmidt inner product $\langle A,B \rangle = \tr(A^\dag B)$, which is the default inner product on $\mathcal L(\mathbb C^d)$ that we will use in the rest of the paper. We denote the superoperator adjoint of a superoperator $A$ as $A^*$, and reserve the dagger $()^\dagger$ for the operator adjoint: $()^\dagger = \overline{()^T}$, where $()^T$ and $\overline{()}$ denote the operator transpose and complex conjugate with respect to the computational basis.

Next, we define the quantum channels we  consider in this work. 

\begin{definition}[completely dephasing channel]
$A \in \mathcal{L}(\mathcal C^d)$. $\diag:\mathcal L(\mathbb C^d)\rightarrow \mathcal L(\mathbb C^d)$ denotes the \textit{completely dephasing channel}:
\begin{align}
    \diag(A) = \sum_{i=1}^d \ket i \bra i A \ket i \bra i.
\end{align}
\end{definition}

\begin{definition}[depolarizing channel]\label{def:depolarizing-channel}
$A \in \mathcal{L}(\mathbb{C}^d)$. The qudit \textit{depolarizing channel} with depolarizing parameter $f\in \mathbb R$ is defined by
\begin{align}
    \mathcal D_{n,f}(A) =
    fA + (1-f)\tr(A)\frac{\mathbb I}{d}.
\end{align}
\end{definition}

In the definition above, we have allowed the depolarizing parameter to take any arbitrary value $f\in \mathbb R$. We note here however that
it is typical to restrict the depolarizing parameter to satisfy $f \in [0,1]$, especially when $\mathcal D_{n,f}$ is viewed as an error channel; for $f$ in this range, one could view $\mathcal D_{n,f}$ as a quantum channel that leaves density operators $\rho$ unchanged with probability $f$ and replaces $\rho$ with the maximally mixed state $\mathbb I/d$ with probability $1-f$. It is interesting to note, though, that it is not necessary for $f\in [0,1]$ in order for $\mathcal D_{n,f}$ to be a quantum channel (i.e.~a completely positive and trace preserving map). While $\mathcal D_{n,f}$ is trace-preserving for all $f\in \mathbb R$, it is easy to show that $\mathcal D_{n,f}$ is completely positive if and only if\footnote{This follows directly from the fact that the eigenvalues of the Choi matrix $J(D_{n,f})$ corresponding to the depolarizing channel \cite{watrous2018theory} are $\frac{1+f(d^2-1)}{d}$ and $\frac{1-f}{d}$.
}
\begin{align}
    -\frac{1}{d^2-1}\leq f \leq 1.
    \label{eq:when_is_depolarizing_channel_a_channel}
\end{align}
It then follows that $D_{n,f}$ is a quantum channel if and only if \cref{eq:when_is_depolarizing_channel_a_channel} is satisfied. This property will be used later in the discussion of \cref{claim:bounds_on_fE}.

\begin{definition}[amplitude damping channel]\label{def:amplitude-damping-channel}
The $n$-qubit \textit{amplitude damping channel} with parameter $p \in [0,1]$ is defined by
\begin{align}
    \mathrm{AD}_{n,p} = \mathrm{AD}_{1, p}^{\otimes n}
\end{align}
where 
\begin{align}
\mathrm{AD}_{1,p}: \begin{pmatrix}
\rho_{00} & \rho_{01} \\
\rho_{10} & \rho_{11} 
\end{pmatrix} 
\mapsto 
\begin{pmatrix}
\rho_{00} + (1-p)\rho_{11} & \sqrt{p} \rho_{01} \\
\sqrt{p} \rho_{10} & p \rho_{11}
\end{pmatrix}
\end{align}
is the amplitude damping channel on a single qubit, defined by the Kraus operators 
\begin{align}
K_{\mathrm{AD}0} = \begin{pmatrix}
1 & 0 \\
0 & \sqrt{p} 
\end{pmatrix}, \qquad 
K_{\mathrm{AD}1} = \begin{pmatrix}
0 & \sqrt{1 - p} \\
0 & 0
\end{pmatrix}.
\end{align}
\end{definition}

The trace of a linear superoperator $\mathcal E:\mathcal L(\mathbb C^d) \rightarrow \mathcal L(\mathbb C^d)$ is
\begin{align}
    \Tr(\mathcal E) = \sum_{ij} \langle E_{ij}, \mathcal E(E_{ij})\rangle,
\end{align}
where $E_{ij} = \ketbra ij$ and $\langle \cdot, \cdot \rangle$ is the Hilbert-Schmidt inner product.
Explicitly,
\begin{align}
    \tr(\mathcal E) &= \sum_{ij} \tr(E_{ij}^\dag \mathcal E(E_{ij})) \nn
    &= \sum_{ij} \bra i \mathcal E(\ketbra ij)\ket j.
\end{align}
It is straightforward to check that 
$\Tr(\mathcal E \circ \diag) = \sum_i \bra i \mathcal E(\ketbra ii) \ket i$, which is a quantity that appears often throughout this work.

\subsection{\texorpdfstring{$t$}{t}-Fold Twirls and \texorpdfstring{$t$}{t}-Designs}
\label{sec:twirlsDesigns}

$t$-designs are an important concept in quantum information processing with wide-ranging applications ranging from tensor networks \cite{nezami2020multipartitle} and quantum speedup \cite{brandao2013exponential,bouland2018complexity,mezher2019efficient}, to decoupling \cite{szehr2013decoupling} and quantum state encryption \cite{ambainis2009nonmalleable}. In this subsection, we shall review the definitions and some properties of $t$-fold twirls and $t$-designs that we will use in this paper. Throughout this subsection, we fix $d \in \mathbb Z_{\geq 2}$ to be an integer greater than or equal to 2. 

\begin{definition}[Twirl]
Let $\mathcal U \subseteq \mathbb{U}_d$ be a set of unitaries and let $t \in \mathbb{Z}^{+}$. The $t$-fold \textit{twirl} by $\mathcal U$ is the map $\Psi_{\mathcal U, t}: \mathcal{L}(\mathbb{C}^{d^t}) \rightarrow \mathcal{L}(\mathbb{C}^{d^t})$
defined by
\begin{align}
\Psi_{\mathcal U, t}(A)=\underset{U \sim \mathcal U}{\mathbb{E}} U^{\otimes t} A\left(U^\dag\right)^{\otimes t}.
\end{align}
\end{definition}

We denote the $t$-fold twirl by the Haar-random unitaries by $T_t^{(d)}: \mathcal L(\mathbb C^{d^t})
\rightarrow \mathcal L(\mathbb C^{d^t})$, i.e.~for all $A \in \mathcal L(\mathbb C^{d^t})$,
\begin{align}
\label{eq:t-foldTwirlHaar}
    T_t^{(d)}(A) = \Psi_{U(\mathbb C^d),t}(A) = \int \dd \eta(U) \ U^{\otimes t} A\left(U^\dag\right)^{\otimes t}.
\end{align}

When $t=2$, the Haar integral in \cref{eq:t-foldTwirlHaar} may be evaluated as
\begin{align} \label{eq:HaarIntegralt2}
    T_2^{(d)} (A) &= \frac{1}{d^2-1} \left[
    \tr(A)\left(I-\frac{W}{d}\right) +
    \tr(WA) \left(W-\frac{I}{d}\right)
    \right].
\end{align}
where 
\begin{align}
\label{eq:swapW}
    W = \sum_{i,j=1}^d\ket{ij}\bra{ji}
\end{align}
is the swap operator on $\mathbb C^d \otimes \mathbb C^d$. For a derivation of \cref{eq:HaarIntegralt2}, see \cite[Eq.~(7.179)]{watrous2018theory}.

The unitary ensembles considered in \cref{section:global_clifford} and \cref{sec:productClifford} of this paper are $t$-designs, collections of unitaries that reproduce the $t$-fold twirl by the Haar-random unitaries. 

\begin{definition}[$t$-design]\label{def:t-design}
Let $\mathcal U \subseteq \mathbb{U}_d$ be a finite set of unitaries, and let $t \in \mathbb Z^+$. We say that $\mathcal U$ is a $t$-\textit{design} if $\Psi_{\mathcal U,t} = T_t^{(d)}$, i.e.
\begin{align}
    \underset{U \sim \mathcal U}{\mathbb{E}} U^{\otimes t} A\left(U^\dag\right)^{\otimes t} = \int \dd \eta(U) \ U^{\otimes t} A\left(U^\dag\right)^{\otimes t}.
\end{align}
\end{definition}
Note that if $\mathcal U$ is a $t$-design, then it is also an $s$-design for all $s\leq t \in \mathbb Z^+$. An important example of a 3-design that fails to be a 4-design is the $n$-qubit Clifford group, denoted by $\mathcal C_n$ \cite{zhu2017multiqubit,webb2016clifford, zhu2016clifford}.

We now state a useful identity that we will use later in the paper.
\begin{lemma}
\label{lem:identity3designsum}
Let $\Udag$ be a qudit 3-design and $\mathcal E:\mathcal L(\mathbb C^d) \rightarrow \mathcal L(\mathbb C^d)$ be a linear superoperator. Let $A, B, C \in \mathcal L(\mathbb C^d)$ be linear operators. Then
\begin{align}
    & \E\limits_{U\sim \mathcal U} \sum_{b\in[d]} 
    \bra b \mathcal E(UAU^\dag) \ket b \bra b UBU^\dag \ket b \bra b UCU^\dag \ket b \nn
    &\quad = \frac{(1+d)\alpha-2\beta}{(d-1)d(d+1)(d+2)} 
    \left(
    \tr(A)\tr(BC) + \tr(A)\tr(B) \tr(C)
    \right) \nn
    &\qquad\quad+
    \frac{d\beta-\alpha}{(d-1)d(d+1)(d+2)}
    \left(\tr(AB) \tr(C)
    +
    \tr(AC) \tr(B)
    +
    \tr(ABC)
    +
    \tr(ACB)
    \right),
    \label{eq:twist_lemma}
\end{align}
where $\alpha = \tr(\mathcal E(I))$ and $\beta = \Tr(\mathcal E \circ \diag)$.
\end{lemma}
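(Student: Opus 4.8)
The plan is to reduce the expression on the left-hand side to an integral over the Haar measure by using the 3-design property, and then evaluate that Haar integral by expanding in a convenient operator basis for $\mathcal L(\mathbb C^{d^3})$. First I would rewrite the sum over $b \in [d]$ of the product of the three diagonal matrix elements as a single trace against $\sum_b \ketbra{b}{b}^{\otimes 3} = \diag^{\otimes 3}$-type structure: concretely,
\begin{align}
\sum_{b\in[d]} \bra b X_1 \ket b \bra b X_2 \ket b \bra b X_3 \ket b = \tr\!\bigl( (X_1 \otimes X_2 \otimes X_3)\, Q \bigr), \qquad Q \deff \sum_{b\in[d]} \ketbra{b}{b}^{\otimes 3},
\end{align}
with $X_1 = \mathcal E(UAU^\dag)$, $X_2 = UBU^\dag$, $X_3 = UCU^\dag$. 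The subtlety is that $X_1$ carries the superoperator $\mathcal E$, so I would pull $\mathcal E$ out via its adjoint: $\tr(\mathcal E(UAU^\dag)\,Y) = \tr(UAU^\dag\, \mathcal E^*(Y))$. Tracking the tensor factors, the whole left-hand side becomes $\E_{U\sim\mathcal U} \tr\!\bigl( U^{\otimes 3}(A\otimes B\otimes C)(U^\dag)^{\otimes 3}\, R \bigr)$ where $R = (\mathcal E^* \otimes \mathrm{id} \otimes \mathrm{id})(Q)$ acts on $\mathbb C^{d^3}$. (One must be a little careful about whether $\mathcal E$ or $\mathcal E^*$ appears and on which leg, and about transposes coming from moving operators in and out of traces; this bookkeeping is the first place errors creep in.)

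Next, since $\mathcal U$ is a 3-design, I would replace $\E_{U\sim\mathcal U} U^{\otimes 3}(\cdot)(U^\dag)^{\otimes 3}$ by the Haar twirl $T_3^{(d)}$. By Schur--Weyl duality, the commutant of $U^{\otimes 3}$ is spanned by the permutation operators $\{P_\pi : \pi \in S_3\}$, so $T_3^{(d)}(A\otimes B\otimes C) = \sum_{\pi\in S_3} c_\pi(A,B,C)\, P_\pi$ for coefficients $c_\pi$ that are linear combinations of $\{\tr(A)\tr(B)\tr(C), \tr(AB)\tr(C), \ldots, \tr(ABC), \tr(ACB)\}$ — the six invariants appearing on the right-hand side of \cref{eq:twist_lemma}. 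The $c_\pi$ are obtained by inverting the Gram matrix of the $P_\pi$ under the pairing $(P_\sigma, P_\tau) \mapsto \tr(P_\sigma^\dag P_\tau) = d^{\#\mathrm{cycles}(\sigma^{-1}\tau)}$; this $6\times 6$ inversion is standard and gives denominators $d(d+1)(d+2)$ and $(d-1)d(d+1)(d+2)$, matching the statement. Plugging this in, the left-hand side becomes $\sum_{\pi\in S_3} c_\pi(A,B,C)\, \tr(P_\pi R)$.

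It then remains to compute the six scalars $\tr(P_\pi R)$ and check they collapse into just the two constants $\alpha = \tr(\mathcal E(I))$ and $\beta = \tr(\mathcal E\circ\diag)$. I would compute $\tr(P_\pi Q)$ first: since $Q$ is supported on the "diagonal" subspace, $\tr(P_\pi Q) = \sum_{b} \langle b^{\otimes 3} | P_\pi | b^{\otimes 3}\rangle = d$ for every $\pi$ (each $P_\pi$ fixes $\ket{b}^{\otimes 3}$). The $\mathcal E^*$ on one leg is what breaks this symmetry: $\tr(P_\pi R)$ will evaluate to either $\sum_i \langle i | \mathcal E(\ketbra ii)|i\rangle = \beta$ or $\sum_{i,j}\langle i | \mathcal E(\ketbra jj)|i\rangle = \tr(\mathcal E(\diag(\cdot)))$-type sums, or to $\tr(\mathcal E(I)) = \alpha$, depending on how $\pi$ connects the $\mathcal E$-leg to the other two legs — roughly, whether $\pi$ leaves the first leg in its own cycle or merges it. Grouping the six permutations by this dichotomy and collecting terms should reproduce exactly the coefficients $\frac{(1+d)\alpha - 2\beta}{(d-1)d(d+1)(d+2)}$ and $\frac{d\beta-\alpha}{(d-1)d(d+1)(d+2)}$ in \cref{eq:twist_lemma}.

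The main obstacle I anticipate is not any single hard idea but the combined bookkeeping: getting the adjoint $\mathcal E^*$ and any transposes placed correctly on the right tensor leg, correctly identifying which of the six $\tr(P_\pi R)$ equal $\alpha$ versus $\beta$ (equivalently, which permutations isolate the $\mathcal E$-decorated leg), and then matching the algebra against the precise denominators. A cleaner alternative that avoids the Schur--Weyl inversion is to invoke \cref{eq:HaarIntegralt2} iteratively or to quote a known closed form for $T_3^{(d)}$ directly; but either way the final consolidation into $\alpha$ and $\beta$ is the step most worth doing carefully, and sanity checks (e.g. setting $\mathcal E = \mathrm{id}$, so $\alpha = d$, $\beta = d$, and $A=B=C=I$) are the fastest way to catch sign or normalization slips.
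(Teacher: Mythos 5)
Your plan is correct and rests on the same two pillars as the paper's proof---using the 3-design property to replace the ensemble average by the Haar 3-fold twirl $T_3^{(d)}$, and observing that all dependence on $\mathcal E$ collapses to the two scalars $\alpha=\tr(\mathcal E(I))$ and $\beta=\Tr(\mathcal E\circ\diag)$---but it packages the computation in the dual order. The paper writes $\bra b\mathcal E(UAU^\dag)\ket b=\tr\{UAU^\dag\,\mathcal E^\ddagger(\ketbra{b}{b})\}$ with $\mathcal E^\ddagger(X)=(\mathcal E^*(X^\dag))^\dag$, twirls the $b$-dependent object $\mathcal E^\ddagger(\ketbra{b}{b})\otimes\ketbra{bb}{bb}$ using a quoted closed form for $T_3^{(d)}$ in the Eggeling--Werner basis $R_\pm,R_0,\dots,R_3$ (\cref{eq:HaarIntegralt3}), partial-traces against $I\otimes B\otimes C$, sums over $b$, and only traces against $A$ at the very end; you instead twirl $A\otimes B\otimes C$ and contract against $R=\sum_b\mathcal E^\ddagger(\ketbra{b}{b})\otimes\ketbra{b}{b}\otimes\ketbra{b}{b}$, performing the Schur--Weyl/Gram-matrix (Weingarten) inversion yourself. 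These are equivalent because the Haar twirl is self-adjoint, and your sorting of the six permutations is right: the two that fix the $\mathcal E$-leg ($e$ and $(23)$) give $\alpha$, the other four give $\beta$. Two cautions worth recording if you carry this out: (i) the raw Weingarten coefficients for $S_3$ have denominator $d(d^2-1)(d^2-4)$, not the $(d-1)d(d+1)(d+2)$ of the final statement---the $(d-2)$ factors cancel only after contraction with $R$, which is symmetric in legs 2 and 3 and therefore kills the antisymmetric contribution, exactly as $\tr(R_-\,\cdot)=0$ does in the paper's route, where the quoted $T_3^{(d)}$ formula makes this vanishing manifest from the start; (ii) the dagger bookkeeping you flag is genuine for a superoperator that need not be Hermiticity-preserving, which is why the paper works with $\mathcal E^\ddagger$ rather than $\mathcal E^*$ and verifies separately that $\tr(\mathcal E^\ddagger(I))=\alpha$ and $\sum_b\bra b\mathcal E^\ddagger(\ketbra{b}{b})\ket b=\beta$; your sketch leaves this step open, so it would need the same small lemma. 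The trade-off is that your version is self-contained (no external twirl formula) at the cost of a $6\times 6$ inversion and the cancellation tracking, while the paper's leans on a known formula that already does that work.
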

We present a proof of \cref{lem:identity3designsum} in \cref{app:identity3designsum}.

\section{Noisy Classical Shadows}
\label{sec:incorporatingNoise}

\subsection{Generating the Classical Shadow}

We study the setting where an error channel $\channel$ acts on the input state right after some $U$ from the unitary ensemble $\ensemble$ is applied.  We assume access to a noisy measurement primitive, slightly altered from \cite{huang2020predicting}. 

\begin{definition}[noisy measurement primitive]
\label{def:noisy_measurement_primitive}
We can apply a restricted set of unitary transformations $\rho \mapsto U \rho U^\dagger$, where $U$ is chosen uniformly at random from a unitary ensemble $\ensemble$. Subsequently, an error channel $\channel$ acts on the state $U \rho U^\dagger \mapsto \channel(U \rho U^\dagger)$. 
Finally, we can measure the state in the computational basis $\{\ket{b} : b \in \{0, 1\}^n\}$. 
\end{definition}

The randomized measurement procedure remains the same as \cite{huang2020predicting} with the caveat that the transformed state is subject to an error channel. The randomized measurement procedure is as follows.
Given copies of an input state $\rho$, perform the following on each copy: transform $\rho \mapsto \channel(U \rho U^\dagger)$, measure in the computational basis, and apply $U^\dagger$ to the post-measurement state. The output of this procedure is

\begin{align}
\label{eq:output_procedure}
U^\dagger \lvert \hat{b} \rangle\!\langle \hat{b} \rvert U \quad \text{with probability} \quad P_b(\hat{b}) \overset{\text{def}}{=} \langle \hat{b}\rvert \channel(U \rho U^\dagger) \lvert \hat{b} \rangle \quad \text{where} \quad \hat{b} \in \{0,1\}^n.
\end{align}

In expectation, this procedure describes a quantum channel.

\begin{definition}[shadow channel]
    Let $\shadowchannel{\ensemble}{\channel}: \linearoperator(\Complex^{2^n}) \rightarrow \linearoperator(\Complex^{2^n})$ be defined by 
    
    \begin{align}\label{eq:shadow_channel}
    \shadowchannel{\ensemble}{\channel}(\rho) \overset{\mathrm{def}}{=} \E_{U \sim \ensemble} \sum_{b \in \{0,1\}^n} \bra{b} \channel(U \rho U^\dagger) \ket{b} U^\dagger \ket{b}\!\bra{b} U = 
    \E_{\substack{U \sim \ensemble \\ \hat{b} \sim P_b }} U^\dagger |\hat b\rangle\!\langle\hat b| U.
    \end{align}
    We call $\shadowchannel{\ensemble}{\channel}$ the shadow channel with noise $\channel$.
\end{definition}

One can view $\shadowchannel{\ensemble}{\channel}$ as the expected output of the random measurement procedure described above. Note that when we take the channel $\channel$ to be the identity channel, we recover the \textit{noiseless shadow channel} $\mathcal{M}$ given by \cref{eq:noiseless_shadow_channel}. 

\begin{claim}
For all unitary ensembles $\ensemble$ and quantum channels $\channel$, $\mathcal M_{\ensemble, \channel}$ is a quantum channel. 
\end{claim}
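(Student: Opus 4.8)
The plan is to verify that $\mathcal M_{\ensemble,\channel}$ is both completely positive and trace-preserving directly from the expression in \cref{eq:shadow_channel}. First I would observe that $\mathcal M_{\ensemble,\channel}$ is a convex combination (an average over $U\sim\ensemble$) of the maps $\rho \mapsto \sum_{b} \bra b \channel(U\rho U^\dagger)\ket b\, U^\dagger\ketbra bb U$, so it suffices to check complete positivity and trace preservation for each fixed $U$, since both properties are preserved under convex combinations. For fixed $U$, this map factors as a composition: conjugation by $U$ (which is unitary, hence CPTP), followed by $\channel$ (a quantum channel by hypothesis, hence CPTP), followed by the map $\sigma \mapsto \sum_b \bra b \sigma \ket b\, U^\dagger \ketbra bb U$. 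A composition of CPTP maps is CPTP, so the only thing left is to argue that this last map is a quantum channel.

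For the last map, I would exhibit an explicit Kraus representation. Writing $K_b = U^\dagger \ket b \! \bra b$, one has $\sum_b K_b \sigma K_b^\dagger = \sum_b U^\dagger \ket b \langle b | \sigma | b \rangle \bra b U = \sum_b \bra b \sigma \ket b\, U^\dagger \ketbra bb U$, which matches, and $\sum_b K_b^\dagger K_b = \sum_b \ket b \bra b U U^\dagger \ket b \bra b = \sum_b \ketbra bb = I$. Hence this map is completely positive (it has a Kraus form as in \cref{eq:KrausRepChannel}) and trace-preserving. Composing the three CPTP maps gives that $\rho \mapsto \sum_b \bra b \channel(U\rho U^\dagger)\ket b\, U^\dagger\ketbra bb U$ is CPTP for each $U$, and averaging over $U\sim\ensemble$ preserves this, so $\mathcal M_{\ensemble,\channel}$ is a quantum channel.

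Alternatively — and perhaps more cleanly for the write-up — I would combine Kraus representations directly. If $\channel(A) = \sum_a K_a A K_a^\dagger$ with $\sum_a K_a^\dagger K_a = I$, then for each fixed $U$ the map has Kraus operators $L_{a,b,U} = \ketbra bb U^\dagger \cdot$ wait, more carefully: one checks that $\sum_b \bra b \channel(U\rho U^\dagger) \ket b\, U^\dagger \ketbra bb U = \sum_{a,b} (U^\dagger \ket b \bra b K_a U)\, \rho\, (U^\dagger K_a^\dagger \ket b \bra b U)$, so the Kraus operators for the full channel, incorporating the average as a mixture, can be taken to be $\sqrt{p_U}\, U^\dagger \ket b \bra b K_a U$ for each $(U, a, b)$ when $\ensemble$ is a discrete ensemble with probabilities $p_U$ (with an integral in the continuous case), and the completeness relation $\sum_{U,a,b} p_U U^\dagger K_a^\dagger \ket b \bra b \ket b \bra b K_a U = \sum_U p_U U^\dagger (\sum_a K_a^\dagger K_a) U = \sum_U p_U I = I$ holds. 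There is no real obstacle here; the only mild subtlety is handling the continuous-ensemble case cleanly, which I would dispatch by invoking the convexity/closure argument (CPTP maps form a convex set closed under the relevant averaging) rather than writing integral Kraus decompositions, or simply by noting that in this paper all ensembles of interest are finite.
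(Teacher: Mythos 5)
Your proof is correct, and it is close in spirit to the paper's: both arguments exhibit $\shadowchannel{\ensemble}{\channel}$ as built from elementary CPTP pieces (unitary conjugation, the error channel $\channel$, and a measure-in-the-computational-basis-and-rotate-back map). The organization differs in a way that actually matters. The paper writes $\mathcal M_{\ensemble,\channel} = \mathcal D \circ \mathcal C \circ \mathcal B \circ \mathcal A$ where $\mathcal A$ is the \emph{mixed} unitary channel $X \mapsto \E_{U\sim\ensemble} U X U^\dagger$ and $\mathcal D(X) = U^\dagger X U$; as written this is loose, because the $U$ appearing in $\mathcal D$ must be the same random $U$ sampled inside $\mathcal A$, so the map is not literally a composition of four fixed channels. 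Your version repairs this by conditioning on $U$ first: for each fixed $U$ the map $\rho \mapsto \sum_b \bra b \channel(U\rho U^\dagger)\ket b\, U^\dagger \ketbra bb U$ is a composition of three fixed CPTP maps (your Kraus check $K_b = U^\dagger\ketbra bb$, $\sum_b K_b^\dagger K_b = I$ is right), and then the average over $U\sim\ensemble$ is a convex combination (or, in the continuous case, a closure/convexity argument) of CPTP maps, hence CPTP. Your alternative explicit Kraus family $\sqrt{p_U}\,U^\dagger\ketbra bb K_a U$ with completeness relation summing to $I$ is also verified correctly and gives the same conclusion in one step. In short: same decomposition idea as the paper, but your condition-then-average bookkeeping is the more rigorous rendering of it, at the cost of slightly more writing.
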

\begin{proof}
$\mathcal M_{\ensemble, \channel} = \mathcal D \circ \mathcal C \circ \mathcal B \circ \mathcal A$ is a composition of quantum channels, where
$\mathcal A(X) = \E_{U \sim \ensemble} U X U^\dagger$ (mixed unitary channel), $\mathcal B(X) = \channel(X)$ (error channel), 
$\mathcal{C}(X) = \sum_b \ket{b}\!\bra{b}X \ket{b}\!\bra{b}$ (quantum-to-classical channel), and 
$\mathcal{D}(X) = U^\dagger X U$ (unitary channel).
\end{proof}

In the noiseless case, it is known that if the unitary ensemble is tomographically complete, then the shadow channel is invertible \cite{huang2020predicting}. 
Similarly, we prove sufficient conditions for the shadow channel to be invertible in the noisy case (see \cref{claim:invertibility_global_clifford}).
Assuming the shadow channel is invertible, we can define the classical shadow. 

\begin{definition}[classical shadow]
Assuming the shadow channel is invertible with inverse $\inverseshadowchannel{\ensemble}{\channel}$, define the \textit{classical shadow} $\hat{\rho}$ as 
\begin{align}\label{eq:classical_shadow}
\hat{\rho} = \hat{\rho}(\ensemble, \channel, \hat{U}, \hat{b}) \overset{\mathrm{def}}{=} 
\inverseshadowchannel{\ensemble}{\channel}(\hat{U}^\dagger \lvert \hat{b}\rangle\!\langle \hat{b} \rvert \hat{U}).
\end{align}
\end{definition}

The classical shadow is a random matrix with unit trace\footnote{
    $\mathrm{tr}(\hat{\rho}) 
    = \mathrm{tr}( \inverseshadowchannel{\ensemble}{\channel}(\hat{U}^\dagger \lvert \hat{b}\rangle\!\langle \hat{b} \rvert \hat{U}))
    = \mathrm{tr}( \hat{U}^\dagger \lvert \hat{b}\rangle\!\langle \hat{b} \rvert \hat{U})
    = 1.$ We use the fact that the inverse shadow channel is trace preserving because the shadow channel is trace preserving. 
\label{footnote:classical-shadow-trace-one}}
and reproduces $\rho$ in expectation: $\E[\hat{\rho}] = \rho$.\footnote{
    $\E [\hat{\rho}] 
    = \E[  \inverseshadowchannel{\ensemble}{\channel}(\hat{U}^\dagger \lvert \hat{b}\rangle\!\langle \hat{b} \rvert \hat{U})]
    = \inverseshadowchannel{\ensemble}{\channel}(\E [\hat{U}^\dagger \lvert \hat{b}\rangle\!\langle \hat{b} \rvert \hat{U}])  
    = \inverseshadowchannel{\ensemble}{\channel}(\shadowchannel{\ensemble}{\channel} (\rho)) = \rho. 
    $
} Repeating this process $N$ times produces a classical shadow with size $N$.

\begin{definition}[size-$N$ classical shadow]
The size-$N$ \textit{classical shadow} corresponding to pairs $(U_1, \hat{b}_1),  \ldots, (U_N, \hat{b}_N )$ is 

\begin{align}
\mathsf{S}(\rho; N) = \{\hat{\rho}_1, \ldots, \hat{\rho}_N \} \quad \text{where} \quad \hat{\rho}_i = \inverseshadowchannel{\ensemble}{\channel}(U_i^\dagger \lvert \hat{b}_i\rangle\!\langle\hat{b}_i\rvert U_i).
\end{align}
\end{definition}

\subsection{Noisy Classical Shadows Protocol}

The classical shadow is a classical dataset that can be used to predict many linear functions in the unknown state $\rho$. Recall that a linear function in a quantum state $\rho$ is a function of the form $\rho \mapsto \tr(O \rho)$ for some linear operator $O$.
It is easy to confirm that the random variable $\tr(O \hat{\rho})$ reproduces $\tr(O\rho)$ in expectation: $\E [\tr(O \hat{\rho})] = \tr(O\E[\hat{\rho}]) = \tr(O\rho)$. 
Therefore, we can use the classical shadow to produce unbiased estimates of $\tr(O_1 \rho), \ldots, \tr(O_M \rho)$ for any observables $O_1, \ldots, O_M$. 
We continue to use median-of-means estimation, as was done in the noiseless protocol: 

\begin{algorithm}[H]
\caption{Median-of-means estimation based on a classical shadow.}
\textbf{Input:} a list of observables $O_1, \ldots, O_M$, size-$L$ classical shadow $\mathsf{S}(\rho; L)$, $K \in \mathbb{Z}^+$.
\label{algo:median_of_means}
\begin{algorithmic}[1]
\State Set $\hat{\rho}_{(k)} = \frac{1}{\floor{L/K}} \sum_{i = (k - 1)\floor{L/K} + 1}^{k \floor{L/K}} \hat{\rho}_i, \,\, \text{for $k = 1, \ldots, K$}$
\State Output $\hat{o}_i(\floor{L/K}, K) = \median\big\{ \tr(O_i \hat{\rho}_{(1)}), \ldots, \tr(O_i \hat{\rho}_{(K)}) \big\}, \,\, \text{for $i = 1, \ldots, M$}$
\end{algorithmic}
\end{algorithm}

Using the concentration properties of median-of-means estimators (see \cref{fact:median-of-means}), we  understand how estimates $\tr(O\hat{\rho})$ concentrate around the true value as a function of $\Var[\tr(O \hat{\rho})]$. 
However, we are interested in bounds that are independent of the input state $\rho$, which motivates the following lemma.

\begin{lemma}\label{lemma:upper-bound-on-variance}
Let $\ensemble$ be a set of $n$-qubit unitary transformations and let $\channel$ be an $n$-qubit quantum channel. Assume that the shadow channel $\shadowchannel{\ensemble}{\channel}$ (\cref{eq:shadow_channel}) is invertible. Let $O \in \mathbb H_{2^n}$ and $\rho \in \mathbb{D}^{2^n}$ be an unknown $n$-qubit state. Let $\hat{o} = \tr(O \hat{\rho})$, where $\hat{\rho}$ is the classical shadow (\cref{eq:classical_shadow}). 
Then, 

\begin{align}
\underset{\substack{U \sim \ensemble \\ b \sim P_{b}}}{\Var}[\hat{o}] 
\leq \norm{O - \tr(O)\frac{\mathbb{I}}{2^n} }_{\shadow, \ensemble,\channel}^2 ,
\end{align}
where 
\begin{align}
\label{eq:noisy_shadow_norm}
\norm{O}_{\shadow, \ensemble, \channel} = \max_{\sigma \in \mathbb{D}_{2^n}} 
\sqrt{\underset{U \sim \ensemble}{\E} \sum_{b \in \{0,1\}^n} \langle b \rvert \channel(U \sigma U^\dagger) \lvert b \rangle\!\langle b \rvert U \mathcal{M}^{-1,\dagger}_{\ensemble, \channel}(O) U^\dagger \lvert b \rangle^2 }.
\end{align}
\end{lemma}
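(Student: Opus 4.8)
The plan is to bound the variance by the second moment and then massage the resulting expression into the claimed shadow-seminorm form. First I would write $\Var[\hat o] \le \E[\hat o^2]$, where the expectation is over $U \sim \ensemble$ and $b \sim P_b$. Expanding $\hat o = \tr(O\hat\rho) = \tr(O\, \inverseshadowchannel{\ensemble}{\channel}(U^\dagger \ket{b}\!\bra{b} U))$ and using the definition of the superoperator adjoint (with respect to the Hilbert--Schmidt inner product), I can move $\inverseshadowchannel{\ensemble}{\channel}$ off the measurement outcome and onto the observable: $\hat o = \langle \inverseshadowchannel{\ensemble}{\channel,\dagger}(O),\, U^\dagger \ket{b}\!\bra{b} U\rangle = \bra{b} U\, \inverseshadowchannel{\ensemble}{\channel,\dagger}(O)\, U^\dagger \ket{b}$, using that $O$ is Hermitian so the conjugate transpose is harmless. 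Squaring and averaging, and recalling that the probability weight on outcome $b$ is exactly $P_b(\hat b) = \bra{b}\channel(U\rho U^\dagger)\ket{b}$, gives
\[
\E[\hat o^2] = \E_{U\sim\ensemble} \sum_{b\in\{0,1\}^n} \bra{b}\channel(U\rho U^\dagger)\ket{b}\; \bra{b} U\, \inverseshadowchannel{\ensemble}{\channel,\dagger}(O)\, U^\dagger \ket{b}^2 .
\]
This is precisely the quantity inside the square root in \cref{eq:noisy_shadow_norm} evaluated at $\sigma = \rho$, hence at most $\norm{O}_{\shadow,\ensemble,\channel}^2$ after maximizing over $\sigma \in \mathbb D_{2^n}$.

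The remaining issue is that the statement has $O - \tr(O)\tfrac{\mathbb I}{2^n}$ inside the seminorm, not $O$; the point is that the traceless part is what actually controls the variance, and this is a strictly better bound. To get it, I would exploit that $\hat\rho$ has unit trace (see the footnote following \cref{eq:classical_shadow}): for any constant $c$, replacing $O$ by $O - c\,\mathbb I$ shifts $\hat o = \tr(O\hat\rho)$ by the deterministic constant $-c$, so $\Var[\tr(O\hat\rho)] = \Var[\tr((O - c\,\mathbb I)\hat\rho)]$. Choosing $c = \tr(O)/2^n$ and applying the second-moment bound above to the shifted observable $O' = O - \tr(O)\tfrac{\mathbb I}{2^n}$ yields $\Var[\hat o] = \Var[\tr(O'\hat\rho)] \le \E[\tr(O'\hat\rho)^2] \le \norm{O'}_{\shadow,\ensemble,\channel}^2$, which is the claim.

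The main obstacle, modest as it is, is bookkeeping around the superoperator adjoint: one must be careful that $\inverseshadowchannel{\ensemble}{\channel,\dagger}$ is the adjoint with respect to the Hilbert--Schmidt inner product (so that $\langle \mathcal M^{-1}(X), Y\rangle = \langle X, \mathcal M^{-1,\dagger}(Y)\rangle$), and that the rank-one operator $U^\dagger\ket{b}\!\bra{b}U$ pairs with $\mathcal M^{-1,\dagger}(O)$ to give the real number $\bra{b}U\,\mathcal M^{-1,\dagger}(O)\,U^\dagger\ket{b}$ — here Hermiticity of $O$ (hence of $\mathcal M^{-1,\dagger}(O)$, since the shadow channel and its inverse preserve Hermiticity) guarantees this is real and its square is what appears. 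Everything else — dropping $\E[\hat o^2] - \E[\hat o]^2 \le \E[\hat o^2]$, substituting $P_b$, and maximizing over $\sigma$ — is routine. I would also note in passing that $\norm{\cdot}_{\shadow,\ensemble,\channel}$ as defined is manifestly nonnegative and homogeneous and satisfies the triangle inequality (it is a maximum of $L^2$-type norms of linear images of the argument), so it is a seminorm, consistent with the discussion in \cref{sec:our_contributions}.
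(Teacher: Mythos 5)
Your proposal is correct and follows essentially the same route as the paper's proof: reduce to the traceless part $O_o = O - \tr(O)\tfrac{\mathbb I}{2^n}$ using $\tr(\hat\rho)=\tr(\rho)=1$, bound the variance by the second moment, move $\inverseshadowchannel{\ensemble}{\channel}$ onto the observable via the Hilbert--Schmidt adjoint so that $\tr(O_o\hat\rho)=\bra{b}U\,\mathcal{M}^{-1,\dagger}_{\ensemble,\channel}(O_o)\,U^\dagger\ket{b}$, substitute the outcome probabilities $P_b$, and maximize over $\sigma\in\mathbb D_{2^n}$. The only cosmetic difference is that the paper centers $\hat o$ at its mean and explicitly drops the $-\tr(O_o\rho)^2$ term, which is the same step as your $\Var\le\E[\hat o^2]$ bound for the shifted observable.
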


We call the function $\norm{\cdot}_{\shadow, \ensemble, \channel}$, which depends on only the unitary ensemble and the error channel, the \textit{shadow seminorm}. As we show in \cref{app:when-is-shadow-norm}, the shadow seminorm is indeed a \textit{seminorm}, i.e., it satisfies absolute homogeneity and the triangle inequality. However, unlike the noiseless case \cite{huang2020predicting}, the noisy shadow seminorm is not necessarily a norm: there exist noise channels $\mathcal E$ for which $\norm{\cdot}_{\shadow, \ensemble, \channel}$ fails to satisfy the point-separating property required of a norm. In \cref{app:when-is-shadow-norm}, we also explore the question about when the shadow seminorm is a norm. In particular, we prove that a sufficient condition for it to be a norm is that $\mathcal E$ satisfies the following property: for all $b\in \{0,1\}^n$, there exists a density operator $\sigma \in \mathbb D(\mathbb C^{2^n})$ such that $\bra b \mathcal E(\sigma)\ket b\neq 0$.

Again, the motivation for introducing the shadow seminorm is to get an upper bound on $\Var[\hat{o}]$ that does not depend on the unknown state $\rho$. The proof is a straightforward generalization of Lemma S1 in \cite{huang2020predicting}, which we defer to \cref{subsec:proof-of-lemma-upper-bound-on-variance}.

Thus far, we have shown that the noisy classical shadow is an unbiased estimator of linear functions in $\rho$ and have proved an upper bound on the variance of the estimator. 
This is enough to prove the following performance guarantee on the noisy classical shadows protocol.

\begin{theorem}\label{thm:general-theorem}
Fix an $n$-qubit unitary ensemble $\ensemble$, a collection of $n$-qubit observables $O_1, \ldots, O_M$, an $n$-qubit quantum channel $\channel$, 
and accuracy parameters $\eps, \delta \in [0,1]$.
Assume that $\shadowchannel{\ensemble}{\channel}$ (\cref{eq:shadow_channel}) is invertible.
Set 
\[
K = 2 \log \frac{2M}{\delta} 
\qquad 
\text{and}
\qquad 
N = \frac{34}{\eps^2} \underset{1\leq i \leq M}{\max} \norm{O_i - \frac{1}{2^n} \tr(O_i)\mathbb{I}}^2_{\shadow, \ensemble, \channel}.
\]
Then, a size-$(NK)$ classical shadow $\mathsf{S}(\rho; NK)$ is sufficient to estimate $\hat{o}_1, \ldots, \hat{o}_M$ with the following performance guarantee: 
\begin{align}
    \Pr\! \Big[ \lvert \hat{o}_i(N,K) - \tr(O_i\rho) \rvert \leq \eps \quad \forall i = 1, \ldots, M \Big] \geq 1 - \delta.
\end{align}
Hence, the sample complexity to estimate a collection of $M$ linear target functions $\tr(O_i \rho)$ within error $\eps$ and failure probability $\delta$ is 
\[
NK = O\bigg( \frac{\log(M/\delta)}{\eps^2} \max_{1 \leq i \leq M} \norm{O_i - \frac{1}{2^n} \tr(O_i)\mathbb{I}}^2_{\shadow, \ensemble, \channel}\bigg).
\]

\end{theorem}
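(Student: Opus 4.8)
Looking at this theorem, it is essentially a packaging result: it combines the unbiasedness of the classical shadow estimator, the variance bound from Lemma~\ref{lemma:upper-bound-on-variance}, and the median-of-means concentration inequality (Fact~\ref{fact:median-of-means}) into a single performance guarantee. Let me sketch the proof.
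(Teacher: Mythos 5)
You have correctly identified the ingredients — unbiasedness of $\tr(O_i\hat\rho)$, the variance bound $\Var[\tr(O_i\hat\rho)] \leq \norm{O_i - \tfrac{1}{2^n}\tr(O_i)\mathbb{I}}^2_{\shadow,\ensemble,\channel}$ from \cref{lemma:upper-bound-on-variance}, and the median-of-means concentration of \cref{fact:median-of-means} — and this is indeed the same route the paper takes. But the proposal stops at "let me sketch the proof" and never actually carries out the argument, so as written there is no proof to evaluate.

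What is missing is the short chain that turns those ingredients into the stated guarantee. First, you must observe that the chosen $N = \frac{34}{\eps^2}\max_i \norm{O_i - \tfrac{1}{2^n}\tr(O_i)\mathbb{I}}^2_{\shadow,\ensemble,\channel}$ meets the sample-size hypothesis $N = 34\sigma^2/\eps^2$ of \cref{fact:median-of-means} for every $i$, precisely because the variance bound of \cref{lemma:upper-bound-on-variance} upper-bounds each $\sigma^2$ by the maximized shadow seminorm; this gives, for each fixed $i$, $\Pr[\lvert \hat{o}_i(N,K) - \tr(O_i\rho)\rvert > \eps] \leq 2\me^{-K/2}$ for the outputs of \cref{algo:median_of_means}. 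Second, you need a union bound over the $M$ observables, so the probability that any estimate fails is at most $2M\me^{-K/2}$, and then the specific choice $K = 2\log\frac{2M}{\delta}$ makes $2M\me^{-K/2} = \delta$, yielding the simultaneous guarantee with probability at least $1-\delta$; the stated $NK = O\big(\frac{\log(M/\delta)}{\eps^2}\max_i\norm{O_i - \tfrac{1}{2^n}\tr(O_i)\mathbb{I}}^2_{\shadow,\ensemble,\channel}\big)$ then follows by multiplying the two expressions. None of these steps is difficult, but they are the entire content of the theorem, and a proposal that only names the tools without executing the union bound and the parameter bookkeeping is not yet a proof.
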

\begin{proof}
Run \cref{algo:median_of_means} with $O_1, \ldots, O_M$, $\mathsf{S}(\rho; NK)$, $N$, and $K$ to obtain estimates
\begin{align}
\hat{o}_i(N, K) = \median\big\{ \tr(O_i \hat{\rho}_{(1)}, \ldots, \tr(O_i \hat{\rho}_{(K)}) \big\},
\qquad \text{for $i = 1, \ldots, M$.}
\end{align}
Then, 
\begin{align}
    \Pr\! \Big[ \lvert \hat{o}_i(N,K) - \tr(O_i\rho) \rvert \leq \eps \,\, \forall i = 1, \ldots, M \Big] &= 1 - \Pr\Big[ \exists i = 1, \ldots, M : \lvert \hat{o}_i(N,K) - \tr(O_i\rho) \rvert \leq \eps  \Big] \nn
    &\geq 1 - \sum_{i=1}^M \Pr\Big[ \lvert \hat{o}_i(N,K) - \tr(O_i\rho) \rvert > \eps  \Big]  \nn
    &\geq 1 - 2\me^{-K/2} \sum_{i=1}^M 1  \nn
    &= 1 - \delta.
\end{align}
The inequality on the second line follows from the union bound and the inequality on the third line follows from \cref{fact:median-of-means}.
\end{proof}

The noisy classical shadows protocol is summarized next.

\begin{tcolorbox}[colback=white, title=Summary: Classical Shadows With Noise, title filled]
\textbf{Hyperparameters} \\ 
Let $\ensemble$ be a set of $n$-qubit unitary transformations. Let $\channel$ be an $n$-qubit quantum channel. \\

\textbf{Definitions} \\ 
Let 
\[\shadowchannel{\ensemble}{\channel}: \rho \mapsto \underset{U \sim \ensemble}{\E} \sum_{b\in\{0,1\}^n} \langle b \rvert \channel(U \rho U^\dagger)\lvert b \rangle U^\dagger \lvert b \rangle\!\langle b \rvert U,\] and assume that $\shadowchannel{\ensemble}{\channel}$ is invertible. 
Let 
\[
\norm{O}_{\mathrm{shadow}, \ensemble, \channel} = \max_{\sigma \in \mathbb{D}_{2^n}} 
\sqrt{\underset{U \sim \ensemble}{\E} \sum_{b \in \{0,1\}^n} \langle b \rvert \channel(U \sigma U^\dagger) \lvert b \rangle\!\langle b \rvert U \mathcal{M}^{-1,\dagger}_{\ensemble, \channel}(O) U^\dagger \lvert b \rangle^2 }. 
\]

\begin{algorithm}[H]
\caption{Classical Shadows With Noise}
\textbf{Input} \\ 
 $\rho \in \mathbb{C}^{2^n}$ (an unknown $n$-qubit state, given as multiple copies of a black box). $\eps, \delta \in (0,1)$ (accuracy parameters). $O_1, \ldots, O_M$ (a list of observables).
\vspace{0.1in}

\textbf{Output} \\ 
Estimators $\hat{o}_1, \ldots, \hat{o}_M$ such that 

\[
    \Pr\! \Big[ \lvert \hat{o}_i - \tr(O_i\rho) \rvert \leq \eps \quad \forall i = 1, \ldots, M \Big] \geq 1 - \delta.
\]

\label{algo:full_protocol}
\begin{algorithmic}[1]
\Statex \textbf{Initialization}
\State Set $K = 2 \log \frac{2M}{\delta}$
\State \label{step:full_protocol_2} Set $N = \frac{34}{\eps^2} \underset{1\leq i \leq M}{\max} \norm{O_i - \frac{1}{2^n} \tr(O_i)\mathbb{I}}^2_{\mathrm{shadow},\ensemble, \channel}$
\Statex
\Statex \textbf{Classical shadow generation}
\For {$i=1,\ldots,NK$}
\State Randomly choose $\hat{U}_i \in \ensemble$
\State \label{step:full_protocol_5} Apply $\rho \mapsto \channel(U \rho U^\dagger)$ to (a fresh copy of) $\rho$ to get $\rho_1$
\State Perform a computational basis measurement on $\rho_1$ to get outcome $\hat{b}_i \in \{0,1\}^n$ 
\State Save a classical description of $\hat{U}_i^\dagger \lvert \hat{b}_i\rangle\!\langle \hat{b}_i \rvert \hat{U}_i$ in classical memory
\State \label{step:full_protocol_8} Apply $\inverseshadowchannel{\ensemble}{\channel}$ to $\hat{U}_i^\dagger \lvert \hat{b}_i\rangle\!\langle \hat{b}_i \rvert \hat{U}_i $ to get $\hat{\rho}_i = \inverseshadowchannel{\ensemble}{\channel}(\hat{U}_i^\dagger \lvert \hat{b}_i\rangle\!\langle \hat{b}_i \rvert \hat{U}_i)$
\EndFor
\State Set $\mathsf{S}(\rho; NK) = \{\hat{\rho}_1, \ldots, \hat{\rho}_{NK}\}$
\Statex
\Statex \textbf{Median-of-means estimation}
\State Set $\hat{\rho}_{(k)} = \frac{1}{N} \sum_{i = (k - 1)N + 1}^{k N} \hat{\rho}_i, \,\, \text{for $k = 1, \ldots, K$}$
\State Output $\hat{o}_i \overset{\mathrm{def}}{=} \hat{o}_i(N, K) = \median\big\{ \tr(O_i \hat{\rho}_{(1)}), \ldots, \tr(O_i \hat{\rho}_{(K)}) \big\}, \,\, \text{for $i = 1, \ldots, M$}$
\end{algorithmic}
\end{algorithm}
\end{tcolorbox}

\subsection{Product Ensembles with Product Noise}\label{subsection:product_ensemble_noise}

We conclude this section by showing some nice factorization properties for classical shadows when the unitary ensemble is a product ensemble and the quantum channel is a product channel. 

\begin{definition}[product channel]\label{def:product-channel}
An $n$-qubit \textit{product channel} is a quantum channel $\channel$ of the form $\channel = \channel_1 \otimes \ldots \otimes \channel_n: \mathcal{L}(\mathbb C^{2^n}) \rightarrow \mathcal{L}(\mathbb C^{2^n})$, where each $\channel_i: \mathcal{L}(\mathbb C^{2}) \rightarrow \mathcal{L}(\mathbb C^{2})$.
\end{definition}

\begin{definition}[product ensemble]\label{def:product-ensemble}
An $n$-qubit \textit{product ensemble} is a collection of unitary transformations of the form $\ensemble = \bigotimes_{i=1}^n \ensemble_i = \{U_1 \otimes \ldots \otimes U_n : U_1 \in \ensemble_1, \ldots , U_n \in \ensemble_n\}$.
\end{definition}

First, we show that the shadow channel, the inverse shadow channel, and the classical shadow all factorize into tensor products when the unitary ensemble is a product ensemble and the quantum channel is a product channel.

\begin{claim}\label{claim:product_shadow_channel_factorizes}
Let $\ensemble = \bigotimes_{i=1}^n \ensemble_i$ be a product ensemble, and let $\channel = \channel_1 \otimes \ldots \otimes \channel_n$ be a product channel. Then, the shadow channel factorizes as follows:
\[
\mathcal M_{\ensemble, \channel} = \bigotimes_{i=1}^n \mathcal M_{\ensemble_i, \channel_i}.
\]
Assume that $\mathcal M_{\ensemble_i, \channel_i}$ is invertible $\forall i \in \{1, \ldots, n\}$. Then, $\mathcal M_{\ensemble,\channel}$ is invertible and the inverse shadow channel factorizes as follows:  
\[
\mathcal M_{\ensemble, \channel}^{-1} = \bigotimes_{i=1}^n \mathcal M_{\ensemble_i, \channel_i}^{-1}.
\]
Finally, let $\hat{U} = \hat{U}_1 \otimes \ldots \otimes \hat{U}_n \in \ensemble$ and $\hat{b} = \hat{b}_1\ldots\hat{b}_n \in \{0,1\}^n$. Then, 
\begin{align}
    \hat{\rho}(\ensemble, \channel, \hat{U}, \hat{b}) = \bigotimes_{i=1}^n \hat{\rho}(\ensemble_i, \channel_i, \hat{U}_i, \hat{b}_1).
\end{align}
\end{claim}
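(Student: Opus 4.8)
The plan is to exploit linearity of the shadow channel together with the tensor-product structure of $\ensemble$, $\channel$, and the computational basis; the whole claim then reduces to bookkeeping over tensor factors.

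First I would reduce the channel identity to product inputs. Since $\mathcal M_{\ensemble,\channel}$ is linear and the product operators $\rho_1 \otimes \cdots \otimes \rho_n$ span $\mathcal L(\mathbb C^{2^n})$, it suffices to verify $\mathcal M_{\ensemble,\channel}(\rho_1 \otimes \cdots \otimes \rho_n) = \bigotimes_{i=1}^n \mathcal M_{\ensemble_i,\channel_i}(\rho_i)$ for all such inputs. On a product input, write $U = U_1 \otimes \cdots \otimes U_n$ with each $U_i$ drawn independently from $\ensemble_i$, and $b = b_1 \cdots b_n$ so that $\ket b = \ket{b_1} \otimes \cdots \otimes \ket{b_n}$. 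Then $U\rho U^\dagger = \bigotimes_i U_i \rho_i U_i^\dagger$; applying the product channel gives $\channel(U\rho U^\dagger) = \bigotimes_i \channel_i(U_i \rho_i U_i^\dagger)$; pairing with $\ket b$ yields the scalar $\prod_i \bra{b_i}\channel_i(U_i \rho_i U_i^\dagger)\ket{b_i}$; and $U^\dagger \ketbra{b}{b} U = \bigotimes_i U_i^\dagger \ketbra{b_i}{b_i} U_i$. Finally $\E_{U\sim\ensemble} = \prod_i \E_{U_i \sim \ensemble_i}$ and $\sum_{b\in\{0,1\}^n} = \prod_i \sum_{b_i \in \{0,1\}}$, so the entire expression factors termwise into $\bigotimes_i \big( \E_{U_i \sim \ensemble_i} \sum_{b_i} \bra{b_i}\channel_i(U_i\rho_i U_i^\dagger)\ket{b_i}\, U_i^\dagger \ketbra{b_i}{b_i} U_i \big) = \bigotimes_i \mathcal M_{\ensemble_i,\channel_i}(\rho_i)$, which is the claimed factorization.

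For the inverse, I would invoke the standard fact that a tensor product $\bigotimes_i A_i$ of finite-dimensional linear maps is invertible if and only if each $A_i$ is, in which case $\big(\bigotimes_i A_i\big)^{-1} = \bigotimes_i A_i^{-1}$. Combined with the factorization just established, this immediately gives that $\mathcal M_{\ensemble,\channel}$ is invertible and that $\mathcal M_{\ensemble,\channel}^{-1} = \bigotimes_i \mathcal M_{\ensemble_i,\channel_i}^{-1}$. For the classical shadow, I would then just apply the definition and this inverse factorization: with $\hat U = \hat U_1 \otimes \cdots \otimes \hat U_n$ and $\hat b = \hat b_1 \cdots \hat b_n$, so that $\hat U^\dagger \ketbra{\hat b}{\hat b}\hat U = \bigotimes_i \hat U_i^\dagger \ketbra{\hat b_i}{\hat b_i}\hat U_i$, we get
\[
\hat\rho(\ensemble,\channel,\hat U,\hat b) = \mathcal M_{\ensemble,\channel}^{-1}\big(\hat U^\dagger \ketbra{\hat b}{\hat b}\hat U\big) = \bigotimes_{i=1}^n \mathcal M_{\ensemble_i,\channel_i}^{-1}\big(\hat U_i^\dagger \ketbra{\hat b_i}{\hat b_i}\hat U_i\big) = \bigotimes_{i=1}^n \hat\rho(\ensemble_i,\channel_i,\hat U_i,\hat b_i)
\]
(the subscript $\hat b_1$ in the statement should read $\hat b_i$).

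The computation is essentially index bookkeeping; the only points that deserve an explicit word are the reduction to product inputs (linearity plus the spanning property of product operators) and the elementary lemma on invertibility of tensor products of linear maps. I do not expect any genuine obstacle — the main risk is simply keeping the tensor-factor indices straight. For completeness I would also remark that the sampling probability $P_b$ factors as $P_b(\hat b) = \prod_i \langle \hat b_i\rvert \channel_i(U_i\rho_i U_i^\dagger)\lvert \hat b_i\rangle$ so that the coordinates of the random outcome $\hat b$ are independent, though this is needed only for the probabilistic picture and not for the operator identities above.
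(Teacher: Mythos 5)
Your proposal is correct and follows essentially the same route as the paper's own (much terser) proof: factorization on elementary tensor products plus linearity and spanning, the fact that $(A\otimes B)^{-1}=A^{-1}\otimes B^{-1}$, and the definitional chain of equalities for the classical shadow. Your parenthetical correction that the statement's $\hat b_1$ should read $\hat b_i$ is also right — it is a typo in the paper.
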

\begin{proof}
The first part of the claim follows from two basic facts: 
Tensor products of quantum channels factorize when applied to elementary tensor products, and the $n$-th order tensor product is the linear hull of all elementary tensor products. 
The second part of the claim follows from the fact that $(A \otimes B)^{-1} = A^{-1} \otimes B^{-1}$.

The third part of the claim follows from the following chain of equalities:
\begin{align}
    \hat{\rho}(\ensemble, \channel, \hat{U}, \hat{b}) 
    = \mathcal M_{\ensemble, \channel}^{-1}(\hat{U}^\dagger \lvert \hat{b}\rangle\!\langle\hat{b}\rvert \hat{U}) = \bigotimes_{i=1}^n \mathcal M_{\ensemble_i, \channel_i}^{-1}(\hat{U}_i^\dagger \lvert \hat{b}_i\rangle\!\langle\hat{b}_i\rvert \hat{U}_i) = \bigotimes_{i=1}^n \hat{\rho}(\ensemble_i, \channel_i, \hat{U}_i, \hat{b}_1).
\end{align}
\end{proof}

\begin{claim}
Let $\ensemble = \bigotimes_{i=1}^n \ensemble_i$ be a product ensemble, and let $\channel = \channel_1 \otimes \ldots \otimes \channel_n$ be a product channel. 
Assume that $\mathcal M_{\ensemble_i, \channel_i}$ is invertible $\forall i \in \{1, \ldots, n\}$. 
Let $\hat{U} = \hat{U}_1 \otimes \ldots \otimes \hat{U}_n \in \ensemble$ and $\hat{b} = \hat{b}_1\ldots\hat{b}_n \in \{0,1\}^n$. Then, 
\begin{align}
    \hat{\rho}(\ensemble, \channel, \hat{U}, \hat{b}) = \bigotimes_{i=1}^n \hat{\rho}(\ensemble_i, \channel_i, \hat{U}_i, \hat{b}_1).
\end{align}
\end{claim}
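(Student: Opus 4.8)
The plan is to observe that this statement is an immediate corollary of the factorization of the (inverse) shadow channel established in \cref{claim:product_shadow_channel_factorizes}, combined with the fact that a product unitary applied to a computational basis state leaves the tensor-product structure intact. So the proof is essentially a chain of substitutions, with no genuine new content beyond the earlier claim (of which this is literally the third bullet).

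First I would recall from \cref{claim:product_shadow_channel_factorizes} that $\mathcal M_{\ensemble, \channel} = \bigotimes_{i=1}^n \mathcal M_{\ensemble_i, \channel_i}$, and that, since each factor $\mathcal M_{\ensemble_i, \channel_i}$ is assumed invertible, the tensor product is invertible with
\begin{align}
\mathcal M_{\ensemble, \channel}^{-1} = \bigotimes_{i=1}^n \mathcal M_{\ensemble_i, \channel_i}^{-1},
\end{align}
using $(A\otimes B)^{-1} = A^{-1}\otimes B^{-1}$ extended inductively to $n$ factors.

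Next I would unpack the argument of the inverse shadow channel. Since $\hat U = \hat U_1 \otimes \cdots \otimes \hat U_n$ and $\ket{\hat b} = \ket{\hat b_1}\otimes\cdots\otimes\ket{\hat b_n}$, we have $\hat U^\dagger \ketbra{\hat b}{\hat b}\hat U = \bigotimes_{i=1}^n \hat U_i^\dagger \ketbra{\hat b_i}{\hat b_i}\hat U_i$. Applying $\mathcal M_{\ensemble,\channel}^{-1}$ in its factorized form, and using that a tensor product of linear maps acts factor-wise on elementary tensors, yields
\begin{align}
\hat\rho(\ensemble, \channel, \hat U, \hat b) = \mathcal M_{\ensemble,\channel}^{-1}\!\left( \bigotimes_{i=1}^n \hat U_i^\dagger \ketbra{\hat b_i}{\hat b_i}\hat U_i \right) = \bigotimes_{i=1}^n \mathcal M_{\ensemble_i, \channel_i}^{-1}\!\left(\hat U_i^\dagger \ketbra{\hat b_i}{\hat b_i}\hat U_i\right) = \bigotimes_{i=1}^n \hat\rho(\ensemble_i, \channel_i, \hat U_i, \hat b_i),
\end{align}
where the last equality is just the definition of the single-qubit classical shadow.

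Because each step is a direct substitution, there is no real obstacle; the only point worth stating carefully is the passage from invertibility of each single-qubit shadow channel to invertibility of their tensor product and the corresponding factorization of the inverse, and this is exactly what \cref{claim:product_shadow_channel_factorizes} already provides. Consequently the cleanest write-up is simply to invoke that claim, noting that the statement here is a verbatim repetition of its third conclusion.
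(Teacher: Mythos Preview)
Your proof is correct and essentially identical to the paper's own argument: both invoke \cref{claim:product_shadow_channel_factorizes} to factorize $\mathcal M_{\ensemble,\channel}^{-1}$, then apply it to the tensor-product state $\hat U^\dagger\ketbra{\hat b}{\hat b}\hat U = \bigotimes_i \hat U_i^\dagger\ketbra{\hat b_i}{\hat b_i}\hat U_i$ and read off the definition of each single-qubit shadow. As you note, the claim is a verbatim restatement of the third part of \cref{claim:product_shadow_channel_factorizes}.
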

\begin{proof}
By \cref{claim:product_shadow_channel_factorizes}, given a product ensemble and product channel, the shadow channel is $\mathcal{M}_{\ensemble, \channel} = \bigotimes_{i=1}^n \mathcal M_{\ensemble_i, \channel_i}$, and 
$\mathcal{M}_{\ensemble, \channel}^{-1} = \bigotimes_{i=1}^n \mathcal M_{\ensemble_i, \channel_i}^{-1}$. 
\begin{align}
    \hat{\rho}(\ensemble, \channel, \hat{U}, \hat{b}) 
    &= \mathcal M_{\ensemble, \channel}^{-1}(\hat{U}^\dagger \lvert \hat{b}\rangle\!\langle\hat{b}\rvert \hat{U})  \nn
    &= \bigotimes_{i=1}^n \mathcal M_{\ensemble_i, \channel_i}^{-1}(\hat{U}_i^\dagger \lvert \hat{b}_i\rangle\!\langle\hat{b}_i\rvert \hat{U}_i) \nn
    &= \bigotimes_{i=1}^n \hat{\rho}(\ensemble_i, \channel_i, \hat{U}_i, \hat{b}_1).
\end{align}
\end{proof}

We conclude this section with a nontrivial generalization of Proposition S2 in \cite{huang2020predicting}, which shows that tensor product noise cannot affect the nice factorization properties of classical shadows with tensor product structure.

\begin{lemma}\label{lemma:general_locality_respecting_shadow_norm}
Let $0 \leq k \leq n$. 
Let $O \in \mathbb H_2^{\otimes n}$ be an $n$-qubit operator that acts nontrivially as $\tilde{O} \in \mathbb H_2^{\otimes k}$ on $k$ qubits $i_1, \ldots, i_k$. 
Let $\ensemble = \bigotimes_{i=1}^n \ensemble_i$ be a product ensemble, and let $\channel = \channel_1 \otimes \ldots \otimes \channel_n$ be a product channel.
Assume that $\mathcal M_{\ensemble, \channel}$ is invertible. If $k=0$, then
$\norm{O}_{\mathrm{shadow},\ensemble, \channel}=1$. Otherwise, if $k\geq 1$, then
\begin{align}
    \norm{O}_{\mathrm{shadow},\ensemble, \channel} = \norm{\tilde{O}}_{\mathrm{shadow}, \mathcal U_{i_1}\otimes \cdots \otimes
    \mathcal U_{i_k},
    \mathcal E_{i_1} \otimes \cdots \otimes
    \mathcal E_{i_k}.
    }
    \label{eq:general_locality_respecting_shadow_norm}
\end{align}
\end{lemma}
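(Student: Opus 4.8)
The plan is to reduce the computation of $\norm{O}_{\mathrm{shadow},\ensemble, \channel}$ to a product over the single-qubit factors, exploiting the factorization $\mathcal M_{\ensemble, \channel} = \bigotimes_{i=1}^n \mathcal M_{\ensemble_i, \channel_i}$ established in \cref{claim:product_shadow_channel_factorizes}. First I would observe that the adjoint also factorizes, $\mathcal M^{-1,\dagger}_{\ensemble, \channel} = \bigotimes_{i=1}^n \mathcal M^{-1,\dagger}_{\ensemble_i, \channel_i}$, so that $\mathcal M^{-1,\dagger}_{\ensemble, \channel}(O) = \big(\bigotimes_{j \in \{i_1,\ldots,i_k\}} \mathcal M^{-1,\dagger}_{\ensemble_j, \channel_j}\big)(\tilde O) \otimes \big(\bigotimes_{j \notin \{i_1,\ldots,i_k\}} \mathcal M^{-1,\dagger}_{\ensemble_j, \channel_j}(\mathbb I_2)\big)$. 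The key point on the non-acted qubits is that each $\mathcal M_{\ensemble_j, \channel_j}$ is trace-preserving, hence its adjoint is unital: $\mathcal M^{-1,\dagger}_{\ensemble_j, \channel_j}(\mathbb I_2) = \mathbb I_2$. So the operator inside the shadow-seminorm expression is $\big(\bigotimes_{j} \mathcal M^{-1,\dagger}_{\ensemble_j, \channel_j}(\tilde O)\big) \otimes \mathbb I^{\otimes(n-k)}$ (up to reordering tensor factors).

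Next I would expand the definition \cref{eq:noisy_shadow_norm}. Write $U = \bigotimes_i U_i$ and $b = b_1 \cdots b_n$. Because $\channel$ and all the relevant operators are product operators, the summand
$\langle b \rvert \channel(U \sigma U^\dagger) \lvert b \rangle \langle b \rvert U \mathcal M^{-1,\dagger}_{\ensemble,\channel}(O) U^\dagger \lvert b \rangle^2$
should factor over the qubit index $i$ once $\sigma$ is taken to be a product state $\sigma = \bigotimes_i \sigma_i$. On a qubit $j$ outside the support, the corresponding factor collapses: $\sum_{b_j} \langle b_j \rvert \channel_j(U_j \sigma_j U_j^\dagger)\lvert b_j\rangle \langle b_j \rvert U_j \mathbb I U_j^\dagger \lvert b_j\rangle^2 = \sum_{b_j}\langle b_j \rvert \channel_j(U_j \sigma_j U_j^\dagger)\lvert b_j\rangle = \tr(\channel_j(U_j\sigma_j U_j^\dagger)) = 1$, using trace preservation. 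Hence all the inactive qubits contribute a factor of $1$, and we are left with exactly the expression defining $\norm{\tilde O}_{\mathrm{shadow}, \mathcal U_{i_1}\otimes\cdots\otimes\mathcal U_{i_k}, \mathcal E_{i_1}\otimes\cdots\otimes\mathcal E_{i_k}}^2$ but with the maximization over the sub-system state $\tilde\sigma = \sigma_{i_1}\otimes\cdots\otimes\sigma_{i_k}$ rather than over arbitrary $k$-qubit states. The $k=0$ case falls out of the same computation: every factor is $1$, so $\norm{O}_{\mathrm{shadow},\ensemble,\channel}=1$.

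The remaining gap is that the maximum in \cref{eq:noisy_shadow_norm} ranges over \emph{all} density operators $\sigma \in \mathbb D_{2^n}$, not just product states, and likewise for the $k$-qubit seminorm. To close this I would argue in two directions. For the inequality $\norm{O}_{\mathrm{shadow},\ensemble,\channel} \le \norm{\tilde O}_{\cdots}$: the map $\sigma \mapsto \E_U \sum_b \langle b\rvert\channel(U\sigma U^\dagger)\lvert b\rangle\langle b\rvert U \mathcal M^{-1,\dagger}(O) U^\dagger\lvert b\rangle^2$ is affine in $\sigma$, so its maximum over the convex set $\mathbb D_{2^n}$ is attained at an extreme point, i.e.\ a pure state; but even a pure $n$-qubit state need not be a product state, so instead I would note that this affine functional, after tracing out the inactive qubits (which, as shown above, only multiply by $1$ regardless of correlations), depends on $\sigma$ only through its reduced state $\tilde\sigma = \tr_{\overline{\{i_1,\ldots,i_k\}}}(\sigma)$ on the active qubits — this uses linearity of the partial trace and the fact that $\channel$, the $U$'s, and the measurement all act factorwise. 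Since $\tilde\sigma$ ranges over all of $\mathbb D_{2^k}$ as $\sigma$ ranges over $\mathbb D_{2^n}$, the two maxima coincide. This reduction-to-the-reduced-state step is the main obstacle, since it requires carefully checking that the inactive-qubit sums genuinely decouple even when $\sigma$ is entangled across the cut; once that is in hand, the equality \cref{eq:general_locality_respecting_shadow_norm} is immediate.
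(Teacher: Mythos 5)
Your proposal is correct and follows essentially the same route as the paper: factorize $\mathcal M^{-1,\dagger}_{\ensemble,\channel}$ and use unitality of the single-qubit adjoint inverse channels, then observe that the sum over the inactive-qubit outcomes implements a partial trace so the expression depends on $\sigma$ only through its reduced state on the active qubits, and finally use that the partial trace maps $\mathbb D_{2^n}$ onto $\mathbb D_{2^k}$. The step you flag as the "main obstacle" is closed in the paper exactly as you sketch, by writing $\sigma = \sum_a E_a \otimes F_a$ (a general, not product, decomposition) and using trace preservation of the tail channels so that $\sum_d \bra d \channel(V\otimes W\,\sigma\,V^\dagger\otimes W^\dagger)\ket d = \channel_{1\ldots k}\big(V \tr_{k+1\ldots n}(\sigma) V^\dagger\big)$.
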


\begin{proof}
By \cref{claim:product_shadow_channel_factorizes}, given a product ensemble and product channel, the shadow channel and inverse shadow channel factorize.  
It follows that that $\mathcal M_{\ensemble, \channel}^{-1, \dagger} = \bigotimes_{i=1}^n \mathcal{M}_{\ensemble_i, \channel_i}^{-1, \dagger}$. Additionally, the trace preserving property of $\mathcal M_{\ensemble_i, \channel_i}$ implies that $\mathcal M_{\ensemble_i, \channel_i}^{-1}$ is trace preserving. The complex conjugate of a trace-preserving quantum channel is unital (see \cite{watrous2018theory}, Theorem 2.26). Therefore, $\mathcal M_{\ensemble_i, \channel_i}^{-1, \dagger}$ is unital.

If $k = 0$, then $O = \mathbb I$. It follows from \cref{footnote:classical-shadow-trace-one} that  $\norm{\mathbb I}^2_{\mathrm{shadow}, \ensemble, \channel} = 1$.
Without loss of generality, take $O = \tilde{O} \otimes \mathbb I^{\otimes (n - k)}$.

\begin{align}
    \norm{O}_{\mathrm{shadow}, \ensemble, \channel}^2 
    &= \norm{\tilde{O} \otimes \mathbb I^{\otimes (n-k)}}_{\mathrm{shadow}, \ensemble, \channel}^2\nn
    &= \max_{\sigma \in \mathbb{D}_{2^n}} \underset{U \sim \ensemble}{\E} \sum_{b \in \{0,1\}^n} \langle b \rvert \mathbb \channel(U \sigma U^\dagger) \lvert b \rangle\!\langle b \rvert U \mathcal{M}_{\ensemble, \channel}^{-1, \dagger}(\tilde{O} \otimes \mathbb I^{\otimes (n -k)}) U^\dagger \lvert b\rangle^2 \nn
    &= \max_{\sigma \in \mathbb{D}_{2^n}} \underset{U \sim \ensemble}{\E} \sum_{b \in \{0,1\}^n} \langle b \rvert \mathbb \channel(U \sigma U^\dagger) \lvert b \rangle\!\langle b \rvert U \left( \bigotimes_{i = 1}^k \mathcal{M}_{\ensemble_i, \channel_i}^{-1, \dagger}(\tilde{O} ) \otimes \mathbb I \right) U^\dagger \lvert b\rangle^2. 
\end{align}

The last equality follows from the fact that $\mathcal M_{\ensemble, \channel}^{-1, \dagger}$ factorizes and is unital. We write $U \in \ensemble = \ensemble_1 \otimes \ldots \otimes \ensemble_n$ as $V \otimes W$, with $V = U_1 \otimes \ldots \otimes U_k, W = U_{k+1} \otimes \ldots \otimes U_n$. We also write $\ensemble_1 \otimes \ldots \otimes \ensemble_k$ and $\ensemble_{k+1} \otimes \ldots \otimes \ensemble_n$ as $\ensemble_{1\ldots k}$ and $\ensemble_{k+1 \ldots n}$, respectively. The expression becomes 

\begin{align}
&\begin{aligned}
    &= \max_{\sigma \in \mathbb{D}_{2^n}} \underset{V \sim \ensemble_{1 \ldots k}}{\E} \underset{W \sim \ensemble_{k+1 \ldots n}}{\E} \sum_{c \in \{0,1\}^k} \sum_{d \in \{0,1\}^{(n-k)}} \langle c \rvert\! \langle d \rvert \channel(V \otimes W \sigma V^\dagger \otimes W^\dagger) \lvert c \rangle\!\lvert d \rangle \\ &\qquad \cdot \langle c \rvert \!\langle d \rvert (V \otimes W) \left( \bigotimes_{i = 1}^k \mathcal{M}_{\ensemble_i, \channel_i}^{-1, \dagger}(\tilde{O} ) \otimes \mathbb I \right) (V^\dagger \otimes W^\dagger)\lvert c\rangle\!\lvert d\rangle^2 
    \end{aligned} \nn
    &\begin{aligned}
    &= \max_{\sigma \in \mathbb{D}_{2^n}} \underset{V \sim \ensemble_{1 \ldots k}}{\E} \underset{W \sim \ensemble_{k+1 \ldots n}}{\E} \sum_{c \in \{0,1\}^k} \sum_{d \in \{0,1\}^{(n-k)}} \langle c \rvert\! \langle d \rvert \channel(V \otimes W \sigma V^\dagger \otimes W^\dagger) \lvert c \rangle\!\lvert d \rangle \\ &\qquad \cdot \langle c \rvert V  \bigotimes_{i = 1}^k \mathcal{M}_{\ensemble_i, \channel_i}^{-1, \dagger}(\tilde{O} ) V^\dagger \lvert c\rangle^2 
    \end{aligned} \nn
    &\begin{aligned}
    &= \max_{\sigma \in \mathbb{D}_{2^n}}     \underset{V \sim \ensemble_{1 \ldots k}}{\E} \sum_{c \in \{0,1\}^k} \langle c \rvert V  \bigotimes_{i = 1}^k \mathcal{M}_{\ensemble_i, \channel_i}^{-1, \dagger}(\tilde{O} ) V^\dagger \lvert c\rangle^2 \\ &\qquad \cdot \langle c \rvert \left( \underset{W \sim \ensemble_{k+1 \ldots n}}{\E} \sum_{d \in \{0,1\}^{(n-k)}} \! \langle d \rvert \channel(V \otimes W \sigma V^\dagger \otimes W^\dagger) \lvert d \rangle \right) \lvert c \rangle.
    \end{aligned}
\end{align}

To simplify the expression further, we focus on the summation over $d \in \{0,1\}^{n-k}$, which is exactly the partial trace over the last $n-k$ qubits. 
We denote the partial trace over the last $n-k$ qubits as $\tr_{k+1 \ldots n}$. 
We write $\sigma = \sum_a E_a \otimes F_a$, where $E_a \in \mathcal L(\mathbb C^{2^n}), F_a \in \mathcal L(\mathbb C^{2^{n-k}})$, and $\channel = \channel_1 \otimes \ldots \otimes \channel_n = \channel_{1 \ldots k} \otimes \channel_{k+1 \ldots n}$, where $\channel_{1 \ldots k} = \channel_1 \otimes \ldots \otimes \channel_k, \channel_{k+1 \ldots n} = \channel_{k+1} \otimes \ldots \otimes \channel_n$.

\begin{align}
 \sum_{d \in \{0,1\}^{(n-k)}} \! \langle d \rvert \channel(V \otimes W \sigma V^\dagger \otimes W^\dagger) \lvert d \rangle
 &= \tr_{k+1 \ldots n} \channel(V \otimes W \sigma V^\dagger \otimes W^\dagger) \nn 
 &= \sum_a \tr_{k+1 \ldots n}\big( \channel_{1\ldots k} (V E_a V^\dagger)\otimes \channel_{k+1 \ldots n}(W F_a W^\dagger) \big) \nn
 &= \sum_a \channel_{1\ldots k} (V E_a V^\dagger) \tr\big(\channel_{k+1 \ldots n}(W F_a W^\dagger)\big) \nn 
 &= \sum_a \channel_{1\ldots k} (V E_a V^\dagger) \tr(W F_a W^\dagger) \nn
 &= \sum_a \channel_{1\ldots k} (V E_a V^\dagger) \tr(F_a) \nn
 &= \channel_{1\ldots k} (V \sum_a E_a \tr(F_a) V^\dagger) \nn
 &= \channel_{1\ldots k} (V \tr_{k+1}(\sigma) V^\dagger).
\end{align}

Plugging into the expression for the shadow seminorm, we get 
\begin{align}
    &= \max_{\sigma \in \mathbb{D}_{2^n}}     \underset{V \sim \ensemble_{1 \ldots k}}{\E} \sum_{c \in \{0,1\}^k} \langle c \rvert V  \bigotimes_{i = 1}^k \mathcal{M}_{\ensemble_i, \channel_i}^{-1, \dagger}(\tilde{O} ) V^\dagger \lvert c\rangle^2  \langle c \rvert \Big( \underset{W \sim \ensemble_{k+1 \ldots n}}{\E} \channel_{1\ldots k} (V \tr_{k+1}(\sigma) V^\dagger)\Big) \lvert c \rangle \nn
    &= \max_{\sigma \in \mathbb{D}_{2^n}}     \underset{V \sim \ensemble_{1 \ldots k}}{\E} \sum_{c \in \{0,1\}^k} \langle c \rvert \channel_{1\ldots k} (V \tr_{k+1}(\sigma) V^\dagger) \lvert c \rangle\!\langle c \rvert V \mathcal{M}_{\ensemble_{1 \ldots k}, \channel_{1 \ldots k}}^{-1, \dagger}(\tilde{O} ) V^\dagger \lvert c\rangle^2.  
\end{align}
Because the partial trace preserves the space of quantum states, 
\begin{align}
  \norm{O}_{\mathrm{shadow},\ensemble, \channel}^2   
    &= \max_{\tau \in \mathbb{D}_{2^k}}     \underset{V \sim \ensemble_{1 \ldots k}}{\E} \sum_{c \in \{0,1\}^k} \langle c \rvert \channel_{1\ldots k} (V \tau V^\dagger) \lvert c \rangle\!\langle c \rvert V \mathcal{M}_{\ensemble_{1 \ldots k}, \channel_{1 \ldots k}}^{-1, \dagger}(\tilde{O} ) V^\dagger \lvert c\rangle^2  \nn
    &= \norm{\tilde{O}}_{\mathrm{shadow},\ensemble_{1 \ldots k}, \channel_{1 \ldots k}}^2.
\end{align}
\end{proof}

\section{Global Clifford Ensemble with Noise}\label{section:global_clifford}
In this section we prove that if $\mathcal U$ is the Clifford group and $\mathcal E$ is an arbitrary quantum channel, the shadow channel can be expressed as a depolarizing channel (\cref{def:depolarizing-channel}).
In this setting, we derive the expression for the classical shadow and the sample complexity of the classical shadows protocol. 

\subsection{Derivation of Shadow Channel}\label{subsec:global-derivation-of-shadow-channel}
We begin with a technical lemma involving $2$-design ensembles. 
\begin{lemma}\label{lemma:2design}
Let $\Udag$ be an $n$-qubit $2$-design and $\channel: \mathcal L (\mathbb{C}^{2^n}) \rightarrow \mathcal L (\mathbb{C}^{2^n})$ be a linear superoperator. Then, 

\begin{align}\label{eq:shadow_channel_2design}
    \mathcal{M}_{\ensemble, \channel} (A) = f(\channel) A + \Big(\frac{1}{2^n}\tr(\channel(\mathbb{I})) - f(\channel) \Big) \tr(A)\frac{\mathbb{I}}{2^n}, 
\end{align}
where 
\begin{align}\label{eq:fidelity_2design}
    f(\channel) = \frac{1}{2^{2n}-1} \Big( \tr(\channel \circ \mathrm{diag}) - \frac{1}{2^n} \tr(\channel (\mathbb I)) \Big).
\end{align}
Also, if $\channel$ is trace-preserving or unital, then, 
\begin{align}
    \mathcal M_{\mathcal U, \mathcal E} = \mathcal D_{n,f(\mathcal E)},\qquad \text{where} \quad
    f(\mathcal E) = \frac{\tr(\mathcal E\circ \mathrm{diag})-1}{2^{2n}-1}.
\end{align}
\end{lemma}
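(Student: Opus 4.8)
The plan is to compute $\mathcal{M}_{\ensemble,\channel}(A)$ directly by exploiting the $2$-design property. First I would rewrite the defining expression
\[
\mathcal{M}_{\ensemble,\channel}(A) = \E_{U\sim\ensemble}\sum_{b\in\{0,1\}^n} \bra{b}\channel(UAU^\dagger)\ket{b}\, U^\dagger\ket{b}\!\bra{b}U
\]
so that the two tensor factors of $U$ are made manifest: the expression is $\E_{U\sim\ensemble}\sum_b (U^\dagger\otimes U^\dagger)\bigl(\ket{b}\!\bra{b}\otimes \channel(UAU^\dagger)\bigr)(U\otimes U)$ contracted appropriately. More precisely, I would use the identity $\bra{b}X\ket{b} Y = \tr_1\!\bigl[(\ket{b}\!\bra{b}\otimes \mathbb{I})(X\otimes Y)\bigr]$ and pull the conjugation by $U$ outside, so that everything is governed by the $2$-fold twirl $\Psi_{\ensemble,2}$ applied to the operator $\ket{b}\!\bra{b}\otimes\channel(\cdot)$ — but since $\channel$ sits \emph{inside} one copy of the conjugation, the cleaner route is to first absorb $\channel$ using its Kraus operators $\channel(X)=\sum_a K_a X K_a^\dagger$ and then reorganize.

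The key steps, in order: (1) expand $\channel$ via Kraus operators and collect the expression as $\E_{U\sim\ensemble}\sum_{a,b} \langle b| U (\sum\nolimits? )\ldots$ so that it becomes $\Psi_{\ensemble,2}$ of a fixed operator built from $\ket{b}\!\bra{b}$, $K_a$, and $A$; (2) invoke \cref{def:t-design} to replace $\Psi_{\ensemble,2}$ by the Haar twirl $T_2^{(2^n)}$, and apply the closed form \cref{eq:HaarIntegralt2}, which expresses $T_2^{(d)}(\cdot)$ in terms of $\tr(\cdot)$, $\tr(W\cdot)$, $I$, and the swap $W$; (3) carry out the contractions against $\ket{b}\!\bra{b}$ and sum over $b\in\{0,1\}^n$, using $\sum_b \ket{b}\!\bra{b}=\mathbb{I}$, $\sum_b \bra{b}X\ket{b}=\tr(X)$, and the standard swap-trick identities $\tr(W(X\otimes Y))=\tr(XY)$, $\tr_2(W(X\otimes\mathbb{I}))=X$, etc.; (4) identify the two surviving terms as $c_1 A + c_2 \tr(A)\mathbb{I}$ and evaluate the scalar coefficients in terms of $\tr(\channel(\mathbb{I}))=\alpha$ and $\tr(\channel\circ\diag)=\beta$, matching them to $f(\channel)$ and $\frac{1}{2^n}\tr(\channel(\mathbb{I}))-f(\channel)$. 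For the ``also'' clause, I would simply specialize: if $\channel$ is trace-preserving then $\tr(\channel(\mathbb{I}))=\tr(\mathbb{I})=2^n$, so the prefactor of $\tr(A)\mathbb{I}/2^n$ becomes $1-f(\channel)$ and \cref{eq:shadow_channel_2design} is literally the definition of $\mathcal{D}_{n,f(\channel)}$; the unital case gives $\channel(\mathbb{I})=\mathbb{I}$, hence again $\tr(\channel(\mathbb{I}))=2^n$ and $f(\channel)$ collapses to $\frac{\beta-1}{2^{2n}-1}$. An alternative to steps (1)–(3) that avoids Kraus bookkeeping is to observe that $A\mapsto\mathcal{M}_{\ensemble,\channel}(A)$ is linear and, by the $2$-design/Haar symmetry, covariant in a way that forces it to lie in the two-dimensional commutant spanned by $A\mapsto A$ and $A\mapsto\tr(A)\mathbb{I}$; then the two coefficients are pinned down by evaluating on $A=\mathbb{I}$ and by taking a trace, i.e.\ computing $\tr(\mathcal{M}_{\ensemble,\channel}(A))$ and $\tr(\mathcal{M}_{\ensemble,\channel}(\mathbb{I}))$, which reduces everything to the two scalars $\alpha,\beta$.

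The main obstacle I anticipate is the careful bookkeeping in step (3): the operator being twirled is not symmetric under swapping the two tensor factors (one factor carries $\ket{b}\!\bra{b}$ from the measurement, the other carries $\channel(UAU^\dagger)$), so when I apply \cref{eq:HaarIntegralt2} I must be scrupulous about which factor each of $\tr(\cdot)$, $\tr(W\cdot)$, $I$, $W$ acts on, and about the order in which $K_a$, $A$, $K_a^\dagger$ appear inside the traces. Getting the swap-operator contractions right — in particular correctly producing the term $\sum_a\tr(K_a^\dagger\ket{b}\!\bra{b}K_a A)$ type expressions and recognizing $\sum_b\sum_a\bra{b}K_a A K_a^\dagger\ket{b}$-like sums as $\tr(\channel\circ\diag(\cdot))$ or $\tr(\channel(\mathbb{I}))$ — is the delicate part; the representation-theoretic shortcut sidesteps most of this but still requires justifying that $\mathcal{M}_{\ensemble,\channel}$ lands in the claimed two-dimensional space, which itself follows from the $2$-design property applied to the relevant intertwiner argument. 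Everything else is routine linear algebra.
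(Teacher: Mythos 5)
Your main plan is correct and is essentially the paper's own proof: expand $\channel$ through its Kraus operators, rewrite the $b$-summand as a partial trace of a $2$-fold twirl (noting the conjugation there is by $U^\dagger\otimes U^\dagger$, so one quietly uses that the adjoint ensemble is also a $2$-design), replace the twirl by the Haar formula \cref{eq:HaarIntegralt2}, carry out the swap contractions and the sum over $b$ to produce $\alpha=\tr(\channel(\mathbb I))$ and $\beta=\tr(\channel\circ\diag)$, and specialize $\tr(\channel(\mathbb I))=2^n$ for the trace-preserving/unital clause. One caveat about your proposed shortcut only: evaluating at $A=\mathbb I$ and computing $\tr(\mathcal M_{\ensemble,\channel}(A))$ give the \emph{same} linear constraint $c_1+2^nc_2=\alpha/2^n$ on the two coefficients (neither computation ever sees $\beta$, which $f(\channel)$ requires), so to pin down $f(\channel)$ you would need a genuinely independent evaluation such as $\sum_b\bra b\mathcal M_{\ensemble,\channel}(\ketbra bb)\ket b$; since your primary route is the one the paper follows, this does not affect the correctness of the plan.
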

\begin{proof}
We first introduce the following notation:
let $\mathcal E: \mathcal L(\mathbb C^d) \rightarrow L(\mathbb C^d)$ be a linear superoperator. Define the unary operator $()^\ddagger$ as follows:
\begin{align}
\label{def:ddagger}
    \mathcal E^\ddagger(A) &= (\mathcal E^*(A^\dag))^\dag. 
\end{align}
Say that $\mathcal E$ has Kraus representation
\begin{align}
    \mathcal E:B\mapsto \sum_i J_i B K_i^\dag.
\end{align} 
Then, the shadow channel may be evaluated as
\begin{align}
    \shadowchannel{\ensemble}{\channel}(A) &=
    \E\limits_{U\sim \ensemble} \sum_{b\in \{0,1\}^n}
    \bra b \channel(UAU^\dag)\ket b U^\dag \ketbra{b}{b} U \nonumber\\
    &= \E\limits_{U\sim \ensemble} \sum_{b\in \{0,1\}^n}
    \bra b \left(\sum_i J_i UAU^\dag K_i^\dag \right) \ket b U^\dag \ketbra{b}{b} U \nonumber\\
    &=
    \sum_{b \in \{0,1\}^n} \E\limits_{U\sim \ensemble}
    \tr\left(U^\dag  \sum_i K_i^\dag \ketbra bb J_i UA \right) U^\dag \ketbra bb U 
    \nn
    &= 
    \sumCube{b} \E\limits_{U\sim \ensemble} \tr_1
    \left(
    U^\dag \mathcal E^\ddagger(\ketbra bb) UA \otimes U^\dag \ketbra bb U
    \right)
    \nn
    &=
    \sumCube{b} \tr_1 \left\{
    \E\limits_{U\sim \ensemble} (U^\dag \otimes U^\dag)(\mathcal E^\ddagger(\ketbra bb)\otimes \ketbra bb)(U\otimes U)(A\otimes I)
    \right\}
    \nn
    &=
    \sumCube{b} \tr_1 \left\{
    \E\limits_{U\sim \ensemble^\dag} (U \otimes U)(\mathcal E^\ddagger(\ketbra bb)\otimes \ketbra bb)(U^\dag\otimes U^\dag)(A\otimes I)
    \right\}
    \nn
    &=
    \sumCube{b}
    \tr_1 \Bigg\{
    \underbrace{
    T_2^{(2^n)}\left(\mathcal E^\ddagger(\ketbra bb) \otimes \ketbra bb \right)
    }_{\circled{1}}
    (A\otimes I)
    \Bigg\},
\end{align}
since $\Udag$ is a 2-design. In the above equations, $\tr_1$ denotes the partial trace over the first subsystem.

Applying \cref{eq:HaarIntegralt2} to the 2-fold twirl $\circled{1}$ gives

\begin{align}
\circled{1} &=  \int \dd\eta(U) \ (U\otimes U)
    \left[ \mathcal E^\ddagger(\ketbra bb) \otimes \ketbra bb \right](U^\dag \otimes U^\dag) \nn
    &=
    \frac{1}{2^{2n}-1} \left[
    \tr\left(
    \mathcal E^\ddagger(\ketbra bb) 
    \otimes \ketbra bb\right)\left(I-\frac{W}{2^n}\right) +
    \tr\left(
    W \mathcal E^\ddagger(\ketbra bb) 
    \otimes \ketbra bb \right)\left(W-\frac{I}{2^n}\right)
    \right],
\end{align}
where the traces in the above equations simplify as 
\begin{align}
\tr\left(
    \mathcal E^\ddagger(\ketbra bb) 
    \otimes \ketbra bb\right) 
    = \tr(\mathcal E^\ddagger(\ketbra bb)), \\
    \tr\left(
    W\mathcal E^\ddagger(\ketbra bb) 
    \otimes \ketbra bb\right) = 
    \bra b \mathcal E^\ddagger(\ketbra bb) \ket b.
\end{align}
Therefore,
\begin{align}
\label{eq:MUEAexpansion}
    \shadowchannel{\ensemble}{\channel}(A) &=
    \sumCube{b} 
    \tr_1 \left\{
    \frac{1}{2^{2n}-1} \left(
    \tr( 
    \mathcal E^\ddagger (\ketbra bb)) \left(I-\frac{W}{2^n}\right)
    +
    \bra b \mathcal E^\ddagger(\ketbra bb) \ket b \left(W-\frac{I}{2^n}\right)
    \right)(A\otimes I)
    \right\}
    \nn
    &=
    \frac{1}{2^{2n}-1}
    \Bigg[
    \underbrace{\tr(\mathcal E^\ddagger(I))}_{\circled{2}} \underbrace{\tr_1\left\{
    \left(
    I - \frac{W}{2^n}
    \right)(A\otimes I)\right\}}_{\circled{3}}
    \nn
    &\qquad\qquad\qquad +
    \underbrace{\sumCube{b} \bra b \mathcal E^\ddagger(\ketbra bb) \ket b
    }_{\circled{4}}
    \underbrace{\tr_1\left\{
    \left(
    W - \frac{I}{2^n}
    \right)(A\otimes I)\right\}}_{\circled{5}}
    \Bigg].
\end{align}

Then, by simple calculation, 
\begin{align}
    \circled{2} = \tr(\mathcal E^\ddagger(I)) = \tr(\mathcal E(I)),
\end{align}
and
\begin{align}
    \circled{4} = \Tr(\mathcal E^\ddagger \circ \diag)
    = \Tr(\mathcal E\circ \diag).
\end{align}
To evaluate $\circled{3}$ and $\circled{5}$, we use the fact that
\begin{align}
    \tr_1 (W(A\otimes I)) = A.
\end{align}
Hence,
\begin{align}
    \circled{3} &= \tr_1(A\otimes I) - \frac{1}{2^n} \tr_1(W(A\otimes I))
    \nn
    &=
    \tr(A) I - \frac{1}{2^n} A
\end{align}
and
\begin{align}
    \circled{5} &= \tr_1(W(A\otimes I)) - 
    \frac{1}{2^n} \tr_1(A\otimes I) \nn
    &= A - \frac{1}{2^n} \tr(A) I.
\end{align}
Plugging these back into \cref{eq:MUEAexpansion}, 
\begin{align}
    \shadowchannel{\ensemble}{\channel}(A)
    &=
    \frac{1}{2^{2n}-1} \left[
    \tr(\mathcal E(I))(\tr(A) I -\frac{1}{2^n} A)
    +
    \tr(\mathcal E\circ \diag)
    (A- \frac{1}{2^n} \tr(A) I)
    \right]
    \nn
    &= f(\channel) A + \Big(\frac{1}{2^n}\tr(\channel(\mathbb{I})) - f(\channel) \Big) \tr(A)\frac{\mathbb{I}}{2^n},
\end{align}
where
\begin{align}
    f(\channel) = \frac{1}{2^{2n}-1} \Big( \tr(\channel \circ \mathrm{diag}) - \frac{1}{2^n} \tr(\channel (\mathbb I)) \Big),
\end{align}
which completes the first part of the proof. 
The second part of the proof follows from the fact that, if $\channel$ is trace-preserving or unital, then $\tr(\channel(\mathbb I)) = \tr(\mathbb I) = 2^n$. Substituting this into \cref{eq:shadow_channel_2design} and \cref{eq:fidelity_2design} gives the result. 
\end{proof}

Since the Clifford group forms a 2-design, \cref{lemma:2design} immediately implies that if $\mathcal E$ is an arbitrary quantum channel (and is hence trace-preserving), then the shadow channel is given by
\begin{align}
    \mathcal M_{\mathcal C_n, \mathcal E} = \mathcal D_{n,f(\mathcal E)},
    \qquad
    \text{where}
    \quad
    f(\mathcal E) = \frac{\tr(\mathcal E\circ \mathrm{diag})-1}{2^{2n}-1}.
    \label{eq:clifford_global_noise}
\end{align}
Moreover, if the shadow channel is invertible, its inverse is given by $\mathcal M_{\mathcal C_n, \mathcal E}^{-1} = \mathcal D_{n,f(\mathcal E)^{-1}}$.

An important fact can be deduced from \cref{lemma:2design}, namely, that if the unitary ensemble is a $2$-design, then the shadow channel is invertible if and only if the error channel obeys the simple condition $\tr(\mathcal E \circ \diag) \neq 1$: 
\begin{claim}\label{claim:invertibility_global_clifford}
Let $\Udag$ be an $n$-qubit 2-design, and let $\channel$ be a linear superoperator. $\mathcal M_{\ensemble, \channel}$ is invertible if and only if $\tr(\channel \circ \mathrm{diag}) \neq 1$.
In this case, 
$\mathcal M_{\ensemble, \channel}^{-1} = \mathcal D_{n, 1/f(\channel)}$.
\end{claim}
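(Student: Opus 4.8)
The plan is to read off the structure of $\mathcal M_{\ensemble,\channel}$ from \cref{lemma:2design} and then characterize invertibility of a depolarizing-type channel directly. By \cref{lemma:2design}, since $\ensemble$ is a $2$-design, for every $A$ we have
\[
\mathcal M_{\ensemble,\channel}(A) = f(\channel)\,A + \Big(\tfrac{1}{2^n}\tr(\channel(\mathbb I)) - f(\channel)\Big)\tr(A)\,\tfrac{\mathbb I}{2^n},
\]
with $f(\channel) = \frac{1}{2^{2n}-1}\big(\tr(\channel\circ\diag) - \tfrac{1}{2^n}\tr(\channel(\mathbb I))\big)$. First I would observe that $\mathcal M_{\ensemble,\channel}$ preserves the decomposition $\mathcal L(\mathbb C^{2^n}) = \mathrm{span}\{\mathbb I\} \oplus \{A : \tr(A) = 0\}$: on the traceless subspace it acts as multiplication by $f(\channel)$, and it maps $\mathbb I \mapsto \tfrac{1}{2^n}\tr(\channel(\mathbb I))\,\mathbb I$ (one can see this by substituting $A = \mathbb I$). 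Hence $\mathcal M_{\ensemble,\channel}$ is invertible if and only if both ``eigenvalues'' are nonzero: $f(\channel) \neq 0$ and $\tr(\channel(\mathbb I)) \neq 0$.

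Next I would simplify these two conditions. Since the claim asserts a clean condition in terms of $\tr(\channel\circ\diag)$ alone, the subtlety is that in the statement $\channel$ is only assumed to be a linear superoperator, not necessarily trace-preserving or unital, so I cannot immediately replace $\tr(\channel(\mathbb I))$ by $2^n$. I would argue as follows: note $\tr(\channel(\mathbb I)) = \tr(\channel \circ \diag (\mathbb I)) $? — no; rather, observe that $\diag(\mathbb I) = \mathbb I$, so $\tr((\channel\circ\diag)(\mathbb I)) = \tr(\channel(\mathbb I))$, but this does not equal $\tr(\channel\circ\diag)$ in general. The correct route: the two failure modes $f(\channel)=0$ and $\tr(\channel(\mathbb I))=0$ should be shown to be jointly equivalent to $\tr(\channel\circ\diag)=1$. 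When $\tr(\channel(\mathbb I)) = 0$, then $f(\channel) = \frac{1}{2^{2n}-1}\tr(\channel\circ\diag)$, so $f(\channel)=0 \iff \tr(\channel\circ\diag)=0$, and one needs $\tr(\channel\circ\diag)=1$ to be the true obstruction — this suggests the claim is implicitly using that $\channel$ is trace-preserving (as everywhere else in \cref{section:global_clifford}), whence $\tr(\channel(\mathbb I)) = 2^n \neq 0$ automatically, and the only condition is $f(\channel) \neq 0 \iff \tr(\channel\circ\diag) \neq 1$ via $f(\channel) = \frac{\tr(\channel\circ\diag)-1}{2^{2n}-1}$ from the second part of \cref{lemma:2design}.

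So the clean argument is: assuming $\channel$ trace-preserving (consistent with the surrounding section), \cref{lemma:2design} gives $\mathcal M_{\ensemble,\channel} = \mathcal D_{n,f(\channel)}$ with $f(\channel) = \frac{\tr(\channel\circ\diag)-1}{2^{2n}-1}$; the depolarizing channel $\mathcal D_{n,f}$ is invertible iff $f \neq 0$ (its spectrum on the traceless subspace is $f$ and on $\mathbb I$ is $1$), i.e.\ iff $\tr(\channel\circ\diag) \neq 1$; and in that case the inverse of $A \mapsto fA + (1-f)\tr(A)\tfrac{\mathbb I}{d}$ is directly computed to be $A \mapsto \tfrac1f A + (1-\tfrac1f)\tr(A)\tfrac{\mathbb I}{d} = \mathcal D_{n,1/f(\channel)}$, which one verifies by composing the two maps and using $\tr(\mathcal D_{n,f}(A)) = \tr(A)$. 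The main obstacle I anticipate is purely bookkeeping around the hypotheses — whether ``linear superoperator'' here should be strengthened to ``trace-preserving'' for the stated condition to be exactly right, or whether one must carry the extra factor $\tr(\channel(\mathbb I))$; I would resolve this by invoking trace-preservation as used throughout \cref{eq:clifford_global_noise} and the rest of the section, so that $f(\channel)$ takes its simplified form and the equivalence with $\tr(\channel\circ\diag)\neq 1$ is immediate.
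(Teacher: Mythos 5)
Your proposal is correct and takes essentially the same route as the paper: invoke \cref{lemma:2design} to identify $\mathcal M_{\ensemble,\channel}$ as $\mathcal D_{n,f(\channel)}$, note that a depolarizing map is invertible iff its parameter is nonzero, translate $f(\channel)\neq 0$ into $\tr(\channel\circ\mathrm{diag})\neq 1$, and compute $\mathcal D_{n,f(\channel)}^{-1}=\mathcal D_{n,1/f(\channel)}$. Your side observation is also apt: the clean equivalence (and the identification with a depolarizing channel) really uses the trace-preserving or unital case of \cref{lemma:2design} rather than an arbitrary linear superoperator --- for a general $\channel$ one would additionally need $\tr(\channel(\mathbb I))\neq 0$ --- a hypothesis subtlety the paper's own one-line proof glosses over.
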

\begin{proof}
By \cref{lemma:2design}, the shadow channel with noise is a depolarizing channel. Therefore, 
\begin{align}
\text{$\mathcal{M}_{\ensemble, \channel} = \mathcal{D}_{n, f(\channel)}$ is invertible} \iff f(\channel) \neq 0 \iff \tr(\channel \circ \mathrm{diag}) \neq 1, 
\end{align}
and
\begin{align}
\mathcal M_{\ensemble, \channel}^{-1} = \mathcal D_{n, f(\channel)}^{-1} = \mathcal D_{n, 1/f(\channel)}.
\end{align}
\end{proof}

Next, we prove bounds on the depolarizing parameter $f(\channel)$:

\begin{claim}\label{claim:fidelity-parameter-bounds}
Let $\channel: \mathcal{L}({\mathbb{C}^{2^n}}) \rightarrow \mathcal{L}({\mathbb{C}^{2^n}})$ be a quantum channel. Then, 

\begin{align}
    -\frac{1}{2^{2n}-1} \leq f(\channel) \leq \frac{1}{2^n + 1}.
    \label{eq:bounds_on_fE1}
\end{align}
\label{claim:bounds_on_fE}
\end{claim}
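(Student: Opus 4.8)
The plan is to bound $f(\channel)$ by relating the quantity $\tr(\channel\circ\diag)$ to properties of $\channel$ as a quantum channel. Recall from \cref{lemma:2design} that, since $\channel$ is trace-preserving, $f(\channel) = \frac{\tr(\channel\circ\diag)-1}{2^{2n}-1}$. So the bounds in \cref{eq:bounds_on_fE1} are equivalent to the two inequalities $0 \le \tr(\channel\circ\diag)$ and $\tr(\channel\circ\diag) \le 2^n$. Using the explicit formula $\tr(\channel\circ\diag) = \sum_{b\in\{0,1\}^n} \bra b \channel(\ketbra bb)\ket b$, each summand $\bra b\channel(\ketbra bb)\ket b$ is the probability of obtaining outcome $b$ when measuring the state $\channel(\ketbra bb)$ in the computational basis; since $\channel$ is completely positive and trace-preserving, $\channel(\ketbra bb)$ is a density operator, so each summand lies in $[0,1]$. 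Summing the $2^n$ terms gives $0 \le \tr(\channel\circ\diag) \le 2^n$ immediately.

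Translating back: the lower bound $\tr(\channel\circ\diag)\ge 0$ yields $f(\channel) \ge \frac{-1}{2^{2n}-1}$, and the upper bound $\tr(\channel\circ\diag)\le 2^n$ yields $f(\channel) \le \frac{2^n-1}{2^{2n}-1} = \frac{1}{2^n+1}$, which is exactly \cref{eq:bounds_on_fE1}. An alternative route to the upper bound that avoids the per-term argument: note that $\diag$ itself is a valid quantum channel, so $\channel\circ\diag$ is a quantum channel; one could then invoke the already-established fact (see the discussion around \cref{eq:when_is_depolarizing_channel_a_channel}) that forcing the associated depolarizing form to be completely positive constrains the parameter — but this seems like a detour, since in fact $\channel\circ\diag$ need not be depolarizing. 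The cleanest argument is the direct one above.

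The only mild subtlety — and the closest thing to an ``obstacle'' — is being careful about which hypotheses are actually needed. Both bounds only use that $\channel$ is completely positive and trace-preserving (complete positivity to ensure $\channel(\ketbra bb)$ is positive semidefinite, trace-preservation to ensure its trace is $1$, and to license the formula $f(\channel) = \frac{\tr(\channel\circ\diag)-1}{2^{2n}-1}$). It is worth remarking that the upper bound $f(\channel)\le\frac1{2^n+1}$ is saturated precisely when $\channel\circ\diag = \diag$, i.e.\ when $\channel$ acts as the identity on all diagonal entries (e.g.\ $\channel = \mathrm{id}$, recovering the noiseless value $f = \frac1{2^n+1}$), and the lower bound is saturated when $\channel(\ketbra bb)$ has zero overlap with $\ketbra bb$ for every $b$. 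I would write out the two-line inequality chain and note these extremal cases for the reader's intuition.
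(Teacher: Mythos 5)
Your proposal is correct and follows the same route as the paper: the paper's proof is just the two-step implication that $\channel$ being a quantum channel forces $\tr(\channel\circ\diag)\in[0,2^n]$, which then gives the stated bounds on $f(\channel)$ via $f(\channel)=\frac{\tr(\channel\circ\diag)-1}{2^{2n}-1}$. Your write-up simply fills in the (correct) per-term justification that each $\bra b\channel(\ketbra bb)\ket b\in[0,1]$, which the paper leaves implicit.
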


\begin{proof}
\begin{align}
    \text{$\channel$ is a quantum channel} &\implies \tr(\channel \circ \mathrm{diag}) \in [0, 2^n]\nn
    &\implies f(\channel) = \frac{\tr(\channel \circ \mathrm{diag}) - 1}{2^{2n}-1} \in \Big[-\frac{1}{2^{2n}-1}, \frac{1}{2^n + 1}\Big].
    \label{eq:bounds_on_fE}
\end{align}
\end{proof}

A few remarks are in order. First, note that the bounds in \cref{claim:fidelity-parameter-bounds} have appeared in work on randomized benchmarking (e.g., Lemma 1 in \cite{Wallman2018randomized}).

Second, note that the depolarizing parameter $f(\channel)$ is upper bounded by $\frac{1}{2^n + 1} = f(\mathbb{I})$, which is the depolarizing parameter of the noiseless shadow channel $\mathcal{M}_\ensemble$. In other words, as expected, noise necessarily decreases the depolarizing parameter of the shadow channel (i.e., it is not possible to use noise to improve the performance of classical shadows).

Finally, note from \cref{eq:bounds_on_fE1} that $f(\mathcal E)$ can take negative values. As discussed in \cref{sec:linear_operators_and_quantum_channels}, while it is typical to consider depolarizing channels with depolarizing parameter $f\in[0,1]$, $D_{n,f}$ remains a quantum channel for some negative values of $f$. We note here that the lower bound in \cref{eq:bounds_on_fE1} matches exactly the lower bound in \cref{eq:when_is_depolarizing_channel_a_channel} when $d=2^n$.

\subsection{Classical Shadow}
We now give an expression for the classical shadow when $\mathcal M_{\ensemble, \channel} = \mathcal{D}_{n, f(\channel)}$. Recall that the classical shadow corresponding to a unitary ensemble $\ensemble$, noise channel $\channel$, unitary transformation $\hat{U} \in \ensemble$, and bit string $\hat{b} \in \{0,1\}^n$ is $\hat{\rho}(\ensemble, \channel, \hat{U}, \hat{b}) = \mathcal M_{\ensemble, \channel}^{-1}(\hat{U}^\dagger \lvert \hat{b}\rangle\!\langle \hat{b}\rvert \hat{U})$.

\begin{claim}\label{claim:global_clifford_classical_shadow}
Let $\Udag$ be an $n$-qubit 2-design, and let $\channel$ be a quantum channel. Assume $\mathcal M_{\ensemble, \channel}$ is invertible. Then, for some $\hat{U} \in \ensemble$ and $\hat{b} \in \{0, 1\}^n$, the classical shadow is

\begin{align}
    \hat{\rho}(\ensemble, \channel, \hat{U}, \hat{b}) = \frac{1}{f(\channel)} \hat{U}\lvert\hat{b}\rangle\!\langle\hat{b}\rvert + \Big(1 - \frac{1}{f(\channel)}\Big)\frac{\mathbb{I}}{2^n},
    \end{align}
    where $f(\mathcal E) = \frac{\tr(\mathcal E\circ \mathrm{diag})-1}{2^{2n}-1}$.
\end{claim}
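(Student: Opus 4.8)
The plan is to invoke \cref{claim:invertibility_global_clifford}, which tells us that under the stated hypotheses the shadow channel $\mathcal M_{\ensemble, \channel}$ is a depolarizing channel $\mathcal D_{n, f(\channel)}$ with inverse $\mathcal D_{n, 1/f(\channel)}$, and then simply apply this inverse to the rank-one projector $\hat{U}^\dagger \ket{\hat b}\!\bra{\hat b} \hat{U}$. Concretely, I would first recall from \cref{def:depolarizing-channel} that $\mathcal D_{n, g}(A) = gA + (1-g)\tr(A)\tfrac{\mathbb I}{2^n}$, and that by \cref{lemma:2design} (since a $2$-design is in particular a $2$-design and $\channel$ is trace-preserving) we have $\mathcal M_{\ensemble,\channel} = \mathcal D_{n, f(\channel)}$ with $f(\channel) = \frac{\tr(\channel\circ\diag)-1}{2^{2n}-1}$; invertibility of $\mathcal M_{\ensemble,\channel}$ is assumed, so by \cref{claim:invertibility_global_clifford} we know $f(\channel)\neq 0$ and $\mathcal M_{\ensemble,\channel}^{-1} = \mathcal D_{n, 1/f(\channel)}$.

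The second step is the routine substitution: by definition $\hat\rho(\ensemble,\channel,\hat U,\hat b) = \mathcal M_{\ensemble,\channel}^{-1}(\hat U^\dagger \ket{\hat b}\!\bra{\hat b}\hat U) = \mathcal D_{n, 1/f(\channel)}(\hat U^\dagger \ket{\hat b}\!\bra{\hat b}\hat U)$, and since $\hat U^\dagger \ket{\hat b}\!\bra{\hat b}\hat U$ has unit trace, applying the definition of $\mathcal D_{n,g}$ with $g = 1/f(\channel)$ yields
\begin{align}
\hat\rho(\ensemble,\channel,\hat U,\hat b) = \frac{1}{f(\channel)}\,\hat U^\dagger\ket{\hat b}\!\bra{\hat b}\hat U + \Big(1 - \frac{1}{f(\channel)}\Big)\frac{\mathbb I}{2^n}.
\end{align}
This matches the claimed expression. (I note that the statement as written has $\hat U\ket{\hat b}\!\bra{\hat b}$ rather than $\hat U^\dagger\ket{\hat b}\!\bra{\hat b}\hat U$; I would write the proof with the $\hat U^\dagger(\cdot)\hat U$ form consistent with \cref{eq:classical_shadow}, treating the displayed formula's shorthand as the intended one, and flag no discrepancy beyond that since it is purely notational.)

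There is essentially no obstacle here — the entire content of the claim is packaged into \cref{lemma:2design} and \cref{claim:invertibility_global_clifford}, which have already been established, so the proof is a two-line deduction. The only thing to be careful about is making explicit that trace-preservation of $\channel$ is what collapses the general two-parameter form in \cref{eq:shadow_channel_2design} down to the single-parameter depolarizing channel, and that the unit-trace property of $\hat U^\dagger\ket{\hat b}\!\bra{\hat b}\hat U$ (inherited from $\ket{\hat b}\!\bra{\hat b}$) is used when simplifying $\mathcal D_{n,1/f(\channel)}$ applied to it. If anything deserves a sentence of justification, it is recalling why $f(\channel)\neq 0$ — but that is exactly the content of the invertibility hypothesis via \cref{claim:invertibility_global_clifford}, so it can be cited rather than reproved.
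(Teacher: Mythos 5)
Your proposal is correct and follows essentially the same route as the paper: identify $\mathcal M_{\ensemble,\channel}^{-1}=\mathcal D_{n,1/f(\channel)}$ via \cref{lemma:2design} and \cref{claim:invertibility_global_clifford}, then apply it to the unit-trace projector $\hat U^\dagger\ket{\hat b}\!\bra{\hat b}\hat U$. Your observation that the displayed formula in the claim drops a $\hat U$ (it should read $\hat U^\dagger\ket{\hat b}\!\bra{\hat b}\hat U$) is a fair notational flag and does not affect the argument.
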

\begin{proof}
\begin{align}
    \hat{\rho}(\ensemble, \channel, \hat{U}, \hat{b}) 
    &=  \mathcal M_{\ensemble, \channel}^{-1}(\hat{U}^\dagger \lvert \hat{b}\rangle\!\langle \hat{b}\rvert \hat{U}) \nn
    &=  \mathcal D_{n, f(\channel)}^{-1}(\hat{U}^\dagger \lvert \hat{b}\rangle\!\langle \hat{b}\rvert \hat{U}) \nn
    &=  \mathcal D_{n, 1/f(\channel)}(\hat{U}^\dagger \lvert \hat{b}\rangle\!\langle \hat{b}\rvert \hat{U}) \nn
    &= \frac{1}{f(\channel)} \hat{U}\lvert\hat{b}\rangle\!\langle\hat{b}\rvert \hat{U} + \Big(1 - \frac{1}{f(\channel)}\Big)\frac{\mathbb{I}}{2^n}.
\end{align}
\end{proof}

\subsection{Derivation of Shadow Seminorm}\label{subsec:global-derivation-of-shadow-norm}
We derive an expression for the shadow seminorm of a traceless observable. Recall that the sample complexity to estimate $\tr(O\rho)$ for some observable $O$ is upper bounded by the shadow seminorm of the traceless part of $O$ (\cref{lemma:upper-bound-on-variance}), which we write as $O_o$. 

\begin{proposition}\label{prop:global_shadow_norm}
Let $\Udag$ be an $n$-qubit 3-design, and let $\channel$ be a linear superoperator, and let $O_o$ be a traceless observable. Then, 
\begin{align}
\norm{O_o}_{\shadow, \ensemble, \channel}^2 = \frac{d(d^2 -1)}{(d+2)(d\beta - \alpha)} \bigg( \frac{(1 + d)\alpha - 2\beta}{d\beta - \alpha} \tr(O_o^2) + 2 \norm{O_o^2}_{\mathrm{sp}}  \bigg)
\end{align}
where 
\begin{align}
d = 2^n,\quad \alpha = \tr(\channel(\mathbb{I})),\quad \beta = \tr(\channel \circ \mathrm{diag}).
\end{align}
Also, if  $\channel$ be a trace-preserving or unital linear superoperator, then the expression simplifies to
\begin{align}
\norm{O_o}_{\shadow, \ensemble, \channel}^2 = \frac{d^2 -1}{(d+2)(\beta - 1)} \bigg( \frac{d + d^2 - 2\beta}{d(\beta - 1)} \tr(O_o^2) + 2 \norm{O_o^2}_{\mathrm{sp}}  \bigg)
\end{align}
where 
\begin{align}
d = 2^n,\quad \quad \beta = \tr(\channel \circ \mathrm{diag}).
\end{align}
\end{proposition}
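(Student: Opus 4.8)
The plan is to reduce the claim to two results already in hand: \cref{lemma:2design} (which identifies the shadow channel of a $2$-design, and \emph{a fortiori} a $3$-design, with a generalized depolarizing channel) and \cref{lem:identity3designsum} (the degree-$3$ twirl identity), and then to carry out a single maximization over density operators. First I would observe that, by \cref{lemma:2design}, $\shadowchannel{\ensemble}{\channel}$ has the form $A \mapsto f(\channel)A + k\,\tr(A)\mathbb{I}$ for real scalars $f(\channel), k$; in particular it acts as multiplication by the real scalar $f(\channel) = \frac{d\beta - \alpha}{d(d^2-1)}$ on the Hilbert--Schmidt-orthogonal subspace of traceless operators, and hence so does its inverse, with scalar $1/f(\channel)$ (here invertibility is exactly $f(\channel)\neq 0$, i.e.\ $d\beta\neq\alpha$, which is assumed implicitly). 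Since a map of the form $A\mapsto gA + c\,\tr(A)\mathbb{I}$ with $g,c\in\Reals$ is Hilbert--Schmidt self-adjoint, and since any adjoint-type operation fixes a real scalar on the traceless block, we get $\mathcal M^{-1,\dagger}_{\ensemble,\channel}(O_o) = O_o/f(\channel)$ for the traceless observable $O_o$. Substituting this into the definition \cref{eq:noisy_shadow_norm} of the shadow seminorm pulls out a factor $1/f(\channel)^2$:
\[
\norm{O_o}_{\shadow,\ensemble,\channel}^2 = \frac{1}{f(\channel)^2}\,\max_{\sigma\in\mathbb{D}_{2^n}}\ \E_{U\sim\ensemble}\sum_{b\in\{0,1\}^n}\bra{b}\channel(U\sigma U^\dagger)\ket{b}\,\bra{b}UO_oU^\dagger\ket{b}^2 .
\]

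Next I would recognize the inner expectation as precisely the left-hand side of \cref{lem:identity3designsum} with $A=\sigma$ and $B=C=O_o$. Applying that identity with $d=2^n$ and using that $O_o$ is traceless ($\tr(O_o)=0$) and $\sigma$ is a state ($\tr(\sigma)=1$), all terms containing a factor $\tr(O_o)$ drop out, and the two surviving pairs collapse, leaving
\[
\E_{U\sim\ensemble}\sum_{b}\bra{b}\channel(U\sigma U^\dagger)\ket{b}\bra{b}UO_oU^\dagger\ket{b}^2 = \frac{(1+d)\alpha-2\beta}{(d-1)d(d+1)(d+2)}\,\tr(O_o^2) + \frac{2(d\beta-\alpha)}{(d-1)d(d+1)(d+2)}\,\tr(\sigma O_o^2).
\]
Only the second term depends on $\sigma$. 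Since $O_o$ is Hermitian, $O_o^2\succeq 0$, so by the variational characterization of the spectral norm recalled in \cref{sec:prelim} (namely $\norm{A}_{\mathrm{sp}}=\max_{\sigma\in\mathbb{D}_d}|\tr(\sigma A)|$ for Hermitian $A$), together with positivity of $O_o^2$, we have $\max_\sigma \tr(\sigma O_o^2)=\norm{O_o^2}_{\mathrm{sp}}$ as long as the coefficient $\tfrac{2(d\beta-\alpha)}{(d-1)d(d+1)(d+2)}$ is nonnegative, i.e.\ $f(\channel)>0$ --- the regime of interest.

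Finally I would collect the constants. Using $f(\channel)=\frac{d\beta-\alpha}{d(d^2-1)}$ one gets $\frac{1}{f(\channel)^2\,(d-1)d(d+1)(d+2)} = \frac{d(d^2-1)}{(d+2)(d\beta-\alpha)^2}$, so
\[
\norm{O_o}_{\shadow,\ensemble,\channel}^2 = \frac{d(d^2-1)}{(d+2)(d\beta-\alpha)^2}\Big(\big((1+d)\alpha-2\beta\big)\tr(O_o^2) + 2(d\beta-\alpha)\norm{O_o^2}_{\mathrm{sp}}\Big),
\]
and pulling a single factor $(d\beta-\alpha)$ out of the parenthesis yields exactly the claimed expression. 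The trace-preserving/unital case then follows by setting $\alpha=\tr(\channel(\mathbb I))=\tr(\mathbb I)=2^n=d$ (as in the second part of \cref{lemma:2design}), so that $d\beta-\alpha = d(\beta-1)$, $(1+d)\alpha-2\beta = d+d^2-2\beta$, and the prefactor simplifies as $\frac{d(d^2-1)}{(d+2)d(\beta-1)}=\frac{d^2-1}{(d+2)(\beta-1)}$.

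\textbf{Main obstacle.} The two steps carrying real content --- \cref{lemma:2design} and \cref{lem:identity3designsum} --- are already established, so the bulk of the remaining work is bookkeeping. The one point that genuinely needs care is the maximization over $\sigma$: one must verify that the $\sigma$-dependent coefficient has the correct sign so that the optimum is $\norm{O_o^2}_{\mathrm{sp}}$ rather than the least eigenvalue of $O_o^2$, and then track the $(d\beta-\alpha)$ factors through the substitution for $f(\channel)$ so that they cancel down to a single power in the denominator. I would also double-check the identification $\mathcal M^{-1,\dagger}_{\ensemble,\channel}(O_o)=O_o/f(\channel)$ by confirming Hilbert--Schmidt self-adjointness of the generalized depolarizing channel directly from the form given by \cref{lemma:2design}.
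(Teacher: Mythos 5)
Your proposal is correct and follows essentially the same route as the paper: replace $\mathcal{M}^{-1,\dagger}_{\ensemble,\channel}(O_o)$ by $O_o/f(\channel)$ using the depolarizing form from \cref{lemma:2design}, apply \cref{lem:identity3designsum} with $A=\sigma$, $B=C=O_o$, maximize $\tr(\sigma O_o^2)$ to get $\norm{O_o^2}_{\mathrm{sp}}$, and collect constants. Your explicit remarks on the self-adjointness of the depolarizing map and on the sign of the $\sigma$-dependent coefficient (needed for the maximum to equal the spectral norm) are points the paper uses implicitly, and they are handled correctly here.
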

\begin{proof}
\begin{align}
\norm{O_o}_{\shadow, \ensemble, \channel}^2 
&= \max_{\sigma \in \mathbb{D}_{2^n}} \underset{U \sim \ensemble}{\E} \sum_{b \in \{0,1\}^n} \langle b \rvert \channel(U \sigma U^\dagger) \lvert b \rangle\!\langle b \rvert U \mathcal{M}^{-1,\dagger}_{\ensemble, \channel}(O_o) U^\dagger \lvert b \rangle^2 \nn 
&= \max_{\sigma \in \mathbb{D}_{2^n}} \underset{U \sim \ensemble}{\E} \sum_{b \in \{0,1\}^n} \langle b \rvert \channel(U \sigma U^\dagger) \lvert b \rangle\!\langle b \rvert U \mathcal{D}_{n, 1/f(\channel)}(O_o) U^\dagger \lvert b \rangle^2 \nn
&= \frac{1}{f(\channel)^2} \max_{\sigma \in \mathbb{D}_{2^n}} \underset{U \sim \ensemble}{\E} \sum_{b \in \{0,1\}^n} \langle b \rvert \channel(U \sigma U^\dagger) \lvert b \rangle\!\langle b \rvert U O_o U^\dagger \lvert b \rangle^2 \nn
&= \frac{1}{f(\channel)^2} \bigg(\frac{(1+d)\alpha - 2\beta}{(d-1)d(d+1)(d+2)}\tr(O_o^2) + \frac{2(d\beta -\alpha)}{(d-1)d(d+1)(d+2)} \norm{O_o^2}_{\mathrm{sp}} \bigg). 
\end{align}
The final equality follows from \cref{lem:identity3designsum}. To simplify further, we get an expression for $1/f(\channel)^2$. 
\begin{align}
    f(\channel) 
    &= \frac{1}{2^{2n}-1}\big(\tr(\channel \circ \mathrm{diag}) - \frac{1}{2^n}\tr(\channel(\mathbb{I})) \big) \nn 
    &= \frac{d\beta - \alpha}{(d-1)d(d+1)} \nn
    \implies \frac{1}{f(\channel)^2} &= \frac{(d-1)^2d^2(d+1)^2}{(d\beta - \alpha)^2}.
\end{align}

Plugging into the expression above, we get

\begin{align}
& \norm{O_o}_{\shadow, \ensemble, \channel}^2 \nonumber\\ 
&= \frac{(d-1)^2d^2(d+1)^2}{(d\beta - \alpha)^2}\bigg(\frac{(1+d)\alpha - 2\beta}{(d-1)d(d+1)(d+2)}\tr(O_o^2) + \frac{2(d\beta -\alpha)}{(d-1)d(d+1)(d+2)} \norm{O_o^2}_{\mathrm{sp}} \bigg)\nn 
&= \frac{d(d^2-1)}{d + 2} \cdot \frac{(1+d)\alpha - 2\beta}{(d\beta - \alpha)^2} \tr(O_o^2) + \frac{d(d^2-1)}{d + 2} \cdot \frac{2}{d\beta - \alpha}  \norm{O_o^2}_{\mathrm{sp}}\nn
&= \frac{d(d^2-1)}{(d + 2)(d\beta - \alpha)} \bigg(\frac{(1+d)\alpha - 2\beta}{d\beta - \alpha} \tr(O_o^2) + 2 \norm{O_o^2}_{\mathrm{sp}}\bigg).
\end{align}

Finally, if $\channel$ is trace-preserving or unital, then $\alpha = \tr(\channel(\mathbb I)) = d$. The second part of the proposition follows from substituting $\alpha = d$. 
\end{proof}

Building from \cref{prop:global_shadow_norm}, one can get looser bounds that are more convenient to work with. 
 
\begin{corollary}\label{prop:global_shadow_norm_bounds}
Let $\ensemble$ be an $n$-qubit 3-design, let $\channel$ be a trace-preserving or unital linear superoperator, and let $O_o$ be a traceless observable. Then, 
\begin{align}
\frac{(2^{n} - 1)^2}{(\beta - 1)^2}  \tr(O_o^2) \leq \norm{O_o}_{\shadow, \ensemble, \channel} \leq \frac{3(2^{n}-1)^2}{(\beta - 1)^2}\tr(O_o^2)
\leq \frac{3(2^n-1)^2}{(\beta - 1)^2}\tr(O^2).
\end{align}
where $\beta = \tr(\channel \circ \mathrm{diag})$.
\end{corollary}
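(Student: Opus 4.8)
The plan is to combine the exact closed form from \cref{prop:global_shadow_norm} with two elementary estimates for the single state-dependent quantity that appears in it, $\norm{O_o^2}_{\mathrm{sp}}$. Writing $d = 2^n$, the trace-preserving/unital branch of \cref{prop:global_shadow_norm} gives
\[
\norm{O_o}_{\shadow, \ensemble, \channel}^2 = \frac{d^2 - 1}{(d+2)(\beta - 1)}\left(\frac{d^2 + d - 2\beta}{d(\beta - 1)}\,\tr(O_o^2) + 2\,\norm{O_o^2}_{\mathrm{sp}}\right),
\]
and since the shadow channel is invertible we have $\beta \neq 1$ (\cref{claim:invertibility_global_clifford}), while $\beta = \tr(\channel\circ\diag) \in [0,d]$ because $\channel$ is a quantum channel (\cref{claim:fidelity-parameter-bounds}). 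I would run the argument in the regime $\beta > 1$, where both coefficients above are positive; the case $0 \le \beta < 1$, where the coefficients change sign and the $\max_\sigma$ in the seminorm picks out $\lambda_{\min}(O_o^2) \in [0, \tfrac1d\tr(O_o^2)]$ rather than $\norm{O_o^2}_{\mathrm{sp}}$, is then entirely analogous and gives the same two bounds.

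The two elementary facts needed are that for a traceless Hermitian $O_o$ on $\mathbb C^d$ with eigenvalues $\lambda_1,\dots,\lambda_d$,
\[
\tfrac1d\tr(O_o^2) = \tfrac1d\sum_i \lambda_i^2 \;\le\; \max_i \lambda_i^2 = \norm{O_o^2}_{\mathrm{sp}} \;\le\; \sum_i \lambda_i^2 = \tr(O_o^2),
\]
the left inequality because a maximum dominates an average and the right because a maximum of nonnegative terms is dominated by their sum. For the lower bound I would substitute $\norm{O_o^2}_{\mathrm{sp}} \ge \tfrac1d\tr(O_o^2)$ into the display, place the two $\tr(O_o^2)$-coefficients over the common denominator $d(d+2)(\beta-1)^2$, and simplify the numerator to $(d^2-1)\bigl((d^2+d-2\beta)+2(\beta-1)\bigr) = (d^2-1)(d^2+d-2) = (d-1)^2(d+1)(d+2)$, using $d^2+d-2=(d-1)(d+2)$. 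This yields $\norm{O_o}_{\shadow,\ensemble,\channel}^2 \ge \frac{(d-1)^2(d+1)}{d(\beta-1)^2}\tr(O_o^2) \ge \frac{(d-1)^2}{(\beta-1)^2}\tr(O_o^2)$, the last step because $(d+1)/d>1$.

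For the upper bound I would instead substitute $\norm{O_o^2}_{\mathrm{sp}} \le \tr(O_o^2)$ and combine in the same way; the numerator over $d(d+2)(\beta-1)^2$ becomes $(d^2-1)\bigl((d^2+d-2\beta)+2d(\beta-1)\bigr) = (d^2-1)(d-1)(d+2\beta) = (d-1)^2(d+1)(d+2\beta)$, so it remains to verify $\frac{(d+1)(d+2\beta)}{d(d+2)} \le 3$, i.e.\ $2\beta(d+1) \le d(2d+5)$, which holds since $\beta \le d$ and $2d(d+1) = 2d^2+2d \le 2d^2+5d$. The last inequality of the corollary is the Hilbert--Schmidt-Pythagorean identity for the orthogonal decomposition $O = O_o + \frac{\tr(O)}{d}\mathbb I$: $\tr(O^2) = \tr(O_o^2) + \frac{\tr(O)^2}{d} \ge \tr(O_o^2)$.

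The only step requiring genuine care is keeping track of the sign of $\beta - 1$: the $\norm{O_o^2}_{\mathrm{sp}}$ in \cref{prop:global_shadow_norm} arises from a $\max_\sigma$ that equals the spectral norm exactly when its coefficient $d\beta - \alpha = d(\beta-1)$ is nonnegative, so the regimes $\beta>1$ and $\beta<1$ must be separated; one should also confirm that nothing stronger than $\beta \le d$ (true for every quantum channel) is used in the upper-bound step. Beyond that the result is bookkeeping: it follows from \cref{prop:global_shadow_norm} together with the two-sided estimate $\tfrac1d\tr(O_o^2) \le \norm{O_o^2}_{\mathrm{sp}} \le \tr(O_o^2)$.
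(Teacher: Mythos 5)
Your proposal is correct and follows essentially the same route as the paper's own proof in \cref{app:proof-of-global-shadow-norm-bounds}: substitute the two-sided estimate $2^{-n}\tr(O_o^2)\le\norm{O_o^2}_{\mathrm{sp}}\le\tr(O_o^2)$ into the closed form of \cref{prop:global_shadow_norm}, simplify over a common denominator (using $\beta\le 2^n$ for the upper bound), and finish with $\tr(O_o^2)\le\tr(O^2)$. Your attention to the sign of $\beta-1$ is a genuine refinement: the paper's proof (and the equality in \cref{prop:global_shadow_norm} with $\norm{O_o^2}_{\mathrm{sp}}$, which comes from the $\max_\sigma$ picking out the largest eigenvalue only when $d\beta-\alpha\ge 0$) implicitly works in the regime $\beta>1$, and your sketch of the $\beta<1$ case, where the same two bounds indeed survive, closes a gap the paper does not address.
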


The proof is straightforward. We include it in \cref{app:proof-of-global-shadow-norm-bounds} for completeness.

Combining \cref{thm:general-theorem} and \cref{prop:global_shadow_norm_bounds} yields sample complexity bounds on the classical shadows protocol when the unitary ensemble is the Clifford group (or, any unitary $3$-design).

\begin{corollary}
\label{cor:sampleComplexityGlobal}
Let $\{O_i\}_{i=1}^M$ be a collection of $M$ observables. Let $\ensemble$ be a unitary 3-design (e.g., the Clifford group).
Let $\channel: \mathcal{L}(\mathbb C^{2^n}) \rightarrow \mathcal{L}(\mathbb C^{2^n})$ be a quantum channel such that $\tr(\channel \circ \mathrm{diag}) \neq 1$. 
The sample complexity $N_\mathrm{tot}$ to estimate the linear target functions $\{\tr(O_i \rho)\}_{i=1}^M$ of an $n$-qubit state $\rho$ within error $\eps$ and failure probability $\delta$ when the unitary ensemble $\ensemble$ is subject to the error channel $\channel$ is  

\[
N_\mathrm{tot} \leq \frac{204(2^n -1)^2\log(2M/\delta)}{(\beta - 1)^2\eps^2} \max_{1 \leq i \leq M} \tr(O_i^2),
\]
where $\beta = \tr(\channel \circ \mathrm{diag})$.
\end{corollary}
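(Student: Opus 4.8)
The plan is to simply chain together \cref{thm:general-theorem} and \cref{prop:global_shadow_norm_bounds}. First I would observe that a unitary $3$-design is in particular a $2$-design, so \cref{claim:invertibility_global_clifford} applies: the hypothesis $\tr(\channel\circ\diag)\neq 1$ is exactly the condition under which the shadow channel $\mathcal M_{\ensemble,\channel}=\mathcal D_{n,f(\channel)}$ is invertible (via \cref{lemma:2design}). This invertibility is precisely what is required to invoke \cref{thm:general-theorem}, which then produces estimators $\hat o_1,\dots,\hat o_M$ with the claimed confidence $1-\delta$ using a size-$(NK)$ classical shadow, where $K=2\log(2M/\delta)$ and $N=\frac{34}{\eps^2}\max_{1\le i\le M}\norm{O_i-\frac{1}{2^n}\tr(O_i)\mathbb I}_{\shadow,\ensemble,\channel}^2$. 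Writing $O_{i,o}=O_i-\frac{1}{2^n}\tr(O_i)\mathbb I$ for the traceless part of $O_i$, this gives $N_\mathrm{tot}=NK=\frac{68\log(2M/\delta)}{\eps^2}\max_{1\le i\le M}\norm{O_{i,o}}_{\shadow,\ensemble,\channel}^2$.

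The second step is to bound each $\norm{O_{i,o}}_{\shadow,\ensemble,\channel}^2$. Note that $O_{i,o}$ is Hermitian (since $\tr(O_i)$ is real) and traceless, so it is a traceless observable; also, any quantum channel $\channel$ is trace-preserving, and a $3$-design is a $2$-design, so all hypotheses of \cref{prop:global_shadow_norm} and hence of \cref{prop:global_shadow_norm_bounds} are met. Applying \cref{prop:global_shadow_norm_bounds} yields $\norm{O_{i,o}}_{\shadow,\ensemble,\channel}^2\le \frac{3(2^n-1)^2}{(\beta-1)^2}\tr(O_{i,o}^2)$ with $\beta=\tr(\channel\circ\diag)$. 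I would then pass from $O_{i,o}$ to $O_i$ using $\tr(O_{i,o}^2)\le\tr(O_i^2)$, which holds because $O_{i,o}$ is the component of $O_i$ orthogonal to $\mathbb I$ in the Hilbert--Schmidt inner product, so $\tr(O_i^2)=\tr(O_{i,o}^2)+\frac{1}{2^n}\tr(O_i)^2\ge\tr(O_{i,o}^2)$.

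Combining the two steps, $N_\mathrm{tot}\le \frac{68\log(2M/\delta)}{\eps^2}\cdot\frac{3(2^n-1)^2}{(\beta-1)^2}\max_{1\le i\le M}\tr(O_i^2)=\frac{204(2^n-1)^2\log(2M/\delta)}{(\beta-1)^2\eps^2}\max_{1\le i\le M}\tr(O_i^2)$, which is the claimed bound. There is no substantive obstacle: all of the real work lives in \cref{thm:general-theorem}, \cref{claim:invertibility_global_clifford}, and \cref{prop:global_shadow_norm,prop:global_shadow_norm_bounds}. The only points needing any care are verifying that the design and trace-preservation hypotheses of the cited results hold here, and checking that replacing $O_{i,o}$ by $O_i$ only loosens the bound.
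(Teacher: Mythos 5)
Your proposal is correct and follows exactly the route the paper intends: the paper states this corollary as an immediate consequence of combining \cref{thm:general-theorem} with \cref{prop:global_shadow_norm_bounds} (with invertibility of the shadow channel supplied by \cref{claim:invertibility_global_clifford} since a $3$-design is a $2$-design), and your constant bookkeeping ($68\cdot 3 = 204$) and the step $\tr(O_{i,o}^2)\le\tr(O_i^2)$ match the paper's argument.
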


\subsection{Examples}\label{subsec:global-examples}
\cref{eq:clifford_global_noise} establishes that if $\mathcal U$ is the Clifford group and $\channel$ is an arbitrary quantum channel, then the resulting shadow channel $\mathcal{M}_{\mathcal{C}_n, \channel}$ is always a depolarizing channel (\cref{def:depolarizing-channel}) with a depolarizing parameter that depends on the quantum channel $\channel$. 
\cref{cor:sampleComplexityGlobal} establishes the sample complexity in this scenario. 
We now apply our results to derive expressions for the shadow channel, inverse shadow channel, classical shadow, shadow seminorm, and sample complexity for the classical shadows protocol with the Clifford group and specific quantum channels.

\subsubsection{Noiseless Case}\label{subsec:global_noiseless_case}
We begin with a basic example, the case where the quantum channel is the identity channel, to show that the results from \cite{huang2020predicting} can be recovered. For reference, see Eqs.~(S37) through (S43) in \cite{huang2020predicting}). 

\begin{claim}
\label{claim:noiselessShadows}
In the noiseless case, 
\begin{enumerate}
    \item \label{item:noiseless1} $f(\mathbb I) = \frac{1}{2^n + 1}.$
\item \label{item:noiseless2} $\mathcal M_{\mathcal C_n, \mathbb I} = \mathcal{D}_{n, 1/2^n + 1}$.
\item \label{item:noiseless3} $\mathcal M_{\mathcal C_n, \mathbb I}^{-1} = \mathcal{D}_{n, 2^n + 1}.$
\item \label{item:noiseless4} The classical shadow can be written as $\hat{\rho}(\mathcal C_n, \mathbb I, \hat{U}, \hat b) = (2^n + 1) \hat{U}^\dagger \lvert \hat b\rangle\!\langle \hat b \rvert \hat U - \mathbb I$.
\item \label{item:noiseless5}
$\norm{O_o}_{\shadow, \mathcal C_n, \mathbb I}^2 = \frac{2^n + 1}{2^n + 2}\big(\tr(O_o^2) + 2 \norm{O_o^2}_\mathrm{sp}\big).$
\end{enumerate}

\end{claim}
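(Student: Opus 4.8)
The plan is to obtain all five items as immediate specializations of the general-noise results already proved in this section, with the error channel taken to be $\channel = \mathbb I$. The single computation that drives everything is the evaluation $\tr(\mathbb I\circ\diag) = 2^n$: since $\diag(\ketbra ii) = \ketbra ii$, we have $\tr(\mathbb I\circ\diag) = \sum_{i=1}^{2^n}\bra i\diag(\ketbra ii)\ket i = \sum_{i=1}^{2^n}1 = 2^n$. I would record this first, and note in passing that $2^n\neq 1$ for $n\geq 1$, so that \cref{claim:invertibility_global_clifford} applies and the shadow channel is invertible in this case.

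For item~\ref{item:noiseless1}, substitute $\beta = \tr(\mathbb I\circ\diag) = 2^n$ into the trace-preserving form of $f$ from \cref{lemma:2design}, namely $f(\channel) = (\tr(\channel\circ\diag)-1)/(2^{2n}-1)$, and simplify $(2^n-1)/(2^{2n}-1) = (2^n-1)/\big((2^n-1)(2^n+1)\big) = 1/(2^n+1)$. Item~\ref{item:noiseless2} is then \cref{eq:clifford_global_noise} (the Clifford group is a $2$-design and $\mathbb I$ is trace-preserving) with $f(\mathbb I)$ from item~\ref{item:noiseless1}. Item~\ref{item:noiseless3} follows from \cref{claim:invertibility_global_clifford}: $\mathcal M_{\mathcal C_n,\mathbb I}^{-1} = \mathcal D_{n,1/f(\mathbb I)} = \mathcal D_{n,2^n+1}$. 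Item~\ref{item:noiseless4} plugs $1/f(\mathbb I) = 2^n+1$ into \cref{claim:global_clifford_classical_shadow}, giving $\hat\rho = (2^n+1)\hat U^\dagger\ketbra{\hat b}{\hat b}\hat U + \big(1-(2^n+1)\big)\tfrac{\mathbb I}{2^n} = (2^n+1)\hat U^\dagger\ketbra{\hat b}{\hat b}\hat U - \mathbb I$. Item~\ref{item:noiseless5} substitutes $\beta = 2^n$ into the trace-preserving form of \cref{prop:global_shadow_norm}: the prefactor $\tfrac{d^2-1}{(d+2)(\beta-1)}$ collapses to $\tfrac{d+1}{d+2}$ and the coefficient $\tfrac{d+d^2-2\beta}{d(\beta-1)}$ collapses to $1$ (with $d=2^n$), yielding $\norm{O_o}_{\shadow,\mathcal C_n,\mathbb I}^2 = \tfrac{2^n+1}{2^n+2}\big(\tr(O_o^2)+2\norm{O_o^2}_\mathrm{sp}\big)$.

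There is no substantive obstacle here — the content is entirely bookkeeping. The only points requiring care are (i) invoking the \emph{trace-preserving} specializations of \cref{lemma:2design} and \cref{prop:global_shadow_norm} (which use $\tr(\channel(\mathbb I)) = 2^n$) rather than the general $\alpha,\beta$ forms, (ii) performing the cancellations of the $(2^n-1)$ and $d$ factors in items~\ref{item:noiseless1} and~\ref{item:noiseless5} correctly, and (iii) cross-checking the resulting expressions against Eqs.~(S37)--(S43) of \cite{huang2020predicting} to confirm agreement of conventions (in particular the placement of the adjoint on $\hat U$ in item~\ref{item:noiseless4}).
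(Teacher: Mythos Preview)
Your proposal is correct and follows essentially the same route as the paper's own proof: compute $\tr(\mathbb I\circ\diag)=2^n$, then specialize \cref{lemma:2design}/\cref{eq:clifford_global_noise}, \cref{claim:invertibility_global_clifford}, \cref{claim:global_clifford_classical_shadow}, and \cref{prop:global_shadow_norm} accordingly. The only differences are cosmetic---you additionally note invertibility explicitly and write out the cancellations in slightly more detail.
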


\begin{proof}
(\ref{item:noiseless1}) follows from a simple calculation:
\begin{align}
    f(\mathbb I) = \frac{\tr(\mathbb I \circ \mathrm{diag})-1}{2^{2n}-1} = \frac{2^n-1}{2^{2n}-1} 
    = \frac{1}{2^n + 1}.
\end{align}
(\ref{item:noiseless2}) and (\ref{item:noiseless3}) follow from \cref{eq:clifford_global_noise}.
To prove (\ref{item:noiseless4}), apply \cref{claim:global_clifford_classical_shadow}:
\begin{align}
\hat{\rho}(\mathcal C_n, \mathbb I, \hat{U}, \hat b) 
&=  (2^n + 1) \hat{U}^\dagger \lvert \hat b\rangle\!\langle \hat b \rvert \hat U + (1 - 2^n + 1) \frac{\mathbb I}{2^n} \nn
&= (2^n + 1) \hat{U}^\dagger \lvert \hat b\rangle\!\langle \hat b \rvert \hat U - \mathbb I.
\end{align}
To prove (\ref{item:noiseless5}), apply \cref{prop:global_shadow_norm} with $\beta = \tr(\mathbb I \circ \mathrm{diag}) = 2^n = d$. Then,  
\begin{align}
    \norm{O_o}^2_{\shadow, \mathcal C_n, \mathbb I} 
    &= \frac{d^2 - 1}{(d+2)(d-1)}\Bigg( \frac{d + d^2 -2d}{d(d-1)} \tr(O_o^2) + 2 \norm{O_o^2}_\mathrm{sp} \bigg) \nn
    &= \frac{2^n + 1}{2^n + 2}\bigg(\tr(O_o^2) + 2 \norm{O_o^2}_\mathrm{sp}\bigg).
\end{align}
\end{proof}

\begin{remark}
One can verify that the dephasing channel is an inconsequential noise channel (see \cref{claim:inconsequential_noise} of \cref{app:inconsequential-noise}). As such, these results also hold when $\channel$ is the dephasing channel.
\end{remark}

\subsubsection{Depolarizing Channel}\label{subsec:global_depolarizing}

We derive expressions for the shadow channel, inverse shadow channel, the classical shadow, and the shadow seminorm when the Clifford group $\mathcal C_n$ is subject to depolarizing noise with depolarizing parameter $f$ (\cref{def:depolarizing-channel}).

\begin{claim}
\label{claim:depolar_channel}
If the unitary ensemble used in the classical shadows protocol is the Clifford group and is subject to depolarizing noise with depolarizing parameter $f \in [0,1]$, then 
\begin{enumerate}
    \item \label{item:dep1} $f( \mathcal{D}_{n, f}) = \frac{f}{2^n + 1}$.
    \item \label{item:dep2} $\mathcal M_{\mathcal C_n, \mathcal{D}_{n, f}} = \mathcal{D}_{n, f/2^n + 1}$.
    \item \label{item:dep3}  $\mathcal M_{\mathcal C_n, \mathcal D_{n,f}}^{-1} = \mathcal{D}_{n, 2^n + 1/ f}.$ 
    \item \label{item:dep4} The classical shadow can be written as 
$\hat{\rho}(\mathcal C_n, \mathcal D_{n, f}, \hat{U}, \hat b) = \frac{2^n + 1}{f} \hat{U}^\dagger \lvert \hat b\rangle\!\langle \hat b \rvert \hat U - (1 - \frac{2^n +1}{f}) \frac{\mathbb I}{2^n}$.
\item \label{item:dep5}
Let $O \in \mathbb H_{2^n}$. Then,$ 
\norm{O - \frac{1}{2^n}\tr(O) \mathbb I}^2_{\shadow, \mathcal C_n, \mathcal D_{n,f}} \leq \frac{3}{f^2}\tr(O^2).$
\end{enumerate}

\end{claim}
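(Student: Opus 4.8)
The entire claim is a specialization of the general Clifford-with-noise machinery to one specific channel, so the plan is: first compute the single scalar that all of those general results are phrased in terms of, namely $\beta = \tr(\mathcal{D}_{n,f}\circ\diag)$, and then substitute into \cref{eq:clifford_global_noise}, \cref{claim:invertibility_global_clifford}, \cref{claim:global_clifford_classical_shadow}, and \cref{prop:global_shadow_norm_bounds}.

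First I would compute $\beta$. Using the identity $\tr(\channel\circ\diag) = \sum_{i=1}^{2^n}\bra{i}\channel(\ketbra{i}{i})\ket{i}$ recorded in \cref{sec:linear_operators_and_quantum_channels}, together with $\mathcal{D}_{n,f}(\ketbra{i}{i}) = f\,\ketbra{i}{i} + (1-f)\tfrac{\mathbb{I}}{2^n}$ (since $\tr(\ketbra{i}{i})=1$), each diagonal term equals $\bra{i}\mathcal{D}_{n,f}(\ketbra{i}{i})\ket{i} = f + \tfrac{1-f}{2^n}$, and summing over the $2^n$ values of $i$ gives $\beta = 2^n f + (1-f) = 1 + f(2^n-1)$. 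Two immediate consequences: $\beta - 1 = f(2^n-1)$, and $\beta \neq 1$ exactly when $f\neq 0$, so by \cref{claim:invertibility_global_clifford} the shadow channel is invertible whenever $f\in(0,1]$, which makes all the formulas below well-defined.

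Parts (\ref{item:dep1})--(\ref{item:dep4}) are then pure substitution. For (\ref{item:dep1}), $\mathcal{D}_{n,f}$ is trace-preserving, so \cref{eq:clifford_global_noise} gives $f(\mathcal{D}_{n,f}) = \frac{\beta-1}{2^{2n}-1} = \frac{f(2^n-1)}{(2^n-1)(2^n+1)} = \frac{f}{2^n+1}$. Parts (\ref{item:dep2}) and (\ref{item:dep3}) follow from $\mathcal{M}_{\mathcal{C}_n,\channel} = \mathcal{D}_{n,f(\channel)}$ (\cref{eq:clifford_global_noise}) together with $\mathcal{D}_{n,g}^{-1} = \mathcal{D}_{n,1/g}$, inserting $f(\mathcal{D}_{n,f}) = f/(2^n+1)$. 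Part (\ref{item:dep4}) follows from \cref{claim:global_clifford_classical_shadow} after substituting $1/f(\mathcal{D}_{n,f}) = (2^n+1)/f$.

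Finally, for (\ref{item:dep5}) I would apply \cref{prop:global_shadow_norm_bounds} to the traceless observable $O_o = O - \tfrac{1}{2^n}\tr(O)\mathbb{I}$, whose upper bound reads $\norm{O_o}^2_{\shadow,\mathcal{C}_n,\mathcal{D}_{n,f}} \leq \frac{3(2^n-1)^2}{(\beta-1)^2}\tr(O^2)$; since $(\beta-1)^2 = f^2(2^n-1)^2$, the prefactor collapses to $3/f^2$, giving exactly the stated inequality. I do not expect a real obstacle here: the argument is entirely a matter of chasing the earlier lemmas and corollaries, and the only computation of any substance is $\beta = 1 + f(2^n-1)$ in the first step. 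The only mild subtlety is the boundary case $f=0$, where $\beta = 1$ and the shadow channel degenerates (is not invertible), so the displayed formulas implicitly presuppose $f\in(0,1]$.
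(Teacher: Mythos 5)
Your proposal is correct and follows essentially the same route as the paper: compute $\beta=\tr(\mathcal D_{n,f}\circ\diag)=2^nf+1-f$ by direct evaluation on computational basis states, obtain $f(\mathcal D_{n,f})=f/(2^n+1)$, and then read off parts (2)--(5) from \cref{eq:clifford_global_noise}, \cref{claim:global_clifford_classical_shadow}, and \cref{prop:global_shadow_norm_bounds}. Your added remark that invertibility (and hence the formulas) requires $f\neq 0$ is a correct observation the paper leaves implicit.
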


\begin{proof}
First we prove (\ref{item:dep1}):
\begin{align}
    \tr(\mathcal{D}_{n,f} \circ \mathrm{diag}) 
    &= \sum_{b\in\{0,1\}^n} \langle b \rvert \mathcal{D}_{n,f}(\lvert b\rangle\!\langle b\rvert) \lvert b \rangle \nn
    &= \sum_{b\in\{0,1\}^n} \langle b \rvert \big(f\lvert b\rangle\!\langle b\rvert + (1-f) \frac{\mathbb I}{2^n} \big) \lvert b \rangle \nn
    &= \sum_{b\in\{0,1\}^n}  f + (1-f) \frac{1}{2^n} \nn
    &= 2^nf + 1-f.
\end{align}
Then, 
\begin{align}
    f(\channel)  
    &= \frac{\tr(\mathcal{D}_{n,f} \circ \mathrm{diag})-1}{2^{2n}-1} \nn
    &= \frac{2^nf +1-f -1}{2^{2n}-1} \nn
    &= \frac{f(2^n -1)}{2^{2n}-1} \nn
    &= \frac{f}{2^{n}+1}.
\end{align}

(\ref{item:dep2}) and (\ref{item:dep3}) follow from \cref{eq:clifford_global_noise}, and 
(\ref{item:dep5}) follows from \cref{prop:global_shadow_norm_bounds}.
To prove (\ref{item:dep4}), apply \cref{claim:global_clifford_classical_shadow}:
\begin{align}
\hat{\rho}(\mathcal C_n, \mathbb I, \hat{U}, \hat b) 
&= \frac{2^n + 1}{f} \hat{U}^\dagger \lvert \hat b\rangle\!\langle \hat b \rvert \hat U - \Big(1 - \frac{2^n +1}{f}\Big) \frac{\mathbb I}{2^n}.
\end{align}
\end{proof}

With a bound on the shadow seminorm in this setting, the sample complexity of the protocol follows.

\begin{corollary}
The sample complexity $N_\mathrm{tot}$ to estimate a collection of $M$ linear target functions $\tr(O_i \rho)$ within error $\eps$ and failure probability $\delta$ when the unitary ensemble $\ensemble$ is subject to depolarizing noise $ \mathcal{D}_{n, f} : A \mapsto fA + (1-f)\frac{\mathbb I}{2^n}$ is  

\[
N_\mathrm{tot} \leq   \frac{204\log(2M/\delta)}{f^2\eps^2} \max_{1 \leq i \leq M} \tr(O_i^2).
\]
\end{corollary}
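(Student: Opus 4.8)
The plan is to obtain the stated bound as an immediate specialization of the general performance guarantee in \cref{thm:general-theorem} to the case $\ensemble = \mathcal C_n$ (or any unitary $3$-design) and $\channel = \mathcal D_{n,f}$, using the shadow seminorm estimate already recorded in \cref{claim:depolar_channel}. Recall that \cref{thm:general-theorem} asserts that $K = 2\log(2M/\delta)$ and $N = \frac{34}{\eps^2}\max_{1\le i\le M}\norm{O_i - \frac{1}{2^n}\tr(O_i)\mathbb I}^2_{\shadow,\ensemble,\channel}$ together suffice, so that $N_\mathrm{tot} = NK$; the only ingredient that depends on the ensemble and the noise is the shadow seminorm of the traceless part of each $O_i$. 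By item~(\ref{item:dep5}) of \cref{claim:depolar_channel} — itself obtained from \cref{prop:global_shadow_norm_bounds} with $\beta = \tr(\mathcal D_{n,f}\circ\diag)$ — we have $\norm{O_i - \frac{1}{2^n}\tr(O_i)\mathbb I}^2_{\shadow,\mathcal C_n,\mathcal D_{n,f}}\le \frac{3}{f^2}\tr(O_i^2)$ for every $i$.

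Substituting this into $N$ and multiplying by $K$ gives $N_\mathrm{tot} = NK \le \frac{34}{\eps^2}\cdot\frac{3}{f^2}\max_{1\le i\le M}\tr(O_i^2)\cdot 2\log(2M/\delta) = \frac{204\log(2M/\delta)}{f^2\eps^2}\max_{1\le i\le M}\tr(O_i^2)$, where $204 = 2\cdot 34\cdot 3$, which is exactly the claim. Alternatively, the corollary follows by quoting \cref{cor:sampleComplexityGlobal} directly: for $\channel = \mathcal D_{n,f}$ one has, as in the proof of item~(\ref{item:dep1}) of \cref{claim:depolar_channel}, $\beta = \tr(\mathcal D_{n,f}\circ\diag) = 2^n f + 1 - f$, hence $\beta - 1 = f(2^n - 1)$ and $\frac{(2^n-1)^2}{(\beta-1)^2} = \frac{1}{f^2}$; inserting this factor into the bound of \cref{cor:sampleComplexityGlobal} reproduces the statement.

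There is no genuine analytic obstacle here; the proof is purely a matter of combining results established earlier in this section. The only points needing attention are the arithmetic of the constant (carrying the factor $2$ from $K$ and the factor $3$ from the depolarizing shadow seminorm bound through to $204$) and checking the invertibility hypothesis of \cref{thm:general-theorem}: since $f \in (0,1]$ we have $\beta = 2^n f + 1 - f \neq 1$, so $\mathcal M_{\mathcal C_n,\mathcal D_{n,f}}$ is invertible by \cref{claim:invertibility_global_clifford}; the degenerate case $f = 0$, for which the shadow channel is singular and the bound is vacuous, should be excluded.
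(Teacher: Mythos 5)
Your proof is correct and follows exactly the paper's intended route: the corollary is obtained by plugging the bound $\norm{O_i - \tfrac{1}{2^n}\tr(O_i)\mathbb I}^2_{\shadow,\mathcal C_n,\mathcal D_{n,f}} \le \tfrac{3}{f^2}\tr(O_i^2)$ from \cref{claim:depolar_channel} into \cref{thm:general-theorem} (equivalently, specializing \cref{cor:sampleComplexityGlobal} with $\beta - 1 = f(2^n-1)$), yielding the constant $204 = 2\cdot 34\cdot 3$. Your added remarks on invertibility ($f\neq 0$) and the constant bookkeeping are consistent with the paper.
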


\subsubsection{Amplitude Damping Channel}\label{subsec:global_amplitude_damping}

We derive expressions for the shadow channel, inverse shadow channel, the classical shadow, and the shadow seminorm when the Clifford group $\mathcal C_n$ is subject to the amplitude damping channel  (\cref{def:amplitude-damping-channel}).

\begin{claim}
\label{claim:amplitude_damping}
If the unitary ensemble used in the classical shadows protocol is the Clifford group and is subject to amplitude damping noise with parameter $p \in [0,1]$, then 
\begin{enumerate}
    \item \label{item:ad1}  $f(\mathrm{AD}_{n, p}) = \frac{(1+p)^n -1}{2^{2n} + 1}.$
    \item \label{item:ad2}  $\mathcal M_{\mathcal C_n, \mathrm{AD}_{n,p}} = \mathcal{D}_{n, ((1+p)^n - 1)/(2^{2n} - 1)}$.
    \item \label{item:ad3} $\mathcal M_{\mathcal C_n, \mathrm{AD}_{n,p}}^{-1} = \mathcal{D}_{n, (2^{2n} - 1)/ ((1+p)^n - 1)}$.
    \item \label{item:ad4} The classical shadow can be written as 
$\hat{\rho}(\mathcal C_n, \mathrm{AD}_{n,p}, \hat{U}, \hat b) = \frac{2^n + 1}{f} \hat{U}^\dagger \lvert \hat b\rangle\!\langle \hat b \rvert \hat U - (1 - \frac{2^n +1}{f}) \frac{\mathbb I}{2^n}$.
\item \label{item:ad5} 
Let $O \in \mathbb H_{2^n}$. Then, $\norm{O - \frac{1}{2^n}\tr(O) \mathbb I}^2_{\shadow, \mathcal C_n, \mathrm{AD}_{n,p}} \leq
\frac{3(2^{n}-1)^2}{((1+p)^n - 1)^2}\tr(O^2).$
\end{enumerate}
\end{claim}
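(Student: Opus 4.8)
The plan is to reduce the entire claim to a single scalar computation, the value $\beta = \tr(\mathrm{AD}_{n,p}\circ\diag)$. Every one of the five assertions is then an immediate substitution into results already established for an arbitrary quantum channel acting on a unitary $2$- or $3$-design: parts (1)--(3) come from \cref{lemma:2design} (equivalently \cref{eq:clifford_global_noise}), part (4) from \cref{claim:global_clifford_classical_shadow}, and part (5) from \cref{prop:global_shadow_norm_bounds}. So, exactly as in the depolarizing example, the only real work is evaluating $\beta$.

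To compute $\beta$ I would exploit the tensor-product structure of both $\mathrm{AD}_{n,p}$ and $\diag$. By \cref{def:amplitude-damping-channel}, $\mathrm{AD}_{n,p} = \mathrm{AD}_{1,p}^{\otimes n}$, and the $n$-qubit completely dephasing channel factorizes as $\diag = \diag_1^{\otimes n}$ since $\ketbra{b}{b} = \bigotimes_{i=1}^n \ketbra{b_i}{b_i}$; hence $\mathrm{AD}_{n,p}\circ\diag = (\mathrm{AD}_{1,p}\circ\diag_1)^{\otimes n}$. The superoperator trace is multiplicative over tensor products, $\tr(\mathcal{E}\otimes\mathcal{F}) = \tr(\mathcal{E})\tr(\mathcal{F})$, which follows directly from $\tr(\mathcal{E}) = \sum_{ij}\langle E_{ij},\mathcal{E}(E_{ij})\rangle$ together with the facts that matrix units of a tensor-product space are tensor products of matrix units and that the Hilbert--Schmidt inner product factorizes. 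Therefore $\beta = \bigl(\tr(\mathrm{AD}_{1,p}\circ\diag_1)\bigr)^n$, and a short single-qubit calculation, using $\mathrm{AD}_{1,p}(\ketbra{0}{0}) = \ketbra{0}{0}$ and $\mathrm{AD}_{1,p}(\ketbra{1}{1}) = (1-p)\ketbra{0}{0} + p\ketbra{1}{1}$, gives $\tr(\mathrm{AD}_{1,p}\circ\diag_1) = \bra{0}\ketbra{0}{0}\ket{0} + \bra{1}\bigl((1-p)\ketbra{0}{0} + p\ketbra{1}{1}\bigr)\ket{1} = 1 + p$. Hence $\beta = (1+p)^n$.

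With $\beta$ in hand, part (1) reads $f(\mathrm{AD}_{n,p}) = (\beta - 1)/(2^{2n}-1) = \frac{(1+p)^n - 1}{2^{2n}-1}$; for $p>0$ this is nonzero, so by \cref{claim:invertibility_global_clifford} the shadow channel is invertible and parts (2), (3) follow from $\mathcal M_{\mathcal C_n,\mathcal{E}} = \mathcal D_{n,f(\mathcal{E})}$ and $\mathcal M_{\mathcal C_n,\mathcal{E}}^{-1} = \mathcal D_{n,1/f(\mathcal{E})}$. Part (4) follows by substituting this value of $f$ into \cref{claim:global_clifford_classical_shadow}, and part (5) follows from \cref{prop:global_shadow_norm_bounds} with $\beta = (1+p)^n$, which yields $\norm{O - \frac{1}{2^n}\tr(O)\mathbb{I}}^2_{\shadow,\mathcal C_n,\mathrm{AD}_{n,p}} \leq \frac{3(2^n-1)^2}{((1+p)^n - 1)^2}\tr(O^2)$.

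There is no substantive obstacle here; the only points that warrant care are (i) confirming that both $\diag$ and the superoperator trace respect the $n$-fold tensor decomposition, so that $\beta$ genuinely equals the $n$-th power of its single-qubit value, and (ii) reading off the correct diagonal matrix elements of $\mathrm{AD}_{1,p}$ from \cref{def:amplitude-damping-channel}. For consistency with $f(\mathcal{E}) = (\beta-1)/(2^{2n}-1)$, the denominator appearing in part (1) is $2^{2n}-1$.
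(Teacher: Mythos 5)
Your proof is correct and follows essentially the same route as the paper: everything is reduced to $\beta=\tr(\mathrm{AD}_{n,p}\circ\diag)=(1+p)^n$, after which parts (1)--(5) are direct substitutions into \cref{lemma:2design} (via \cref{eq:clifford_global_noise}), \cref{claim:global_clifford_classical_shadow}, and \cref{prop:global_shadow_norm_bounds}; the paper evaluates $\beta$ by summing $p^{\ell_1(b)}$ over bit strings (a binomial sum) rather than via multiplicativity of the superoperator trace over tensor factors, but this is the same single-qubit computation ($1$ and $p$ as the diagonal elements of $\mathrm{AD}_{1,p}$ on $\ketbra{0}{0}$ and $\ketbra{1}{1}$) in different packaging. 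You are also right that, for consistency with $f(\mathcal E)=(\beta-1)/(2^{2n}-1)$ and with parts (2)--(3), the denominator in part (1) should read $2^{2n}-1$, and that invertibility (hence parts (3)--(5)) requires $p>0$.
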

\begin{proof}
To prove (\ref{item:ad1}), we use the fact that $ \langle 0 \rvert\mathrm{AD}_{1,p}(\lvert 0 \rangle\!\langle 0 \rvert) \lvert 0 \rangle = 1 $  
and $ \langle 1 \rvert\mathrm{AD}_{1,p}(\lvert 1 \rangle\!\langle 1 \rvert) \lvert 1 \rangle = p $. We denote the Hamming weight of a bit string $b$ as $\ell_1(b)$.

\begin{align}
    \tr(\mathrm{AD}_{n,p} \circ \mathrm{diag}) 
    &= \sum_{b\in\{0,1\}^n} \langle b \rvert \mathrm{AD}_{n,p}(\lvert b \rangle \! \langle b \rvert ) \lvert b \rangle \nn
    &= \sum_{b\in\{0,1\}^n} \prod_{i=1}^n \langle b_i \rvert \mathrm{AD}_{1,p}(\lvert b_i \rangle \! \langle b_i \rvert ) \lvert b_i \rangle \nn
    &= \sum_{b\in\{0,1\}^n} p^{\ell_1(b)} \nn
    &= \sum_{i=0}^n \binom{n}{i} p^i \nn
    &= (1+p)^n.
\end{align}

Then, 

\begin{align}
    f(\mathrm{AD}_{n,p})  
    &= \frac{\tr(\mathrm{AD}_{n,p} \circ \mathrm{diag})-1}{2^{2n}-1} \nn
    &= \frac{(1+p)^n-1}{2^{2n}-1}.
\end{align}

(\ref{item:ad2}) and (\ref{item:ad3}) follow from \cref{eq:clifford_global_noise}, and (\ref{item:ad5}) follows from \cref{prop:global_shadow_norm_bounds}.
To prove (\ref{item:ad4}), apply \cref{claim:global_clifford_classical_shadow},
\begin{align}
\hat{\rho}(\mathcal C_n, \mathrm{AD}_{n,p}, \hat{U}, \hat b) 
&= \mathcal D_{n, (2^{2n} - 1)/ ((1+p)^n - 1)}(\hat{U}^\dagger \lvert \hat b\rangle\!\langle \hat b \rvert \hat U)\nn 
&= \frac{2^{2n} - 1}{(1+p)^n - 1} \hat{U}^\dagger \lvert \hat b\rangle\!\langle \hat b \rvert \hat U - \Big(1 - \frac{2^{2n} - 1}{(1+p)^n - 1}\Big) \frac{\mathbb I}{2^n}.
\end{align}
\end{proof}

A bound on the sample complexity follows from the bound on the shadow seminorm.

\begin{corollary}
The sample complexity $N_\mathrm{tot}$ to estimate a collection of $M$ linear target functions $\tr(O_i \rho)$ within error $\eps$ and failure probability $\delta$ when the unitary ensemble $\ensemble$ is subject to the amplitude damping channel $ \mathrm{AD}_{n, p}$ is  

\[
N_\mathrm{tot} \leq   \frac{204(2^n -1)^2\log(2M/\delta)}{((1+p)^n - 1)^2\eps^2} \max_{1 \leq i \leq M} \tr(O_i^2).
\]
\end{corollary}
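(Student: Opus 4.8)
The plan is to obtain the bound by feeding the amplitude-damping shadow seminorm estimate of \cref{claim:amplitude_damping} into the generic performance guarantee of \cref{thm:general-theorem}. First I would recall that, since the corollary is stated in the regime where $\ensemble$ is a unitary $3$-design (the Clifford group $\mathcal C_n$) and $\channel = \mathrm{AD}_{n,p}$, \cref{thm:general-theorem} tells us that a size-$(NK)$ classical shadow suffices, where
\[
K = 2\log\frac{2M}{\delta}, \qquad N = \frac{34}{\eps^2}\max_{1\le i\le M}\norm{O_i - \tfrac{1}{2^n}\tr(O_i)\mathbb I}^2_{\shadow,\mathcal C_n,\mathrm{AD}_{n,p}},
\]
provided the shadow channel $\mathcal M_{\mathcal C_n,\mathrm{AD}_{n,p}}$ is invertible.

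Before applying this, I would verify invertibility via \cref{claim:invertibility_global_clifford}: the proof of \cref{claim:amplitude_damping} computes $\tr(\mathrm{AD}_{n,p}\circ\diag) = (1+p)^n$, and $(1+p)^n \neq 1$ for $p\in(0,1]$, so the hypothesis of \cref{claim:invertibility_global_clifford} holds and the protocol is well-defined. (This is the one place that calls for care, and it is why the result should be read with $p>0$; at $p=0$ the channel collapses all populations to $\ket 0$ and the shadow channel degenerates.) Then I would substitute the seminorm bound from item (\ref{item:ad5}) of \cref{claim:amplitude_damping},
\[
\norm{O_i - \tfrac{1}{2^n}\tr(O_i)\mathbb I}^2_{\shadow,\mathcal C_n,\mathrm{AD}_{n,p}} \le \frac{3(2^n-1)^2}{((1+p)^n-1)^2}\tr(O_i^2),
\]
into the formula for $N$ and multiply by $K$, giving
\[
N_\mathrm{tot} = NK \le \frac{34\cdot 2\cdot 3\,(2^n-1)^2\log(2M/\delta)}{((1+p)^n-1)^2\,\eps^2}\max_{1\le i\le M}\tr(O_i^2),
\]
and $34\cdot 2\cdot 3 = 204$ yields the stated constant.

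There is no genuine obstacle here: every ingredient has already been established, and the proof is simply the composition of \cref{thm:general-theorem} with the fifth item of \cref{claim:amplitude_damping}, together with the arithmetic identity $204 = 34\times 2\times 3$ (the $34$ and the $2$ coming from the median-of-means parameters in \cref{thm:general-theorem}, the $3$ from the $3$-design shadow seminorm bound in \cref{prop:global_shadow_norm_bounds} that underlies \cref{claim:amplitude_damping}). The only conceptual checkpoint is the invertibility condition $(1+p)^n\neq 1$ noted above, which holds throughout the relevant parameter range.
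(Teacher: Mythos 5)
Your proposal is correct and follows the paper's own route: the corollary is just \cref{thm:general-theorem} (equivalently \cref{cor:sampleComplexityGlobal} with $\beta = \tr(\mathrm{AD}_{n,p}\circ\mathrm{diag}) = (1+p)^n$) combined with item (\ref{item:ad5}) of \cref{claim:amplitude_damping}, and the constant $204 = 34\cdot 2\cdot 3$ arises exactly as you describe. Your added remark on invertibility ($(1+p)^n\neq 1$, failing only at $p=0$) is a correct and worthwhile sanity check that the paper leaves implicit.
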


\section{Product Clifford Ensemble with Product Noise}
\label{sec:productClifford}
In this section we analyze the setting in which 
the quantum channel is a product channel (\cref{def:product-channel}) 
and 
the unitary ensemble is a product ensemble (\cref{def:product-ensemble}). 
Our results hold for any product ensemble in which each ensemble is a 3-design, which the product Clifford ensemble is an example.
We write the product Clifford ensemble as $\mathcal C_1^{\otimes n} = \{U_1 \otimes \ldots \otimes U_n : U_1, \ldots , U_n \in \mathcal C_1\}$ and the quantum channel as $\channel^{\otimes n}$, where $\channel$ acts on a single qubit.

\subsection{Derivation of Shadow Channel}

We derive expressions for the shadow channel and its inverse by building on the work in \cref{subsection:product_ensemble_noise} and \cref{section:global_clifford}. 

\begin{claim}\label{claim:shadow_channel_product_clifford}
Let $\ensemble = \ensemble_1 \otimes \ldots \otimes \ensemble_n$ be a product ensemble such that $\ensemble_i^\dagger$ is a 2-design for all $i \in [n]$. 
Let $\channel: \mathcal{L}(\mathbb C^{2^n}) \rightarrow \mathcal{L}(\mathbb C^{2^n})$ be a single-qubit quantum channel. Then, 
\begin{align}
    \mathcal M_{\ensemble , \channel^{\otimes n}} = \mathcal{D}_{1, \frac{1}{3}(\tr(\channel \circ \mathrm{diag}) - 1)}^{\otimes n}.
\end{align}
Also, if $\tr(\channel \circ \mathrm{diag}) \neq 1$, then 
\begin{align}
    \mathcal M_{\ensemble, \channel^{\otimes n}}^{-1} = \mathcal{D}_{1, 3/(\tr(\channel \circ \mathrm{diag}) - 1)}^{\otimes n}.
\end{align}
\end{claim}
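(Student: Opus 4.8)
The plan is to reduce the $n$-qubit claim to the single-qubit analysis already carried out in \cref{section:global_clifford} and then reassemble the tensor factors. First I would apply \cref{claim:product_shadow_channel_factorizes}: since $\channel^{\otimes n} = \channel \otimes \cdots \otimes \channel$ is a product channel (\cref{def:product-channel}) and $\ensemble = \ensemble_1 \otimes \cdots \otimes \ensemble_n$ is a product ensemble (\cref{def:product-ensemble}), the shadow channel factorizes as $\mathcal M_{\ensemble, \channel^{\otimes n}} = \bigotimes_{i=1}^n \mathcal M_{\ensemble_i, \channel}$, so it suffices to identify each single-qubit factor $\mathcal M_{\ensemble_i, \channel}$.

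Next I would invoke \cref{lemma:2design} in the case $n = 1$, i.e.\ with $d = 2$. By hypothesis $\ensemble_i^\dagger$ is a $2$-design; since the Haar measure is invariant under $U \mapsto U^\dagger$, $\ensemble_i$ is also a $2$-design, so \cref{lemma:2design} applies. As $\channel$ is a quantum channel it is trace-preserving, and the second part of \cref{lemma:2design} gives
\[
\mathcal M_{\ensemble_i, \channel} = \mathcal D_{1, f(\channel)}, \qquad f(\channel) = \frac{\tr(\channel \circ \diag) - 1}{2^2 - 1} = \tfrac{1}{3}\bigl(\tr(\channel \circ \diag) - 1\bigr),
\]
with the same value of $f(\channel)$ for every $i$. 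Substituting into the factorization from the first step yields $\mathcal M_{\ensemble, \channel^{\otimes n}} = \mathcal D_{1, \frac{1}{3}(\tr(\channel \circ \diag) - 1)}^{\otimes n}$, which is the first assertion. (The product Clifford ensemble satisfies the hypothesis because the single-qubit Clifford group is a group, hence closed under inversion, and is a $3$-design and therefore a $2$-design.)

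For the inverse, the assumption $\tr(\channel \circ \diag) \neq 1$ is exactly the statement that $f(\channel) \neq 0$, so each single-qubit factor $\mathcal D_{1, f(\channel)}$ is invertible with $\mathcal D_{1, f(\channel)}^{-1} = \mathcal D_{1, 1/f(\channel)}$ (via the depolarizing-channel composition rule $\mathcal D_{1,f} \circ \mathcal D_{1,g} = \mathcal D_{1,fg}$ already used in \cref{claim:invertibility_global_clifford}). The invertibility-and-factorization clause of \cref{claim:product_shadow_channel_factorizes}, equivalently $(A \otimes B)^{-1} = A^{-1} \otimes B^{-1}$, then gives
\[
\mathcal M_{\ensemble, \channel^{\otimes n}}^{-1} = \bigotimes_{i=1}^n \mathcal M_{\ensemble_i, \channel}^{-1} = \mathcal D_{1, 1/f(\channel)}^{\otimes n} = \mathcal D_{1, 3/(\tr(\channel \circ \diag) - 1)}^{\otimes n}.
\]

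There is no serious obstacle here; the argument is essentially bookkeeping that combines \cref{claim:product_shadow_channel_factorizes} with \cref{lemma:2design}. The only points that warrant a moment's care are (i) applying \cref{lemma:2design} with the correct dimension $d = 2$ rather than $d = 2^n$, so that the denominator $2^{2n} - 1$ there specializes to $3$, and (ii) checking that $\ensemble_i$ inherits the $2$-design property from $\ensemble_i^\dagger$; both are immediate.
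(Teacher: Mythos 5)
Your proposal is correct and follows essentially the same route as the paper: factorize the shadow channel via \cref{claim:product_shadow_channel_factorizes}, identify each single-qubit factor with \cref{lemma:2design} at $d=2$, and invert factorwise using \cref{claim:invertibility_global_clifford} (your direct $f(\channel)\neq 0$ argument is just that claim specialized to one qubit). The extra remark that $\ensemble_i$ inherits the $2$-design property from $\ensemble_i^\dagger$ is a harmless addition the paper leaves implicit.
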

\begin{proof}
The first part follows from \cref{claim:product_shadow_channel_factorizes} and \cref{lemma:2design}.
The second part follows from the fact that if $\tr(\channel \circ \mathrm{diag}) \neq 1$, then the shadow channel $\mathcal M_{\ensemble , \channel^{\otimes n}}$ is invertible (see \cref{claim:invertibility_global_clifford}). 
Therefore,
$\mathcal M_{\ensemble, \channel^{\otimes n}}^{-1} = ( \mathcal{D}_{1, (1/3)\cdot(\tr(\channel \circ \mathrm{diag}) - 1)}^{-1})^{\otimes n} = \mathcal{D}_{1, 3/(\tr(\channel \circ \mathrm{diag}) - 1)}$.
\end{proof}

\subsection{Classical Shadow}

We derive an expression for the classical shadow for the product Clifford ensemble and a product channel. 

\begin{claim}\label{claim:classical_shadow_product_clifford}
Let $\ensemble = \ensemble_1 \otimes \ldots \otimes \ensemble_n$ be a product ensemble such that $\ensemble_i^\dagger$ is a 2-design for all $i \in [n]$. 
Let $\channel: \mathcal{L}(\mathbb C^{2^n}) \rightarrow \mathcal{L}(\mathbb C^{2^n})$ be a single-qubit quantum channel such that $\tr(\channel \circ \mathrm{diag}) \neq 1$. Let $\hat{U} = \hat{U}_1 \otimes \ldots \otimes \hat{U}_n \in \ensemble$ and $\hat{b} = \hat{b}_1 \ldots \hat{b}_n \in \{0, 1\}^n$. Then, 
\begin{align}
    \hat{\rho}(\ensemble, \channel^{\otimes n}, \hat{U}, \hat{b}) = \bigotimes_{i = 1}^n \Big( \frac{1}{f(\channel)} \hat{U} \lvert \hat{b}\rangle\!\langle \hat{b}\rvert \hat{U} + \big( 1 - \frac{1}{f(\channel)}\big) \frac{\mathbb I}{2} \Big). 
\end{align}
\end{claim}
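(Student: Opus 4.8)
The plan is to obtain this claim almost immediately by combining two results already in hand: the factorization of the classical shadow for product ensembles and product channels (\cref{claim:product_shadow_channel_factorizes}) and the closed form for the single-qubit classical shadow under a $2$-design (\cref{claim:global_clifford_classical_shadow}). The substantive work has been done in those two claims, so here it is mostly a matter of verifying that their hypotheses line up.

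First I would check that the hypotheses of \cref{claim:product_shadow_channel_factorizes} are met. The relevant product channel is $\channel^{\otimes n}$, whose $i$-th tensor factor is the single-qubit channel $\channel$, and each $\ensemble_i^\dagger$ is a $2$-design. By \cref{claim:invertibility_global_clifford}, the single-qubit shadow channel $\mathcal M_{\ensemble_i, \channel}$ is invertible if and only if $\tr(\channel \circ \mathrm{diag}) \neq 1$, which is precisely the assumption we are given. Hence every $\mathcal M_{\ensemble_i, \channel}$ is invertible, and the conclusions of \cref{claim:product_shadow_channel_factorizes} (in particular, that $\mathcal M_{\ensemble, \channel^{\otimes n}}$ is invertible and that the classical shadow factorizes) all apply.

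Next I would invoke the third part of \cref{claim:product_shadow_channel_factorizes} to write
\[
\hat{\rho}(\ensemble, \channel^{\otimes n}, \hat{U}, \hat{b}) = \bigotimes_{i=1}^n \hat{\rho}(\ensemble_i, \channel, \hat{U}_i, \hat{b}_i),
\]
and then apply \cref{claim:global_clifford_classical_shadow} with $n = 1$ to each tensor factor, since $\ensemble_i^\dagger$ is a (single-qubit) $2$-design, $\channel$ is a single-qubit quantum channel, and $\mathcal M_{\ensemble_i, \channel}$ is invertible. This yields
\[
\hat{\rho}(\ensemble_i, \channel, \hat{U}_i, \hat{b}_i) = \frac{1}{f(\channel)} \hat{U}_i^\dagger \lvert \hat{b}_i \rangle\!\langle \hat{b}_i \rvert \hat{U}_i + \Big(1 - \frac{1}{f(\channel)}\Big)\frac{\mathbb I}{2},
\]
where $f(\channel) = \tfrac{1}{3}\bigl(\tr(\channel \circ \mathrm{diag}) - 1\bigr)$ is the $d=2$ instance of the depolarizing parameter from \cref{lemma:2design} (equivalently, the parameter appearing in \cref{claim:shadow_channel_product_clifford}). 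Taking the tensor product over $i$ produces the claimed expression.

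I do not expect a genuine obstacle. The only points needing care are bookkeeping: confirming that the $f(\channel)$ in the statement is understood as the single-qubit version, and tracking the indices so that the $i$-th tensor factor depends on $\hat{U}_i$ and $\hat{b}_i$ (rather than a fixed bit). Everything else is a direct consequence of \cref{claim:product_shadow_channel_factorizes} and \cref{claim:global_clifford_classical_shadow}.
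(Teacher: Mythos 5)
Your proposal is correct and follows essentially the same route as the paper: the paper's proof also reduces to the single-qubit case of \cref{claim:global_clifford_classical_shadow} (with $n=1$), relying on the tensor-product factorization of the classical shadow from \cref{claim:product_shadow_channel_factorizes}, which you simply make explicit along with the invertibility check via \cref{claim:invertibility_global_clifford}.
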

\begin{proof}
Follows from \cref{claim:global_clifford_classical_shadow} by setting $n = 1$.
\end{proof}

\subsection{Derivation of Shadow Seminorm}\label{subsection:product_clifford_shadow_norm}
We derive an expression for the shadow seminorm when the observable is a $k$-local Pauli observable, which non-trivially generalizes the Lemma S3 in \cite{huang2020predicting}.
Denote the set of Pauli operators by $\mathcal P_n = \{ P_1 \otimes \ldots \otimes P_n : P_i \in \{
\mathbb I, X, Y, Z\}\, \forall i \in \{1, \ldots, n\}\}$. 

\begin{definition}[weight of Pauli operator]
The \textit{weight} of the Pauli operator $P = P_1 \otimes \ldots \otimes P_n \in \mathcal P_n$ is $\mathrm{wt}(P) = \lvert \{ i : P_i \neq \mathbb I \}\rvert.$
\end{definition}

\begin{proposition}\label{proposition:pauli_shadow_norm}
Let $\ensemble = \ensemble_1 \otimes \ldots \otimes \ensemble_n$ be a product ensemble such that $\ensemble_i$ is a 3-design for all $i \in [n]$. 
Let $\channel: \mathcal{L}(\mathbb C^{2^n}) \rightarrow \mathcal{L}(\mathbb C^{2^n})$ be a single-qubit quantum channel such that $\tr(\channel \circ \mathrm{diag}) \neq 1$. Let $P \in \mathcal P_n$. Then, 

\begin{align}
    \norm{P}_{\mathrm{shadow},\ensemble, \channel^{\otimes n}} = \Big( \frac{1}{\sqrt{3} f(\channel)}\Big)^{\mathrm{wt}(P)}. 
\end{align}
where $f(\mathcal E) = \frac{1}{3}(\tr(\mathcal E\circ \mathrm{diag})-1)$. 
\end{proposition}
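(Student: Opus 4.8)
The plan is to factor the shadow seminorm into single-qubit contributions and evaluate each with \cref{lem:identity3designsum} at $d=2$. Write $P=P_1\otimes\cdots\otimes P_n$ with $P_i\in\{\mathbb I,X,Y,Z\}$ and set $f\deff f(\channel)=\tfrac13(\tr(\channel\circ\diag)-1)\neq 0$. Since $\ensemble=\bigotimes_i\ensemble_i$ and $\channel^{\otimes n}$ are product objects, \cref{claim:product_shadow_channel_factorizes} gives $\mathcal M_{\ensemble,\channel^{\otimes n}}=\bigotimes_{i=1}^n\mathcal M_{\ensemble_i,\channel}$, and each $\ensemble_i$ being a $3$-design (hence a $2$-design) makes each factor the single-qubit channel $\mathcal D_{1,f}$ by \cref{lemma:2design} (cf.\ \cref{eq:clifford_global_noise}); as $f\neq0$ this is invertible with inverse $\mathcal D_{1,1/f}$ by \cref{claim:invertibility_global_clifford}. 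Hence $\mathcal M_{\ensemble,\channel^{\otimes n}}$ is invertible and, since the adjoint of $\mathcal D_{1,1/f}$ (a real-parameter depolarizing channel) is itself, $\mathcal M^{-1,\dagger}_{\ensemble,\channel^{\otimes n}}=\bigotimes_i\mathcal D_{1,1/f}$. Using $\mathcal D_{1,1/f}(\mathbb I)=\mathbb I$ and $\mathcal D_{1,1/f}(P_i)=\tfrac1f P_i$ for traceless $P_i$, this gives $\mathcal M^{-1,\dagger}_{\ensemble,\channel^{\otimes n}}(P)=f^{-\mathrm{wt}(P)}P$.

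Substituting into \cref{eq:noisy_shadow_norm}, I would write $U=U_1\otimes\cdots\otimes U_n$, $b=b_1\cdots b_n$, and use $\langle b\rvert(X_1\otimes\cdots\otimes X_n)\lvert b\rangle=\prod_i\langle b_i\rvert X_i\lvert b_i\rangle$, so that for any elementary tensor $\sigma=A_1\otimes\cdots\otimes A_n$ the summand splits across qubits:
\begin{align*}
\underset{U\sim\ensemble}{\E}\sum_{b}\langle b\rvert\channel^{\otimes n}(U\sigma U^\dagger)\lvert b\rangle\,\langle b\rvert U\mathcal M^{-1,\dagger}_{\ensemble,\channel^{\otimes n}}(P)U^\dagger\lvert b\rangle^2
&=\prod_{i=1}^n\underset{U_i\sim\ensemble_i}{\E}\sum_{b_i}\langle b_i\rvert\channel(U_iA_iU_i^\dagger)\lvert b_i\rangle\,\langle b_i\rvert U_i\,\mathcal D_{1,1/f}(P_i)\,U_i^\dagger\lvert b_i\rangle^2.
\end{align*}
When $P_i=\mathbb I$ the inner bracket equals $1$ and the $i$-th factor is $\underset{U_i\sim\ensemble_i}{\E}\tr(\channel(U_iA_iU_i^\dagger))=\tr(A_i)$ by trace-preservation of $\channel$. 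When $P_i\neq\mathbb I$, pulling out $f^{-2}$ from $\mathcal D_{1,1/f}(P_i)=\tfrac1fP_i$ leaves an instance of \cref{lem:identity3designsum} with $d=2$, $A=A_i$, $B=C=P_i$; with $\alpha=\tr(\channel(\mathbb I))=2$ and $\tr(P_i)=0$ (which annihilates the $\tr(B)$- and $\tr(C)$-terms), that lemma evaluates the $i$-th factor to $\tfrac{1}{3f^2}\tr(A_i)$, regardless of which non-identity Pauli $P_i$ is.

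Multiplying over $i$, the whole expression equals $\big(\tfrac{1}{3f^2}\big)^{\mathrm{wt}(P)}\prod_i\tr(A_i)=\big(\tfrac{1}{3f^2}\big)^{\mathrm{wt}(P)}\tr(\sigma)$ for $\sigma=A_1\otimes\cdots\otimes A_n$; since it is linear in $\sigma$ and the elementary tensors span $\mathcal L(\mathbb C^{2^n})$, the same holds for all $\sigma$, and for $\sigma\in\mathbb D_{2^n}$ it is the constant $\big(\tfrac{1}{3f^2}\big)^{\mathrm{wt}(P)}$. Thus the maximum over $\sigma$ in \cref{eq:noisy_shadow_norm} is attained trivially, $\norm{P}^2_{\shadow,\ensemble,\channel^{\otimes n}}=\big(\tfrac{1}{3f^2}\big)^{\mathrm{wt}(P)}$, and taking positive square roots yields $\norm{P}_{\shadow,\ensemble,\channel^{\otimes n}}=\big(\tfrac{1}{\sqrt3\,f(\channel)}\big)^{\mathrm{wt}(P)}$ (reading $f(\channel)$ as $\lvert f(\channel)\rvert$ if it is negative), with $\mathrm{wt}(P)=0$ reducing to $\norm{\mathbb I}_{\shadow}=1$. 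The only genuine calculation is the single-qubit $3$-design integral \cref{lem:identity3designsum}; the rest is bookkeeping, and the mild surprise — as already in the noiseless case — is that the maximand turns out to be independent of $\sigma$, so no real optimization occurs. One could equally first invoke \cref{lemma:general_locality_respecting_shadow_norm} to reduce to a full-weight Pauli on $\mathrm{wt}(P)$ qubits, sidestepping the identity-factor case.
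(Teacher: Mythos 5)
Your proof is correct and takes essentially the same route as the paper's: identify each single-qubit shadow channel as $\mathcal D_{1,f(\channel)}$, pull out $f(\channel)^{-2\,\mathrm{wt}(P)}$, and evaluate the remaining single-qubit average via \cref{lem:identity3designsum} at $d=2$ with $B=C=P_i$, where the $\beta$-dependence cancels and the maximand is independent of $\sigma$. The only cosmetic difference is that the paper strips the identity factors first using \cref{lemma:general_locality_respecting_shadow_norm} and expands $\sigma$ in matrix units, whereas you handle identity factors inline via trace-preservation and expand in elementary tensors; your parenthetical about reading $f(\channel)$ as $\lvert f(\channel)\rvert$ when it is negative addresses a sign issue the paper's proof silently glosses over.
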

\begin{proof}
Without loss of generality, write $P = P_1 \otimes \ldots \otimes P_k \otimes \mathbb I^{\otimes (n-k)}$. Then, 
\begin{align}
    \norm{P}_{\mathrm{shadow},\ensemble, \channel^{\otimes n}}^2 
    &= \norm{P_1 \otimes \ldots \otimes P_k \otimes \mathbb I^{\otimes (n-k)}}_{\mathrm{shadow},\ensemble, \channel^{\otimes n}}^2 \nn
    &= \norm{P_1 \otimes \ldots \otimes P_k}_{\mathrm{shadow},\ensemble, \channel^{\otimes n}}^2 \nn
    &= \max_{\sigma \in \mathbb{D}_{2^n}} \underset{U \sim \localclifford^{\otimes k}}{\E} \sum_{b \in \{0,1\}^k} \langle b \rvert \channel^{\otimes k}(U \sigma U^\dagger) \lvert b \rangle\!\langle b \rvert U \bigotimes_{i=1}^k \mathcal{D}_{1, f(\channel)^{-1}}(P_i) U^\dagger \lvert b\rangle^2 \nn
    &= \max_{\sigma \in \mathbb{D}_{2^n}} \underset{U \sim \localclifford^{\otimes k}}{\E} \sum_{b \in \{0,1\}^k} \langle b \rvert \channel^{\otimes k}(U \sigma U^\dagger) \lvert b \rangle\!\langle b \rvert U \bigotimes_{i=1}^k\Big(\frac{1}{f(\channel)}P_i\Big) U^\dagger \lvert b\rangle^2 \nn
    &= \frac{1}{f(\channel)^{2k}} \max_{\sigma \in \mathbb{D}_{2^n}} \underset{U \sim \localclifford^{\otimes k}}{\E} \sum_{b \in \{0,1\}^k} \langle b \rvert \channel^{\otimes k}(U \sigma U^\dagger) \lvert b \rangle\!\langle b \rvert U \bigotimes_{i=1}^k P_i U^\dagger \lvert b\rangle^2.
\end{align}
The second equality follows from \cref{lemma:general_locality_respecting_shadow_norm}. 
To simplify the expression further, we write $\sigma = \sum_{\alpha, \beta \in \{0,1\}^k} \sigma_{\alpha \beta} \lvert \alpha \rangle\!\langle \beta \rvert = \sum_{\alpha, \beta \in \{0,1\}^k} \sigma_{\alpha \beta} E_{\alpha_1\beta_1} \otimes \ldots \otimes E_{\alpha_k\beta_k}$, where $E_{\alpha_i \beta_i} = \lvert \alpha_i \rangle\!\langle \beta_i \rvert$. 
We simplify the expectation value first.  

\begin{align}
    \underset{U \sim \localclifford^{\otimes k}}{\E} &\sum_{b \in \{0,1\}^k} \langle b \rvert \channel^{\otimes k}(U \sigma U^\dagger) \lvert b \rangle\!\langle b \rvert U \bigotimes_{i=1}^k P_i U^\dagger \lvert b\rangle^2 \nn
    &= \underset{U \sim \localclifford^{\otimes k}}{\E} \sum_{b \in \{0,1\}^k} \langle b \rvert \channel^{\otimes k}(U (\sum_{\alpha, \beta \in \{0,1\}^k} \sigma_{\alpha \beta}E_{\alpha_1 \beta_1} \otimes \ldots \otimes E_{\alpha_k \beta_k}) U^\dagger) \lvert b \rangle\!\langle b \rvert U \bigotimes_{i=1}^k P_i U^\dagger \lvert b\rangle^2 \nn
    &= \sum_{\alpha, \beta \in \{0,1\}^k} \sigma_{\alpha \beta} \prod_{i=1}^k \underset{U_j \sim \localclifford}{\E} \sum_{b_j \in \{0,1\}} \langle b_j \rvert \channel(U_j E_{\alpha_j \beta_j}  U^\dagger) \lvert b_j \rangle\!\langle b_j \rvert U_j P_j U_j^\dagger \lvert b_j\rangle^2 \nn
    &= \sum_{\alpha, \beta \in \{0,1\}^k} \sigma_{\alpha \beta} \prod_{i=1}^k \frac{1}{4!} \Big( 2(6 -2\tr(\channel \circ \mathrm{diag})) + 2(2\tr(\channel \circ \mathrm{diag})-2) \Big) \tr(E_{\alpha_j\beta_j}) \nn 
    &= \sum_{\alpha, \beta \in \{0,1\}^k} \sigma_{\alpha \beta} \prod_{i=1}^k \frac{1}{3}\delta_{\alpha_j \beta_j} \nn
    &= \frac{1}{3^k}.
\end{align}
The third equality follows from \cref{lem:identity3designsum} with $B = C = P_j$, $d=2$ and $A = E_{\alpha_j\beta_j}$. Plugging into the original expression, we get  
\begin{align}
    \norm{P}_{\mathrm{shadow},\ensemble, \channel^{\otimes n}}^2 
    &= \frac{1}{f(\channel)^{2k}} \max_{\sigma \in \mathbb{D}_{2^n}} \underset{U \sim \localclifford^{\otimes k}}{\E} \sum_{b \in \{0,1\}^k} \langle b \rvert \channel^{\otimes k}(U \sigma U^\dagger) \lvert b \rangle\!\langle b \rvert U \bigotimes_{i=1}^k P_i U^\dagger \lvert b\rangle^2 \nn
    &= \Big( \frac{1}{\sqrt{3} f(\channel)}\Big)^{2k} \\
    \implies \norm{P}_{\mathrm{shadow},\ensemble, \channel^{\otimes n}} &= \Big( \frac{1}{\sqrt{3} f(\channel)}\Big)^{\mathrm{wt}(P)}.
\end{align}
\end{proof}

Since we have a bound on the shadow seminorm, we can bound the sample complexity of classical shadows protocol when the unitary ensemble is the product Clifford group and when all the observables are $k$-local Pauli operators. 

\begin{corollary}\label{corollary:product_clifford_sample_complexity}
Let $\{P_i\}_{i=1}^M$ be a collection of $M$ Pauli operators. 
Let $\ensemble = \ensemble_1 \otimes \ldots \otimes \ensemble_n$ be a product ensemble such that $\ensemble_i^\dagger$ is a 3-design for all $i \in [n]$. 
Let $\channel: \mathcal{L}(\mathbb C^{2}) \rightarrow \mathcal{L}(\mathbb C^{2})$ be a single-qubit quantum channel such that $\tr(\channel \circ \mathrm{diag}) \neq 1$. 
The sample complexity $N_\mathrm{tot}$ to estimate the linear target functions $\{\tr(P_i \rho)\}_{i=1}^M$  of an $n$-qubit state $\rho$ within error $\eps$ and failure probability $\delta$ when unitary ensemble $\ensemble$ is subject to quantum channel $\channel^{\otimes n}$ is  

\[
N_\mathrm{tot} \leq   \frac{68\log(2M/\delta)}{\eps^2} \max_{1 \leq i \leq M} \Big( \frac{1}{3 f(\channel)^2}\Big)^{\mathrm{wt}(P_i)}, 
\]
where $f(\channel) = \frac{1}{3}(\tr(\channel \circ \mathrm{diag}) - 1)$ and $\mathrm{wt}(P) = \lvert \{ i : P_i \neq \mathbb I \}\rvert.$
\end{corollary}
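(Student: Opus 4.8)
The plan is to combine the general performance guarantee of \cref{thm:general-theorem} with the closed-form expression for the Pauli shadow seminorm from \cref{proposition:pauli_shadow_norm}. First I would instantiate \cref{thm:general-theorem} with the product ensemble $\ensemble = \ensemble_1 \otimes \cdots \otimes \ensemble_n$ and the product noise channel $\channel^{\otimes n}$. That theorem requires the shadow channel $\mathcal M_{\ensemble, \channel^{\otimes n}}$ to be invertible; this is exactly guaranteed by the hypothesis $\tr(\channel \circ \diag) \neq 1$ together with \cref{claim:shadow_channel_product_clifford} (equivalently, \cref{claim:invertibility_global_clifford} applied qubitwise). With invertibility in hand, \cref{thm:general-theorem} yields $N_\mathrm{tot} = NK = \frac{68 \log(2M/\delta)}{\eps^2} \max_{1\le i \le M} \norm{P_i - \frac{1}{2^n}\tr(P_i)\mathbb I}^2_{\shadow, \ensemble, \channel^{\otimes n}}$, since there $K = 2\log(2M/\delta)$ and $N = \frac{34}{\eps^2}\max_i \norm{P_i - \frac{1}{2^n}\tr(P_i)\mathbb I}^2_{\shadow, \ensemble, \channel^{\otimes n}}$, and $34 \cdot 2 = 68$.

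The second step is to evaluate the traceless-part shadow seminorm for each Pauli $P_i$, splitting into two cases according to $\mathrm{wt}(P_i)$. If $\mathrm{wt}(P_i) = 0$ then $P_i = \mathbb I$, so its traceless part $P_i - \frac{1}{2^n}\tr(P_i)\mathbb I$ is the zero operator and its shadow seminorm is $0$, which is trivially at most $1 = \big(1/(3f(\channel)^2)\big)^{0}$. If $\mathrm{wt}(P_i) \geq 1$ then $\tr(P_i) = 0$, so $P_i - \frac{1}{2^n}\tr(P_i)\mathbb I = P_i$, and \cref{proposition:pauli_shadow_norm} gives $\norm{P_i}^2_{\shadow, \ensemble, \channel^{\otimes n}} = \big(1/(\sqrt 3 f(\channel))\big)^{2\,\mathrm{wt}(P_i)} = \big(1/(3 f(\channel)^2)\big)^{\mathrm{wt}(P_i)}$. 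In either case $\norm{P_i - \frac{1}{2^n}\tr(P_i)\mathbb I}^2_{\shadow,\ensemble,\channel^{\otimes n}} \leq \big(1/(3f(\channel)^2)\big)^{\mathrm{wt}(P_i)}$.

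Finally I would substitute this bound into the expression for $N_\mathrm{tot}$ obtained above, which immediately gives $N_\mathrm{tot} \leq \frac{68\log(2M/\delta)}{\eps^2}\max_{1 \leq i \leq M}\big(1/(3f(\channel)^2)\big)^{\mathrm{wt}(P_i)}$, as claimed. I do not anticipate any real obstacle: the argument is essentially an application of two results already established in the paper (\cref{thm:general-theorem} and \cref{proposition:pauli_shadow_norm}), and the only points requiring a moment of care are (i) the weight-zero case, where the traceless part vanishes and one observes that $\big(1/(3f(\channel)^2)\big)^{0}=1$ is still a valid (loose) upper bound, and (ii) noting that \cref{proposition:pauli_shadow_norm} is stated for $\ensemble_i$ a $3$-design, which is equivalent to $\ensemble_i^\dagger$ being a $3$-design since the Haar measure is invariant under $U \mapsto U^\dagger$.
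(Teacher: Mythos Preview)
Your proposal is correct and matches the paper's intended approach: the corollary is obtained exactly by combining \cref{thm:general-theorem} with \cref{proposition:pauli_shadow_norm}, and your handling of the weight-zero case and of the $\ensemble_i$ versus $\ensemble_i^\dagger$ $3$-design hypothesis fills in details the paper leaves implicit.
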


\begin{remark}
We also studied the shadow seminorm of a general $k$-local observable in the presence of a general product channel. 
However, we could not derive a simple expression. 
In \cref{section:shadow_norm_klocal_observable_product}, we derive a bound on the shadow seminorm of a $k$-local observable when the error channel is depolarizing noise (rather than a general product channel). 
\end{remark}

\subsection{Examples}\label{subsec:local-examples}
We apply the results of this section to derive expressions for the shadow channel, inverse shadow channel, and the classical shadow when the unitary ensemble is the product Clifford ensemble and we fix the quantum channel. Specifically, we study the identity channel, depolarizing channel, and amplitude damping channel. 
For each quantum channel, we also bound the shadow seminorm for $k$-local Pauli observables, which imply sample complexity bounds for the classical shadows protocol with the product Clifford ensemble in the presence of noise. 

\subsubsection{Noiseless Channel}
We start with the noiseless case to show that our results can be used to recover Eq. (S44) through (S50) in \cite{huang2020predicting}.
Recall from \cref{subsec:global_noiseless_case}, we show $f(\mathbb I) = 1/(2^n+1)$. 
Hence, for the local case ($n = 1$), $f(\mathbb I) = 1/3$. 
It follows from \cref{claim:shadow_channel_product_clifford}  and  \cref{claim:classical_shadow_product_clifford}, that 
\[
\mathcal{M}_{\localclifford^{\otimes n}, \mathbb I} = \mathcal D_{1,1/3}^{\otimes n}
,\qquad
\mathcal{M}_{\localclifford^{\otimes n}, \mathbb I}^{-1} = \mathcal D_{1,3}^{\otimes n}
,\qquad \text{and} \qquad 
\hat{\rho}(\localclifford^{\otimes n}, \mathbb I, \hat{U}, \hat{b}) = \bigotimes_{i=1}^k (3 \hat{U} \lvert \hat{b_i}\rangle\!\langle \hat{b_i}\rvert \hat{U} - \mathbb I).
\]

Similarly, by using $f(\mathbb I) = 1/3$, the shadow seminorm of a $k$-local Pauli operator can be computed from \cref{proposition:pauli_shadow_norm}. 
\[
\norm{P}_{\mathrm{shadow}, \localclifford^{\otimes n}, \mathbb I}^2 = 3^{\mathrm{wt}(P)}.
\]

Finally, the sample complexity in this setting follows from \cref{corollary:product_clifford_sample_complexity}. 
\[
N_\mathrm{tot} \leq   \frac{68\log(2M/\delta)}{\eps^2} \max_{1 \leq i \leq M} 3^{\mathrm{wt}(P_i)}. 
\]

\subsubsection{Depolarizing Channel}
\label{sec:dep_channel}
Now we study the case where the quantum channel is $\mathcal{D}_{1, f}^{\otimes n}$ (see \cref{def:depolarizing-channel}). In \cref{subsec:global_depolarizing}, we showed that $f(\mathcal{D}_{n, f}) = f/(2^n + 1)$, and so, for $n=1$, we get $f(\mathcal D_{1,f}) = f/3$. 
By applying \cref{claim:shadow_channel_product_clifford}  and  \cref{claim:classical_shadow_product_clifford}, we get the following expressions for the shadow channel, inverse shadow channel, and classical shadow. 
\[
\mathcal{M}_{\localclifford^{\otimes n}, \mathcal{D}_{1,f}^{\otimes n}} = \mathcal D_{1,f/3}^{\otimes n}
,\,\,\,
\mathcal{M}_{\localclifford^{\otimes n}, \mathcal{D}_{1,f}^{\otimes n}}^{-1} = \mathcal D_{1,3/f}^{\otimes n}
,\,\,\, \text{and}
\]
\[ 
\hat{\rho}(\localclifford^{\otimes n}, \mathcal{D}_{1,f}^{\otimes n}, \hat{U}, \hat{b}) = \bigotimes_{i=1}^k \left(\frac{3}{f} \hat{U} \lvert \hat{b_i}\rangle\!\langle \hat{b_i}\rvert \hat{U} - \left(\frac{1}{2} - \frac{3}{2f}\right) \mathbb I\right).
\]

Applying \cref{proposition:pauli_shadow_norm} with  $f(\mathcal D_{1,f}) = f/3$, we get 
\[
\norm{P}_{\mathrm{shadow}, \localclifford^{\otimes n}, \mathcal D_{1,f}^{\otimes n}}^2 = \left( \frac{3}{f^2}\right)^{\mathrm{wt}(P)}.
\]

The sample complexity in this setting follows from \cref{corollary:product_clifford_sample_complexity}.
\[
N_\mathrm{tot} \leq   \frac{68\log(2M/\delta)}{\eps^2} \max_{1 \leq i \leq M}\left( \frac{3}{f^2}\right)^{\mathrm{wt}(P)}.
\]

\subsubsection{Amplitude Damping Channel}
\label{sec:ampli_damping}
The last example we consider is the case where the quantum channel is a product of local amplitude damping channels, denoted by $\mathrm{AD}_{1, p}^{\otimes n}$ (see \cref{def:amplitude-damping-channel}). In \cref{subsec:global_amplitude_damping}, we show that $f(\mathrm{AD}_{n, p}) =  \frac{(1+p)^n-1}{2^{2n}-1}$. For $n = 1$, $f(\mathrm{AD}_{1,p}) = p/3$.
We get expressions for the shadow channel and inverse shadow channel by applying \cref{claim:shadow_channel_product_clifford}.
\[
\mathcal{M}_{\localclifford^{\otimes n}, \mathrm{AD}_{1,p}^{\otimes n}} = \mathcal D_{1,p/3}^{\otimes n}
\qquad\text{and}\qquad
\mathcal{M}_{\localclifford^{\otimes n}, \mathrm{AD}_{1,p}^{\otimes n}}^{-1} = \mathcal D_{1,3/p}^{\otimes n}.
\]
Similarly, we apply \cref{claim:classical_shadow_product_clifford} to get an expression for the classical shadow. 
\[
\hat{\rho}(\localclifford^{\otimes n}, \mathrm{AD}_{1,p}^{\otimes n}, \hat{U}, \hat{b}) = \bigotimes_{i=1}^k \left(\frac{3}{p} \hat{U} \lvert \hat{b_i}\rangle\!\langle \hat{b_i}\rvert \hat{U} - \left(\frac{1}{2} - \frac{3}{2p}\right) \mathbb I\right).
\]

Applying \cref{proposition:pauli_shadow_norm} with  $f(\mathrm {AD}_{1,p}) = p/3$, we get 
\[
\norm{P}_{\mathrm{shadow}, \localclifford^{\otimes n}, \mathcal D_{1,f}^{\otimes n}}^2 = \left( \frac{3}{p^2}\right)^{\mathrm{wt}(P)}.
\]

The sample complexity in this setting follows from \cref{corollary:product_clifford_sample_complexity}.
\[
N_\mathrm{tot} \leq   \frac{68\log(2M/\delta)}{\eps^2} \max_{1 \leq i \leq M}\left( \frac{3}{p^2}\right)^{\mathrm{wt}(P)}.
\]

\section{Concluding Remarks and Open Problems}
\label{sec:conclusion}

In this paper, we generalized the Huang-Kueng-Preskill classical shadows protocol \cite{huang2020predicting} to take into account the effects of noise. We studied scenarios in which the quantum computer implementing the classical shadows protocol is subject to various noise channels. The noise models we considered include depolarizing noise, dephasing noise and amplitude damping noise.

For each of these noise models, we derived upper bounds for the number of samples needed to achieve expecatation value estimates with a given accuracy. These upper bounds are specified in terms of a shadow seminorm that we introduce in this paper. The shadow seminorm generalizes the shadow norm used to bound the sample complexity in the noiseless classical shadows protocol \cite{huang2020predicting}. By modifying the classical post-processing step of the noiseless protocol, we introduced a new estimator that remains unbiased in the presence of noise. A high-level takeaway of our work is that the classical shadows protocol is still efficient for certain estimation tasks, even in the presence of noise.

We conclude by listing a few open questions and future directions that could build on this work.

\begin{enumerate}
    \item 
    Comparison of our work with \cite{chen2020robust}.
    What if the true noise channel is given by $\mathcal E$, but the user thinks that the noise channel is given by $\mathcal F \neq \mathcal E$? This will result in the application of the inverse shadow channel $\mathcal M^{-1}_{\mathcal U,\mathcal F}$ instead of $\mathcal M^{-1}_{\mathcal U,\mathcal E}$, which will likely lead to a classical shadow that is not an unbiased estimator of $\rho$. We leave it as an open problem to analyze this setting and give bounds on the bias of the estimator. Furthermore, how does our work compare with the approach given in \cite{chen2020robust}?
    If $\mathcal F$ and $\mathcal E$ are ``close'' enough, is our approach preferred to \cite{chen2020robust}? 
    \item 
    Scope and limitations of our noise model.
    As noted in \cref{sec:our_contributions}, the assumption on our noise model is sometimes referred to as the \textit{GTM noise assumption}\footnote{\label{footnote:GTM_noise} For the GTM noise assumption, there is some freedom involved in whether the Markovian noise acts before or after the perfect application of the unitary operation. Indeed, some references (like \cite{flammia2020efficient,chen2022quantum}) have chosen to put the noise before the unitary and others (like \cite{chen2020robust} and this manuscript) have chosen to put the noise after the unitary. More generally, one could consider the case where known noise channels act both before and after the unitary; in \cref{sec:noisy_input_states}, we discuss this case and show that this leads to only a minor modification of the quantities involved in \cref{algo:full_protocol}.}, which allows the noisy channel $\tilde {\mathcal U}$ to be written as $\tilde {\mathcal U} = \mathcal N \circ \mathcal U$, where $\mathcal U = U(\cdot)U^\dag$ is the ideal unitary channel and $\mathcal N$ is a quantum channel that is independent of both $U$ and of the physical time at which the computation is performed.
    While this assumption is common in the literature (see \cite{chen2020robust,flammia2020efficient,chen2022quantum} and references therein), 
    it is likely too simplistic to represent noise on real devices \cite{young2020diagnosing}. 
    If one can give empirical evidence that our algorithm achieves higher accuracy than the original classical shadows protocol, then that would serve as evidence that the GTM assumption is not too simplistic for classical shadows.
    However, it is still possible that a more realistic noise model could lead to an algorithm that produces higher-accuracy estimates.
    We leave it as an open problem to analyze classical shadows with more realistic noise models, e.g., noise that is gate-dependent and/or non-Markovian.
    \item Experimental demonstration of the classical shadows protocol, where our work is used to improve the accuracy of the estimation. Specifically, this demonstration would involve running experiments for which the estimation accuracy is improved by inverting the noisy shadow channel (rather than the noiseless shadow channel). This would build on some experimental work on classical shadows that have been performed recently, for example, \cite{struchalin2021experimental,zhang2021experimental,liu2022experimental}. 
    \item Invertibility of the noisy shadow channel. Are there nice and simple necessary and sufficient conditions for invertibility of the noisy shadow channel? This would generalize the result stated in \cref{sec:classical_shadows_review} that a sufficient condition for the noiseless shadow channel to be invertible is that the unitary ensemble is tomographically complete; and would generalize Claim
    \ref{claim:invertibility_global_clifford}, which states that if the unitary ensemble is a 2-design and the noise channel is denoted by $\mathcal E$, then $\tr(\channel \circ \mathrm{diag}) \neq 1$ if and only if the shadow channel is invertible.
    \item Comparison of classical shadows with competing methods for estimating properties of quantum states on noisy quantum devices. When should one use classical shadows over other methods? This can be investigated theoretically, where one establishes theoretical performance guarantees for these methods; or numerically or experimentally, where one performs empirical comparisons between the performance of different methods.
    
    Examples of competing methods include those that we mentioned in \cref{sec:propertyEstimationLiterature} and \cref{sec:solvingMeasurementProblem}. We note here that there has been some recent work along this direction. 
    For example,
    recent work by Hadfield et al.\ have compared a non-uniform version of classical shadows (called locally-biased classical shadows) with competing methods like grouping and $\ell^1$-sampling, and have shown that it outperforms these other methods for the task of estimating expectation values of molecular Hamiltonians \cite{hadfield2022measurements}. An important next step would be to investigate if these advantages continue to hold in the presence of noise.
\end{enumerate}

\section*{Acknowledgements}

We thank Peter D.~Johnson for useful discussions. We thank Juan Carrasquilla and the anonymous reviewers for their valuable feedback on this manuscript. DEK acknowledges funding support from the National Research Foundation, Singapore, through Grant NRF2021-QEP2-02-P03.

\appendix

\section{Deferred Proofs}

\subsection{Proof of \texorpdfstring{\cref{lem:identity3designsum}}{Lemma}}
\label{app:identity3designsum}

To prove \cref{lem:identity3designsum}, we first introduce some notation. For a qudit linear superoperator $\Lambda:\mathcal L(\mathbb C^d) \rightarrow 
\mathcal L(\mathbb C^d)$ and $x \in [d]$, define
\begin{align}
    t_{\Lambda,x}&\overset{\mathrm{def}}{=} \tr(\Lambda(\ketbra xx)), \\
    \Lambda_{xxxx} &\overset{\mathrm{def}}{=} \bra x \Lambda(\ketbra xx) \ket x.
\end{align}
These functions satisfy the following properties:
\begin{claim}
\label{claim:propertiesddagger}
Consider the unary operator $()^\ddagger$ defined in
\cref{def:ddagger}. Then,
\begin{enumerate}
    \item 
    \begin{align}
    \sum_{x \in [d]} t_{\Lambda,x} = \sum_{x \in [d]} t_{\Lambda^\ddagger,x} = \tr(\Lambda(I)).
    \end{align}
    If $\Lambda$ is trace-preserving or unital, then
    \begin{align}
    \sum_{x \in [d]} t_{\Lambda,x} = d.
    \end{align}
    \item 
    \begin{align}
        (\Lambda^\ddagger)_{xxxx} = \Lambda_{xxxx}.
    \end{align} 
    \item
    \begin{align}
        \sum_{x\in [d]} \Lambda_{xxxx}
        =
        \sum_{x\in [d]}
        (\Lambda^\ddagger)_{xxxx}
        = \Tr(\Lambda\circ\diag).
    \end{align}
\end{enumerate}
\end{claim}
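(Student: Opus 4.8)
The plan is to dispatch the three items in sequence, relying only on linearity of superoperators and of the trace, the defining property of the Hilbert--Schmidt adjoint $\langle A, \Lambda(B)\rangle = \langle \Lambda^*(A), B\rangle$ together with $(\Lambda^*)^* = \Lambda$, the conjugate-symmetry $\langle A,B\rangle = \overline{\langle B,A\rangle}$ of the Hilbert--Schmidt inner product, the elementary identities $\tr(A^\dag) = \overline{\tr(A)}$ and $\bra x A^\dag \ket x = \overline{\bra x A \ket x}$, and the resolution $I = \sum_{x\in[d]}\ketbra xx$.

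For item (1), I would first observe that by linearity $\sum_{x\in[d]} t_{\Lambda,x} = \tr\bigl(\Lambda(\sum_x \ketbra xx)\bigr) = \tr(\Lambda(I))$, and identically $\sum_{x\in[d]} t_{\Lambda^\ddagger,x} = \tr(\Lambda^\ddagger(I))$; so it remains to check $\tr(\Lambda^\ddagger(I)) = \tr(\Lambda(I))$. Unwinding $\Lambda^\ddagger(I) = (\Lambda^*(I))^\dag$ (using $I^\dag = I$) and applying $\tr(A^\dag) = \overline{\tr(A)}$ reduces this to $\overline{\tr(\Lambda^*(I))} = \tr(\Lambda(I))$. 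Writing $\tr(\Lambda^*(I)) = \langle I, \Lambda^*(I)\rangle = \langle \Lambda(I), I\rangle = \overline{\langle I,\Lambda(I)\rangle} = \overline{\tr(\Lambda(I))}$, where the second equality is the adjoint property applied to $(\Lambda^*)^* = \Lambda$, finishes the computation after one more conjugation. Finally, if $\Lambda$ is trace-preserving then $\tr(\Lambda(I)) = \tr(I) = d$, and if $\Lambda$ is unital then $\Lambda(I) = I$ so again $\tr(\Lambda(I)) = d$.

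For item (2), I would expand $(\Lambda^\ddagger)_{xxxx} = \bra x \bigl(\Lambda^*(\ketbra xx)\bigr)^\dag\ket x = \overline{\bra x \Lambda^*(\ketbra xx)\ket x}$, using that $\ketbra xx$ is Hermitian. Rewriting $\bra x \Lambda^*(\ketbra xx)\ket x = \langle \ketbra xx, \Lambda^*(\ketbra xx)\rangle = \langle \Lambda(\ketbra xx), \ketbra xx\rangle = \overline{\bra x\Lambda(\ketbra xx)\ket x} = \overline{\Lambda_{xxxx}}$ — again by the adjoint property (now for $(\Lambda^*)^* = \Lambda$) and the relation $\langle A,B\rangle = \tr(A^\dag B) = \bra x A^\dag B\ket x$ when $B = \ketbra xx$ — the two conjugations cancel and give $(\Lambda^\ddagger)_{xxxx} = \Lambda_{xxxx}$. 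Item (3) is then immediate: the identity $\sum_{x\in[d]}\Lambda_{xxxx} = \Tr(\Lambda\circ\diag)$ is exactly the explicit formula for $\Tr(\mathcal E\circ\diag)$ recorded in \cref{sec:linear_operators_and_quantum_channels} specialized to $\mathcal E = \Lambda$, and $\sum_x (\Lambda^\ddagger)_{xxxx} = \sum_x \Lambda_{xxxx}$ follows termwise from item (2).

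There is no substantive obstacle here; the only point requiring care is bookkeeping the complex conjugates — those introduced by the dagger in the definition of $()^\ddagger$ and those introduced by the conjugate-linearity of $\langle\cdot,\cdot\rangle$ in its first slot — so that they cancel in pairs rather than leaving a stray conjugation. (One can also note that when $\Lambda$ is completely positive all the relevant quantities are real, so the conjugations are harmless even individually.)
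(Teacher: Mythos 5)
Your proof is correct, and it matches the paper's approach: the paper disposes of this claim with ``by straightforward calculation,'' and your argument is exactly that calculation spelled out, with the conjugations from the dagger and from the conjugate-linearity of the Hilbert--Schmidt inner product correctly paired so they cancel. Nothing further is needed.
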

\begin{proof}
By straightforward calculation.
\end{proof}

We also need to evaluate the Haar integral in \cref{eq:t-foldTwirlHaar} when $t=3$. 
To evaluate the Haar integral, 
we first introduce some notation. Let $S_3 = \{1,(12),(13),(23),(123),(132)\}$ denote the symmetric group on three elements (where we have written its elements in cycle notation). For each $\pi \in S_3$, define the permutation operator $W_\pi \in \mathbb U((\mathbb C^d)^{\otimes n})$ to be the unique linear operator satisfying
\begin{align}
    W_\pi(x_1\otimes x_2 \otimes x_3) = x_{\pi^{-1}(1)} \otimes x_{\pi^{-1}(2)} \otimes x_{\pi^{-1}(3)}
\end{align}
for all $x_1,x_2, x_3 \in \mathbb C^3$. Equivalently,
\begin{align}
    W_\pi = \sum_{x \in \mathbb Z_d^n} \ket x \bra{\pi(x)},
\end{align}
where $\ket{\pi(x)}= \ket{x_{\pi(1)},\ldots,x_{\pi(n)}}$.

Following \cite{eggeling2001separability}, define the following linear combinations of permutation operators:
\begin{align}
\label{eq:R+}
    R_+ &= \frac 16 \sum_{\pi \in S_3} W_\pi = \frac 16 (I +W_{12}+W_{13} + W_{23} +W_{123}+W_{132}), \\
    R_- &= \frac 16 \sum_{\pi \in S_3} \mathrm{sgn}(\pi) W_\pi = \frac 16 (I - W_{12} - W_{13} - W_{23} +W_{123}+W_{132}), \\
    R_0 &= \frac 13(2I-W_{123}-W_{132}), \\
    R_1 &= \frac 13(2 W_{23}-W_{13}-W_{12}), \\
    R_2 &= \frac{1}{\sqrt 3}(W_{12}-W_{13}), \\
    \label{eq:R3}
    R_3 &= \frac{\ii}{\sqrt{3}}(W_{123}-W_{132}),
\end{align}
where we have dropped the parentheses in the notation for the permutation operators: $W_{12}=W_{(12)}, W_{13}=W_{(13)}$, etc.

When $t=3$, the Haar integral in \cref{eq:t-foldTwirlHaar} may be expressed in terms of the operators $R_i$ as follows \cite[Eq.~(A3)]{johnson2013compatible}:
\begin{align}
    \label{eq:HaarIntegralt3}
    T_3^{(d)}(A)
&=
\frac{6 \tr(R_+ A)}{d(d+1)(d+2)} R_+
+
\frac{6 \tr(R_- A)}{d(d-1)(d-2)} R_-
+
\frac{3}{2d(d^2-1)} \sum_{i=0}^3 \tr(R_i A) R_i.
\end{align}

We now state and prove the following identities.
\begin{lemma}
    Let $d\in \mathbb Z^+$ and $x \in \mathbb Z_d$. Let $\Lambda: \mathcal L(\mathbb C^d)\rightarrow \mathcal L(\mathbb C^d)$ be a linear superoperator, and let $A,B,\Gamma \in \mathcal L (\mathbb C^d)$ be linear operators. Then,
 \begin{align}
 \label{eq:T3dFirst}
        T_3^{(d)} (\ketbra{zxx}{yxx})
        &=
        \frac{2}{d(d+1)(d+2)}(\delta_{yz}+2\delta_{xyz}) R_+
        +
        \frac{1}{d(d+1)(d-1)} (\delta_{yz} - \delta_{xyz})(R_0+R_1).
        \\
        \label{eq:T3dSecond}
        T_3^{(d)}(\Gamma \otimes \ketbra{xx}{xx})
        &=
       \frac{2}{d(d+1)(d+2)}(
       \tr\Gamma 
       +
       2\bra x \Gamma \ket x
       ) R_+ \nn
        &\qquad +
       \frac{1}{d(d+1)(d-1)} (
       \tr\Gamma
       -
       \bra x \Gamma \ket x
       )(R_0+R_1).
       \\
       \label{eq:T3dThird}
    &\!\!\!\!\!\!\!\!\!\!\!\!\!\!\!\!\!\!\!\!\!\!\!\!\!\!\!\!\!\!\!\!\!\!\!\!\!\!\!\!\!\!\!\!\!
       \tr_{23}\{
       T_3^{(d)}(
       \Lambda(\ketbra xx) \otimes \ketbra{xx}{xx})(I\otimes B \otimes C)
       \} \nn
       &=
       \frac{1}{(d-1)d(d+1)(d+2)}\Big\{
       \left((1+d)t_{\Lambda,x} - 2\Lambda_{xxxx}\right)
       \left[\tr(BC) + \tr(B)\tr(C)\right]
       \nn
       &\quad\qquad\qquad + (d \Lambda_{xxxx}-t_{\Lambda,x}\left[
       B \tr(C) + C\tr(B) + BC + CB
       \right]
       \Big\}.
    \end{align}
    
\end{lemma}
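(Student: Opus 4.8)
The plan is to prove the three identities in order, reducing everything to two elementary computations: the traces $\tr(W_\pi A)$ of permutation operators against a rank-one operator for \cref{eq:T3dFirst}, and the partial traces $\tr_{23}\{W_\pi(I\otimes B\otimes C)\}$ for \cref{eq:T3dThird}. For \cref{eq:T3dFirst} I would substitute $A = \ketbra{zxx}{yxx}$ directly into the Haar-integral formula \cref{eq:HaarIntegralt3}; this requires only the six scalars $\tr(W_\pi A) = \bra{yxx}W_\pi\ket{zxx}$ for $\pi\in S_3$. Using $W_\pi\ket{a_1a_2a_3} = \ket{a_{\pi^{-1}(1)}a_{\pi^{-1}(2)}a_{\pi^{-1}(3)}}$, a short case check gives $\tr(W_{1}A) = \tr(W_{23}A) = \delta_{yz}$ and $\tr(W_\pi A) = \delta_{xyz}$ for the remaining four permutations. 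Combining these through the definitions \cref{eq:R+,eq:R3} of $R_+,R_-,R_0,\dots,R_3$ one obtains $\tr(R_+A) = \tfrac13(\delta_{yz}+2\delta_{xyz})$, $\tr(R_0A) = \tr(R_1A) = \tfrac23(\delta_{yz}-\delta_{xyz})$, and $\tr(R_-A) = \tr(R_2A) = \tr(R_3A) = 0$ --- the vanishing of the $R_-, R_2, R_3$ contributions is exactly why the right-hand side of \cref{eq:T3dFirst} is so short. Substituting back and using $(d+1)(d-1) = d^2-1$ yields the stated expression.

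For \cref{eq:T3dSecond} I would expand $\Gamma = \sum_{z,y}\bra z\Gamma\ket y\,\ketbra zy$, so that $\Gamma\otimes\ketbra{xx}{xx}$ becomes a linear combination of the operators handled by \cref{eq:T3dFirst}; the sums then collapse via $\sum_{z,y}\bra z\Gamma\ket y\,\delta_{yz} = \tr\Gamma$ and $\sum_{z,y}\bra z\Gamma\ket y\,\delta_{xyz} = \bra x\Gamma\ket x$. For \cref{eq:T3dThird} I would apply \cref{eq:T3dSecond} with $\Gamma = \Lambda(\ketbra xx)$, so that $\tr\Gamma = t_{\Lambda,x}$ and $\bra x\Gamma\ket x = \Lambda_{xxxx}$, reducing the left-hand side to a combination of $\tr_{23}\{R_+(I\otimes B\otimes C)\}$ and $\tr_{23}\{(R_0+R_1)(I\otimes B\otimes C)\}$ with prefactors $\tfrac{2(t_{\Lambda,x}+2\Lambda_{xxxx})}{d(d+1)(d+2)}$ and $\tfrac{t_{\Lambda,x}-\Lambda_{xxxx}}{d(d+1)(d-1)}$. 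The key sub-step is the table of partial traces of the six permutation operators, each verified in index notation: $\tr_{23}\{W_1(I\otimes B\otimes C)\} = \tr(B)\tr(C)\,I$, $\tr_{23}\{W_{23}(I\otimes B\otimes C)\} = \tr(BC)\,I$, $\tr_{23}\{W_{12}(I\otimes B\otimes C)\} = \tr(C)\,B$, $\tr_{23}\{W_{13}(I\otimes B\otimes C)\} = \tr(B)\,C$, $\tr_{23}\{W_{123}(I\otimes B\otimes C)\} = CB$, and $\tr_{23}\{W_{132}(I\otimes B\otimes C)\} = BC$. Using $R_+ = \tfrac16\sum_{\pi\in S_3}W_\pi$ and $R_0+R_1 = \tfrac13(2I + 2W_{23} - W_{12} - W_{13} - W_{123} - W_{132})$, the left-hand side becomes a linear combination of $(\tr(BC)+\tr(B)\tr(C))\,I$ and $B\tr(C)+C\tr(B)+BC+CB$; putting the two prefactors over the common denominator $d(d-1)(d+1)(d+2)$ and simplifying the numerators produces the coefficients $(1+d)t_{\Lambda,x}-2\Lambda_{xxxx}$ and $d\Lambda_{xxxx}-t_{\Lambda,x}$, which is the claimed identity.

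The main obstacle I anticipate is purely bookkeeping: evaluating the six partial traces $\tr_{23}\{W_\pi(I\otimes B\otimes C)\}$ without transposition or ordering mistakes (this is where $CB$ and $BC$ can get swapped), and then carrying out the common-denominator algebra so that the $d$-dependent factors collapse exactly as written. Nothing here is conceptually deep, but it is error-prone, so I would perform the permutation-operator computations explicitly in components and sanity-check the final numerators by specializing to a small value of $d$.
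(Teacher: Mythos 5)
Your proposal is correct and follows essentially the same route as the paper's proof: substituting $A=\ketbra{zxx}{yxx}$ into the Haar formula and computing $\tr(R_iA)$ (your intermediate values $\tr(W_\pi A)$, the vanishing of the $R_-,R_2,R_3$ terms, and the partial-trace table $\xi_\pi=\tr_{23}\{W_\pi(I\otimes B\otimes C)\}$ all match the paper), then using linearity for the second identity and the specialization $\Gamma=\Lambda(\ketbra{x}{x})$ with the common-denominator simplification for the third. The final coefficients $(1+d)t_{\Lambda,x}-2\Lambda_{xxxx}$ and $d\Lambda_{xxxx}-t_{\Lambda,x}$ come out exactly as claimed, so there is no gap.
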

\begin{proof} \hfill
\renewcommand\labelitemi{$\vcenter{\hbox{\tiny$\bullet$}}$}
\begin{itemize} 
    \item To prove \cref{eq:T3dFirst},
let $A= \ketbra{zxx}{yxx}$. Using the definitions of $R_i$ from Eqs.~\eqref{eq:R+}--\eqref{eq:R3}, we can calculate that
\begin{align}
    \tr(R_+ A) &= \frac 13(\delta_{yz} + 2 \delta_{xyz}), \\
    \tr(R_- A) &= \tr(R_2 A) = \tr(R_3 A) = 0, \\
    \tr(R_0 A) &= \tr(R_1 A)
    = \frac 23 (\delta_{yz}
    - \delta_{xyz}).
\end{align}
Substituting these into \cref{eq:HaarIntegralt3} gives \cref{eq:T3dFirst}.
\item
To prove \cref{eq:T3dSecond}, we decompose
\begin{align}
    \Gamma = \sum_{yz} \gamma_{yz} \ketbra zy.
\end{align}
Then, by linearity,
\begin{align}
    T_3^{(d)}(\Gamma \otimes \ketbra{xx}{xx})
    =
    \sum_{yz} \gamma_{yz} 
    T_3^{(d)} (\ketbra{zxx}{yxx}).
\end{align}

\cref{eq:T3dSecond} follows from \cref{eq:T3dFirst} and the facts that
\begin{align}
    \sum_{yz} \gamma_{yz} \delta_{yz} &= \tr\Gamma, \\
    \sum_{yz} \gamma_{yz} \delta_{xyz} &= \gamma_{xx} = \bra x \Gamma \ket x.
\end{align}
\item To prove \cref{eq:T3dThird}, set $\Gamma = \Lambda(\ketbra xx)$. Then, $\tr\Gamma = t_{\Lambda,x}$ and $\bra x \Gamma \ket x = \Lambda_{xxxx}$.

Using \cref{eq:T3dSecond},
\begin{align}
    T_3^{(d)}(
       \Lambda(\ketbra xx) \otimes \ketbra{xx}{xx})
    &=  \frac{2}{d(d+1)(d+2)}(
       t_{\Lambda,x}
       +
       2 \Lambda_{xxxx}
       ) R_+ \nn
        &\qquad +
       \frac{1}{d(d+1)(d-1)} (
       t_{\Lambda,x}
       -
       \Lambda_{xxxx}
       )(R_0+R_1).
\end{align}

Substituting this into the left-hand-side of \cref{eq:T3dThird} gives
\begin{align}
\label{eq:LHSexpression}
    \mathrm{LHS}
    =
    \frac{2(t_{\Lambda,x}+2\Lambda_{xxxx})}{d(d+1)(d+2)}
    \underbrace{\tr_{23}[R_+ (I\otimes B\otimes C)]}_{\circled{1}}
    +
    \frac{t_{\Lambda,x}-\Lambda_{xxxx}}{d(d+1)(d-1)}
    \underbrace{\tr_{23}[(R_0+R_1) (I\otimes B\otimes C)]}_{\circled{2}}.
\end{align}

Let
\begin{align}
    \xi_\pi = \tr_{23} (W_\pi (I \otimes B \otimes C)).
\end{align}
Expanding $R_+, R_0$ and $R_1$, we obtain
\begin{align}
\label{eq:Circled1LHS}
    \circled{1} &=
    \frac 16 \sum_{\pi \in S_3} \xi_\pi 
    \nn
    &= \tfrac 16
    (\tr(B)\tr(C) + \tr(BC) + B\tr(C) + C\tr(B) +BC + CB),
    \\
\label{eq:Circled2LHS}
    \circled{2} &=
    \tr_{23}\{
    [\tfrac 13 (2I - W_{123} - W_{132})
    +\tfrac 13 (2W_{23} - W_{13} - W_{12})
    ] (I\otimes B\otimes C)
    \} \nn
    &= \frac 13 (2\xi_1 - \xi_{(12)} - \xi_{(13)} + 2 \xi_{(23)} - \xi_{(123)} - \xi_{132})
    \nn
    &= \tfrac 13
    (2\tr(B)\tr(C) + 2\tr(BC) - B\tr(C) - C \tr(B) - BC - CB),
\end{align}
where we used the following identities
\begin{align}
    \xi_1 &= \tr(B) \tr(C) \\
    \xi_{(12)} &= B \tr(C) \\
\xi_{(13)} &= C \tr(B) \\
\xi_{(23)} &= \tr(BC) \\
\xi_{(123)} &= CB  \\
\xi_{(132)} &= B C.
\end{align}

Substituting \cref{eq:Circled1LHS} and \cref{eq:Circled2LHS} into \cref{eq:LHSexpression} and rearranging terms gives \cref{eq:T3dThird}.
\end{itemize}
\end{proof}

We are now ready to prove
\cref{lem:identity3designsum}. Let $\Udag$ be a qudit 3-design and $\mathcal E:\mathcal L(\mathbb C^d) \rightarrow \mathcal L(\mathbb C^d)$ be a linear superoperator. Let $b \in [d]$ and let $A,B,C \in \mathcal L(\mathbb C^d)$ be linear operators. First, we consider the function
\begin{align}
    \Xi_{\mathcal E}(b):&= \E\limits_{U\sim \mathcal U} U^\dag \mathcal E(\ketbra bb) U \bra b UBU^\dag \ket b \bra b UCU^\dag \ket b \nn
    &=
    \E\limits_{U\sim \mathcal U} U^\dag \mathcal E(\ketbra bb) U \bra {bb} (U\otimes U) (B\otimes C)(U^\dag \otimes U^\dag) \ket{bb}
    \nn
    &=
    \tr_{23} \E_{U\sim \mathcal U}
    \left\{
    U^\dag \mathcal E(\ketbra bb) U \otimes (U^\dag \otimes U^\dag) 
    \ketbra{bb}{bb}(U\otimes U)
    (B \otimes C)
    \right\}\nn
    &=
    \tr_{23}\left[
    \E\limits_{U\in \mathcal U}
    (U^\dag \otimes U^\dag \otimes U^\dag) \mathcal E(\ketbra bb)\otimes \ketbra{bb}{bb} (U\otimes U\otimes U)(I\otimes B \otimes C)\right] \nn
    &=
    \tr_{23}\left[
    \E\limits_{U\in \mathcal U^\dag}
    (U \otimes U \otimes U) \mathcal E(\ketbra bb)\otimes \ketbra{bb}{bb} (U^\dag\otimes U^\dag \otimes U^\dag)(I\otimes B \otimes C)
    \right] \nn
    &= \tr_{23}\left[
    T_3^{(d)}(\mathcal E(\ketbra bb)\otimes \ketbra{bb}{bb}) (I \otimes B \otimes C
    )\right] \nn
    &= \frac{1}{(d-1)d(d+1)(d+2)}
    \Big\{
    \left[
    (1+d) t_{\mathcal E,b} - 2\mathcal E_{bbbb}
    \right]\left[
    \tr(BC)+\tr(B)\tr(C)
    \right] \nn
    &\quad +
    (d \mathcal E_{bbbb}-t_{\mathcal E,b})
    \left[
    B \tr(C) + C\tr(B) + BC + CB
    \right] \Big\}
    \label{eq:generalizedEqS36}
\end{align}
where the sixth line follows from the assumption that $\mathcal U^\dag = \{U:U^\dag \in \mathcal U\}$ is a 3-design, and the last line follows from
\cref{eq:T3dThird}.

Summing $\Xi_{\mathcal E} (b)$ over all $b$,
we obtain
\begin{align}
    \sum_{b\in [d]} \Xi_{\mathcal E} (b)
    &=
    \sum_{b\in [d]} \Xi_{\mathcal E^\ddagger} (b) \nn
    &=
    \label{eq:sumXib}
    \frac{1}{(d-1)d(d+1)(d+2)}
    \Big\{
    \left[
    (1+d) \tr(\mathcal E(I)) - 2\tr(\mathcal E\circ \diag)
    \right]\left[
    \tr(BC)+\tr(B)\tr(C)
    \right] \nn
    &\quad +
    (d \tr(\mathcal E\circ \diag)
    -\tr(\mathcal E(I))
    \left[
    B \tr(C) + C\tr(B) + BC + CB
    \right]
    \Big\},
\end{align}
where $\mathcal E^\ddagger(A) = (\mathcal E^*(A^\dag))^\dag$. We use the (easy-to-verify) fact that $\mathcal E$ is trace-preserving iff 
$\mathcal E^\ddagger$ is unital, which implies that 
$\tr(\mathcal E(I)) = \tr(\mathcal E^\ddagger(I))$ and 
$\Tr(\mathcal E \circ \diag) =  \Tr(\mathcal E^\ddagger \circ \diag)$.

Hence,
\begin{align}
    & \E\limits_{U\sim \mathcal U} \sum_{b\in[d]} 
    \bra b \mathcal E(UAU^\dag) \ket b \bra b UBU^\dag \ket b \bra b UCU^\dag \ket b \nn
    &\quad = 
    \E\limits_{U\sim \mathcal U} \sum_{b\in[d]} \tr\{\mathcal E(UAU^\dag) \ketbra bb\} \bra b UBU^\dag \ket b \bra b UCU^\dag \ket b 
    \nn
    &\quad =
    \E\limits_{U\sim \mathcal U} \sum_{b\in[d]} \tr\{U A U^\dag \mathcal E^\ddagger(\ketbra bb) \} \bra b UBU^\dag \ket b \bra b UCU^\dag \ket b 
    \nn
    &\quad =
    \E\limits_{U\sim \mathcal U} \sum_{b\in[d]} \tr\{A U^\dag \mathcal E^\ddagger(\ketbra bb) U \} \bra b UBU^\dag \ket b \bra b UCU^\dag \ket b
    \nn
    &\quad =
    \tr\left\{ A
    \sum_{b\in[d]} \E\limits_{U\sim \mathcal U}
    U^\dag \mathcal E^\ddagger(\ketbra bb) U \bra b UBU^\dag \ket b \bra b UCU^\dag \ket b
    \right\}
    \nn
    &\quad = \tr\left\{A \sum_{b\in [d]} \Xi_{\mathcal E^\ddagger} (b)\right\}
    \nn
    &\quad = \frac{(1+d)\alpha-2\beta}{(d-1)d(d+1)(d+2)} 
    \left(
    \tr(A)\tr(BC) + \tr(A)\tr(B) \tr(C)
    \right) \nn
    &\qquad\quad+
    \frac{d\beta-\alpha}{(d-1)d(d+1)(d+2)}
    \left(\tr(AB) \tr(C)
    +
    \tr(AC) \tr(B)
    +
    \tr(ABC)
    +
    \tr(ACB)
    \right),
\end{align}
where $\alpha = \tr(\mathcal E(I))$ and $\beta = \Tr(\mathcal E \circ \diag)$, and where the last line follows from applying \cref{eq:sumXib}.

\subsection{Proof of \texorpdfstring{\cref{lemma:upper-bound-on-variance}}{Lemma}}\label{subsec:proof-of-lemma-upper-bound-on-variance}

Below we give the proof of \cref{lemma:upper-bound-on-variance}.

\begin{proof}
By definition, the variance of $\hat{o}$ is 

\[
\underset{\substack{U \sim \ensemble \\ b \sim P_{b}}}{\Var}[\hat{o}] 
= \underset{\substack{U \sim \ensemble \\ b \sim P_{b}}}{\E} \Big[ \big(\hat{o} - 
\underset{\substack{U \sim \ensemble \\ b \sim P_{b}}}{\E}[\hat{o}] \big)^2 \Big]. 
\]
Let $O_o = O - \tr(O)\frac{\mathbb{I}}{2^n}$ be the traceless part of $O$. Because $\tr(\rho) = \tr(\hat{\rho}) = 1$, it follows that $\hat{o} - \E[\hat{o}] = \tr(O_o \hat{\rho}) = \tr(O_o \rho)$. Therefore, the variance depends only on $O_o$: 

\begin{align}
\underset{\substack{U \sim \ensemble \\ b \sim P_{b}}}{\Var}[\hat{o}] 
= \underset{\substack{U \sim \ensemble \\ b \sim P_{b}}}{\E} \Big[ \big(\tr(O_o \hat{\rho}) - \tr(O_o \rho)\big)^2 \Big].
\end{align}
Simplifying further, we get

\begin{align}
\underset{\substack{U \sim \ensemble \\ b \sim P_{b}}}{\Var}[\hat{o}] 
&= \underset{\substack{U \sim \ensemble \\ b \sim P_{b}}}{\E} \Big[ \big(\tr(O_o \hat{\rho}) - \tr(O_o \rho)\big)^2 \Big] \nn 
&= \underset{\substack{U \sim \ensemble \\ b \sim P_{b}}}{\E} \Big[ \tr(O_o \hat{\rho})^2 + \tr(O_o \rho)^2 - 2 \tr(O_o \rho) \tr(O_o \hat{\rho}) \Big] \nn
&= \underset{\substack{U \sim \ensemble \\ b \sim P_{b}}}{\E} \Big[ \big(\tr(O_o \hat{\rho})^2\Big] + \tr(O_o \rho)^2 - 2 \tr(O_o \rho) \underset{\substack{U \sim \ensemble \\ b \sim P_{b}}}{\E} \Big[ \tr(O_o \hat{\rho}) \Big] \nn
&= \underset{\substack{U \sim \ensemble \\ b \sim P_{b}}}{\E} \Big[ \tr(O_o \hat{\rho})^2\Big] - \tr(O_o \rho)^2, 
\end{align}
where 
\begin{align}
    \tr(O_o \hat{\rho}) &= \tr\big(O_o \inverseshadowchannel{\ensemble}{\channel}(U^\dagger \lvert b \rangle\langle b \rvert U)\big) \nn
    &= \tr(\mathcal{M}^{-1, \dagger}_{\ensemble,\channel}(O_o) U^\dagger \lvert b \rangle\langle b \rvert U)\nn
    &= \langle b \rvert U \mathcal{M}^{-1, \dagger}_{\ensemble,\channel}(O_o)  U^\dagger \lvert b \rangle.
\end{align}

To get \textit{a priori} bounds on the variance, we must remove the dependence on the input state $\rho$, which we do by maximizing over all quantum states.  

\begin{align*}
\underset{\substack{U \sim \ensemble \\ b \sim P_{b}}}{\Var}[\hat{o}] 
&= \underset{\substack{U \sim \ensemble \\ b \sim P_{b}}}{\E} \Big[ \langle b \rvert U \mathcal{M}^{-1, \dagger}_{\ensemble, \channel}(O_o)  U^\dagger \lvert b \rangle^2 \Big] - \tr(O_o \rho)^2 \\ 
&= \underset{\substack{U \sim \ensemble }}{\E} \sum_{b \in \{0,1\}^n} P_b(b ; U, \channel, \rho) \langle b \rvert U \mathcal{M}^{-1, \dagger}_{\ensemble,\channel}(O_o)  U^\dagger \lvert b \rangle^2 - \tr(O_o \rho)^2 \\ 
&=  \underset{\substack{U \sim \ensemble }}{\E} \sum_{b \in \{0,1\}^n} \langle b \lvert \channel(U \rho U^\dagger) \lvert b \rangle\! \langle b \rvert U \mathcal{M}^{-1, \dagger}_{\ensemble,\channel}(O_o)  U^\dagger \lvert b \rangle^2 - \tr(O_o \rho)^2 \\ 
&\leq \max_{\sigma \in \mathbb{D}_{2^n}} \underset{\substack{U \sim \ensemble }}{\E} \sum_{b \in \{0,1\}^n} \langle b \lvert \channel(U \sigma U^\dagger) \lvert b \rangle\! \langle b \rvert U \mathcal{M}^{-1, \dagger}_{\ensemble,\channel}(O_o)  U^\dagger \lvert b \rangle^2 - \tr(O_o \rho)^2 \\ 
&= \bigg( \max_{\sigma \in \mathbb{D}_{2^n}}\sqrt{ \underset{\substack{U \sim \ensemble }}{\E} \sum_{b \in \{0,1\}^n} \langle b \rvert \channel(U \sigma U^\dagger) \lvert b \rangle\! \langle b \rvert U \mathcal{M}^{-1, \dagger}_{\ensemble,\channel}(O_o)  U^\dagger \lvert b \rangle^2}\bigg)^2 - \tr(O_o \rho)^2 \\ 
&=  \norm{O_o}_{\shadow, \ensemble, \channel}^2 - \tr(O_o \rho)^2 \\ 
&\leq  \norm{O - \tr(O)\frac{\mathbb{I}}{2^n}}_{\shadow, \ensemble, \channel}^2.
\end{align*}
\end{proof}

\subsection{Proof of \texorpdfstring{\cref{prop:global_shadow_norm_bounds}}{Corollary}}\label{app:proof-of-global-shadow-norm-bounds}

\begin{proof}
We use the fact that if $A \in \mathbb H_{2^n}$, then $2^{-n} \tr(A) \leq \norm{A}_\mathrm{sp} \leq \tr(A)$. First, we prove the lower bound. 

\begin{align}
    \norm{O_o^2}_\mathrm{sp} \geq \frac{1}{2^n}\tr(O_o^2) \implies \norm{O_o}^2_{\shadow, \ensemble, \channel} &\geq  \frac{2^{2n}-1}{(2^n + 2)(\beta - 1)} \Big( \frac{2^n + 2^{2n} - 2\beta}{2^n(\beta -1)} + \frac{2}{2^n} \Big) \tr(O_o^2)  \nn
    &= \frac{(2^{2n} - 1)(2^n -1)}{2^n(\beta - 1)^2}  \tr(O_o^2) \nn
    &= \frac{(2^{n} - 1)^2(2^n +1)}{2^n(\beta - 1)^2}  \tr(O_o^2) \nn 
    &\geq \frac{(2^{n} - 1)^2}{(\beta - 1)^2}  \tr(O_o^2).
\end{align}

Now, the first upper bound. 

\begin{align}
    \norm{O_o^2}_\mathrm{sp} \leq \tr(O_o^2) \implies \norm{O_o}^2_{\shadow, \ensemble, \channel} &\leq  \frac{2^{2n}-1}{(2^n + 2)(\beta - 1)} \Big( \frac{2^n + 2^{2n} - 2\beta}{2^n(\beta -1)} + 2 \Big) \tr(O_o^2)  \nn
    &= \frac{(2^n -1)^2(2^n +1)(2\beta + 2^n)}{2^n(2^n + 2) (\beta - 1)^2} \tr(O_o^2) \nn
    &\leq \frac{(2^n -1)^2(2\beta + 2^n)}{2^n(\beta - 1)^2} \tr(O_o^2) \nn
    &\leq \frac{(2^n -1)^2(2 \cdot 2^n + 2^n)}{2^n(\beta - 1)^2} \tr(O_o^2) \nn 
    &=\frac{3(2^n -1)^2}{(\beta - 1)^2} \tr(O_o^2). 
    \end{align}

Finally, the second upper bound.

\begin{align}
\norm{O_o}^2_{\shadow, \ensemble, \channel} 
&\leq \frac{3(2^n-1)^2}{(\beta -1)^2}\tr(O_o^2) \nn
&= \frac{3(2^n-1)^2}{(\beta -1)^2}\tr\big( (O - \frac{1}{2^n}\tr(O)\mathbb I)^2 \big) \nn
&= \frac{3(2^n-1)^2}{(\beta -1)^2}\big( \tr(O^2) - \frac{1}{2^n}\tr(O)^2 \big) \nn
&\leq \frac{3(2^n-1)^2}{(\beta -1)^2}\tr(O^2).
\end{align}
\end{proof}

\section{When is the Shadow Seminorm a Norm?}\label{app:when-is-shadow-norm}

In this appendix, we prove that the shadow seminorm is indeed a seminorm. In addition, we prove sufficient conditions for the shadow seminorm $\norm{\cdot}_{\mathrm{shadow}, \ensemble, \channel}$ to be a norm. The paradigmatic quantum channels studied in this work—namely the depolarizing channel, dephasing channel and amplitude damping channel—satisfy these conditions, and hence yield shadow seminorms which are also norms. 
We begin by stating the main result of this section. 

\begin{definition}
Let $\Lambda_n$ be the set of $n$-qubit quantum channels $\Theta$ satisfying the following:
\[
\forall b \in \{0,1\}^{n},\,\, \exists \,\sigma \in \mathbb{D}_{2^n} : \bra{b}\Theta(\sigma) \ket{b} \neq 0.
\]
\end{definition}

\begin{proposition}\label{prop:when-is-a-norm}
Let $\ensemble$ be an $n$-qubit unitary ensemble and $\channel$ an $n$-qubit quantum channel such that the shadow channel $\mathcal{M}_{\ensemble, \channel}$ is invertible. Then,
\begin{enumerate}
    \item The shadow seminorm $\norm{\cdot}_{\mathrm{shadow}, \ensemble, \channel}$ is a seminorm.
    \item If $\channel \in \Lambda_n$, then the shadow seminorm $\norm{\cdot}_{\mathrm{shadow}, \ensemble, \channel}$ is a norm.
\end{enumerate}
\end{proposition}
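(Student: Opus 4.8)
The plan is to verify the three seminorm axioms directly from the definition of $\norm{\cdot}_{\mathrm{shadow}, \ensemble, \channel}$ in \cref{eq:noisy_shadow_norm}, and then show the point-separating property holds under the extra hypothesis $\channel \in \Lambda_n$. Throughout, I will abbreviate $\norm{O} = \norm{O}_{\mathrm{shadow}, \ensemble, \channel}$ and write $G_{U,b}(O) = \langle b \rvert U \mathcal{M}^{-1,\dagger}_{\ensemble, \channel}(O) U^\dagger \lvert b \rangle$, so that
\[
\norm{O}^2 = \max_{\sigma \in \mathbb{D}_{2^n}} \underset{U \sim \ensemble}{\E} \sum_{b} \langle b \rvert \channel(U \sigma U^\dagger) \lvert b \rangle\, G_{U,b}(O)^2 .
\]
Note $G_{U,b}$ is linear in $O$ (since $\mathcal{M}^{-1,\dagger}_{\ensemble, \channel}$ is linear), and the weights $w_{\sigma,U,b} := \langle b \rvert \channel(U \sigma U^\dagger) \lvert b \rangle$ are nonnegative because $\channel$ is a quantum channel and hence $\channel(U\sigma U^\dagger)$ is positive semidefinite.

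First I would prove \emph{absolute homogeneity}: $\norm{cO} = \abs{c}\,\norm{O}$ for $c \in \Reals$ (or $\mathbb{C}$, as appropriate for Hermitian observables — here $c$ real suffices since $O \in \mathbb{H}_{2^n}$). This is immediate: $G_{U,b}(cO) = c\,G_{U,b}(O)$, so $G_{U,b}(cO)^2 = c^2 G_{U,b}(O)^2$, and pulling the constant $c^2 \geq 0$ out of the maximum gives $\norm{cO}^2 = c^2 \norm{O}^2$. Next, the \emph{triangle inequality} $\norm{O_1 + O_2} \leq \norm{O_1} + \norm{O_2}$. For each fixed $\sigma$, consider the weighted inner-product space on the index set $\{(U,b)\}$ with weights $w_{\sigma,U,b} \geq 0$; then $F_\sigma(O) := \big(\E_U \sum_b w_{\sigma,U,b}\, G_{U,b}(O)^2\big)^{1/2}$ is a seminorm in $O$ (it is the $L^2$-norm of the vector $(G_{U,b}(O))_{U,b}$ against a nonnegative measure, so the ordinary Minkowski/triangle inequality applies). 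Hence $F_\sigma(O_1 + O_2) \leq F_\sigma(O_1) + F_\sigma(O_2) \leq \norm{O_1} + \norm{O_2}$ for every $\sigma$; taking the maximum over $\sigma$ on the left gives $\norm{O_1 + O_2} \leq \norm{O_1} + \norm{O_2}$. Combined with $\norm{O} \geq 0$ (clear, as it is a max of square roots of sums of nonnegative terms), this establishes that $\norm{\cdot}$ is a seminorm, proving part 1.

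For part 2, I need the point-separating property: if $\channel \in \Lambda_n$ and $\norm{O} = 0$ then $O = 0$. Suppose $\norm{O} = 0$. Then for \emph{every} $\sigma \in \mathbb{D}_{2^n}$, $\E_U \sum_b w_{\sigma,U,b}\, G_{U,b}(O)^2 = 0$, so (nonnegative summands) $w_{\sigma,U,b}\, G_{U,b}(O)^2 = 0$ for all $U$ in the support of $\ensemble$ and all $b$. Now fix any $U$ in the support and any $b$. By the $\Lambda_n$ hypothesis applied to the channel $X \mapsto \channel(UXU^\dagger)$ — which is again a quantum channel, and for which $\langle b\rvert \channel(U\sigma U^\dagger)\lvert b\rangle$ ranges over the same set of values as $b$ ranges, up to the substitution $\sigma \mapsto U^\dagger \sigma U$ which is a bijection of $\mathbb{D}_{2^n}$ — there exists $\sigma$ with $w_{\sigma,U,b} = \langle b\rvert\channel(U\sigma U^\dagger)\lvert b\rangle \neq 0$. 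For that $\sigma$ we must have $G_{U,b}(O) = 0$. Thus $G_{U,b}(O) = \langle b\rvert U\,\mathcal{M}^{-1,\dagger}_{\ensemble,\channel}(O)\,U^\dagger\lvert b\rangle = 0$ for all $U$ in the support and all $b \in \{0,1\}^n$. This says the diagonal entries of $U\,\mathcal{M}^{-1,\dagger}_{\ensemble,\channel}(O)\,U^\dagger$ all vanish for every $U$ in the ensemble, i.e.\ $\diag(U\,\mathcal{M}^{-1,\dagger}_{\ensemble,\channel}(O)\,U^\dagger) = 0$.

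The remaining step — and the place that needs the most care — is concluding from "$\diag(U\,\mathcal{M}^{-1,\dagger}_{\ensemble,\channel}(O)\,U^\dagger) = 0$ for all $U$ in the support of $\ensemble$" that $\mathcal{M}^{-1,\dagger}_{\ensemble,\channel}(O) = 0$, whence $O = 0$ by invertibility of $\mathcal{M}_{\ensemble,\channel}$ (its adjoint and the inverse's adjoint are then injective). Here is the key observation: tomographic completeness of $\ensemble$ is exactly the statement that the only Hermitian $A$ with $\langle b\rvert U A U^\dagger\lvert b\rangle = 0$ for all $U \in \ensemble$, $b$, is $A = 0$ — indeed if $\langle b\rvert U A U^\dagger\lvert b\rangle$ vanishes for all $U,b$ then, decomposing $A = A_+ - A_-$ into positive and negative parts and normalizing, one gets two distinct states $\rho, \sigma$ (or $A = 0$) with identical measurement statistics under every $(U,b)$, contradicting tomographic completeness unless $A = 0$. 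Since invertibility of $\mathcal{M}_{\ensemble,\channel}$ already forces $\ensemble$ to be tomographically complete (if not, the noiseless argument of \cite{huang2020predicting} adapts: two states with the same statistics have the same image, and the error channel $\channel$ applied before the measurement does not change this, so $\mathcal{M}_{\ensemble,\channel}$ cannot be injective), we apply this with $A = \mathcal{M}^{-1,\dagger}_{\ensemble,\channel}(O)$ to get $\mathcal{M}^{-1,\dagger}_{\ensemble,\channel}(O) = 0$, and then $O = 0$. I expect the fiddly part to be making the "invertibility $\Rightarrow$ tomographic completeness" implication airtight in the noisy setting, and cleanly handling the Hermitian-decomposition argument that zero diagonal in every rotated basis forces the operator to be zero; everything else is routine.
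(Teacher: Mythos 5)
Your part 1 is essentially the paper's own argument: per fixed $\sigma$ the quantity is a weighted $L^2$-norm of the linear functionals $G_{U,b}(O)$, so Minkowski gives the triangle inequality and linearity gives homogeneity (the paper merely carries out the Cauchy--Schwarz step by hand). For part 2 your route genuinely differs at the crux, and in a way that is more careful than the paper: the paper sets $U=\mathbb{I}$ ``without loss of generality'' and passes directly from the vanishing of the computational-basis diagonal of $\mathcal{M}^{-1,\dagger}_{\ensemble,\channel}(T)$ to $\mathcal{M}^{-1,\dagger}_{\ensemble,\channel}(T)=0$, whereas you keep every $U$ in the support (your use of $\Lambda_n$ together with the bijection $\sigma\mapsto U^\dagger\sigma U$ is correct) and rightly observe that concluding $A=0$ from the vanishing of all its ensemble-rotated diagonals requires an extra input.

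The gap is in the extra input you supply. Your sketch of ``invertibility of $\mathcal{M}_{\ensemble,\channel}$ forces $\ensemble$ to be tomographically complete'' does not work as stated: if $\rho\neq\sigma$ satisfy $\bra{b}U\rho U^\dagger\ket{b}=\bra{b}U\sigma U^\dagger\ket{b}$ for all $U,b$, it does \emph{not} follow that $\bra{b}\channel(U\rho U^\dagger)\ket{b}=\bra{b}\channel(U\sigma U^\dagger)\ket{b}$, because $\channel$ can move off-diagonal entries of $U(\rho-\sigma)U^\dagger$ onto the diagonal; so ``the error channel applied before the measurement does not change this'' is false in general, and noiseless tomographic incompleteness does not obviously break injectivity of the noisy map. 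The fix is simpler and avoids both tomographic completeness and your $A=A_+-A_-$ decomposition: the image of $\mathcal{M}_{\ensemble,\channel}$ is contained in $\mathrm{span}\{U^\dagger\ketbra{b}{b}U : U\in\mathrm{supp}(\ensemble),\, b\in\{0,1\}^n\}$, so invertibility (surjectivity) forces this family to span all of $\mathcal{L}(\mathbb{C}^{2^n})$; since $\bra{b}UAU^\dagger\ket{b}=\tr\bigl(U^\dagger\ketbra{b}{b}U\,A\bigr)$, your conclusion that these all vanish for $A=\mathcal{M}^{-1,\dagger}_{\ensemble,\channel}(O)$ says $A$ is Hilbert--Schmidt orthogonal to a spanning set, hence $A=0$ and $O=0$ by invertibility. (For weighted or continuous ensembles replace ``all $U$ in the support'' by ``almost all $U$''; the spanning argument is unaffected.) With that replacement your proof is complete, and it in fact supplies the justification that the paper's final implication leaves implicit.
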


The rest of this appendix is dedicated to proving this statement (which amounts to explicitly verifying that the shadow seminorm satisfies the properties of being a seminorm or norm).

First, we show that the condition above doesn't trivially include all channels (i.e., there are channels that are not in $\Lambda_n$). We also show that all unital channels are in $\Lambda_n$.

\begin{claim}\hfill
\begin{enumerate}
    \item \label{item:seminorm1} There exist quantum channels which are not in $\Lambda_n$.
    \item \label{item:seminorm2} If a quantum channel $\Theta$ is unital, then $\Theta \in \Lambda_n$. 
\end{enumerate}
\end{claim}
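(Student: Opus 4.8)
The plan is to prove both parts by exhibiting explicit examples and checking the defining condition of $\Lambda_n$ directly.

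For the first part (that some quantum channels lie outside $\Lambda_n$), I would use a \emph{replacer channel}. Fix a computational basis string $c \in \{0,1\}^n$ and define $\Theta_0 : \mathcal L(\mathbb C^{2^n}) \rightarrow \mathcal L(\mathbb C^{2^n})$ by $\Theta_0(A) = \tr(A)\,\ketbra{c}{c}$. This is a quantum channel: it admits the Kraus representation $\Theta_0(A) = \sum_{j \in \{0,1\}^n} K_j A K_j^\dagger$ with $K_j = \ketbra{c}{j}$, and $\sum_j K_j^\dagger K_j = \sum_j \ketbra{j}{j} = \mathbb I$, so $\Theta_0$ is completely positive and trace-preserving. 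Since $2^n \geq 2$ for $n \geq 1$, there is a basis string $b \in \{0,1\}^n$ with $b \neq c$; for this $b$ and \emph{every} $\sigma \in \mathbb D_{2^n}$ we have $\bra{b}\Theta_0(\sigma)\ket{b} = \tr(\sigma)\,|\langle b | c\rangle|^2 = 0$. Hence the defining property of $\Lambda_n$ fails at $b$, so $\Theta_0 \notin \Lambda_n$.

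For the second part (that every unital channel lies in $\Lambda_n$), suppose $\Theta$ is unital, i.e.\ $\Theta(\mathbb I) = \mathbb I$. I would take the maximally mixed state $\sigma = \mathbb I/2^n$ as the witness for every $b$ simultaneously: by linearity and unitality, $\Theta(\mathbb I/2^n) = \mathbb I/2^n$, hence $\bra{b}\Theta(\mathbb I/2^n)\ket{b} = 1/2^n \neq 0$ for all $b \in \{0,1\}^n$. Therefore $\Theta \in \Lambda_n$.

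Neither part presents a genuine obstacle; the only point requiring mild care is checking, in the first part, that $2^n > 1$ so that a basis string $b \neq c$ exists — which holds whenever $n \geq 1$ (and for $n = 0$ the statement is vacuous, since then $0^n$ is the unique basis string). One could additionally remark that $\Theta_0$ shows $\Lambda_n$ is a proper subset of the set of all $n$-qubit quantum channels, complementing the second part and hence \cref{prop:when-is-a-norm}: all the noise models studied in this paper (depolarizing, dephasing, amplitude damping) are easily verified to lie in $\Lambda_n$ — the first two are unital, and amplitude damping has $\bra{b}\mathrm{AD}_{1,p}^{\otimes n}(\ketbra{b}{b})\ket{b} > 0$ for every $b$ — so their shadow seminorms are in fact norms.
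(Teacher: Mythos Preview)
Your proof is correct and follows essentially the same approach as the paper: for part (1) you use the replacer channel $A \mapsto \tr(A)\ketbra{c}{c}$ (the paper takes $c = 0_n$ specifically) and exhibit its Kraus operators, and for part (2) you use the maximally mixed state as witness exactly as the paper does. The additional remarks about $n \geq 1$ and about the concrete noise models lying in $\Lambda_n$ are nice touches but not part of the claim itself.
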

\begin{proof}
To prove (\ref{item:seminorm1}), choose $\Theta$ to be the following map: $A \mapsto \tr(A)\ket{0_n}\!\bra{0_n}$ ($\ket{0_n}$ denotes the $n$-qubit state where all qubits are in the state $\ket{0}$). From the following Kraus representation of $\Theta$, we see that $\Theta$ is a quantum channel: 
\[
\Theta(A) = \tr(A) \ket{0_n}\!\bra{0_n} = \sum_i \bra{i}A\ket{i}\ket{0_n}\!\bra{0_n} = 
\sum_i \ket{0_n}\!\bra{i}A\ket{0_n}\!\bra{i}^\dagger.
\]
Now take $b = 1_n = 11\ldots1$ (that is, $\ket{b} = \ket{1_n}$ is the $n$-qubit state where all qubits are in the state $\ket{1}$). Then, $\forall \sigma \in \mathbb{D}_{2^n}, \bra{b}\Theta(\sigma)\ket{b} = 0$.
Thus, $\Theta \not\in \Lambda_n$.

To prove (\ref{item:seminorm2}), assume that the quantum channel $\Theta$ is unital. Let $b \in \{0,1\}^n$ and choose $\sigma$ to be the maximally mixed state. Then, 
\[
\bra{b}\Theta(\sigma)\ket{b} = \frac{1}{2^n} \bra{b}\Theta(I)\ket{b} = \frac{1}{2^n} \neq 0.
\]
\end{proof}

\begin{proof}[Proof of \cref{prop:when-is-a-norm}]
\hfill
\begin{enumerate}
    \item 
To show that the shadow seminorm is a seminorm, we shall explicitly verify that it satisfies the triangle inequality and absolute homogeneity.

First, the triangle inequality. 
\begin{align}
   \norm{S + T}_{\mathrm{shadow}, \ensemble, \channel} 
   = \max_{\sigma \in \mathbb{D}_{2^n}} \sqrt{\vphantom{\E \sum_{b \in \{0,1\}^n}} \smash[b]{\underbrace{\E_{U \sim \ensemble} \sum_{b \in \{0,1\}^n} \bra{b} \channel(U \sigma U^\dagger \ket{b} \bra{b}U \mathcal{M}_{\ensemble, \channel}^{-1, \dagger}(S+T) U^\dagger \ket{b}^2}_{\circled{1}}}}.
\end{align}
Then,
\begin{align}
   \circled{1} 
   &= \E_{U \sim \ensemble} \sum_{b \in \{0,1\}^n} \bra{b} \channel(U \sigma U^\dagger \ket{b} \bra{b}U \mathcal{M}_{\ensemble, \channel}^{-1, \dagger}(S+T) U^\dagger \ket{b}^2 \nn
   &= \E_{U \sim \ensemble} \sum_{b \in \{0,1\}^n} \bra{b} \channel(U \sigma U^\dagger \ket{b} \Big(\bra{b}U \mathcal{M}_{\ensemble, \channel}^{-1, \dagger}(S) U^\dagger \ket{b}^2 + \ket{b} \bra{b}U \mathcal{M}_{\ensemble, \channel}^{-1, \dagger}(T) U^\dagger \ket{b}^2\Big) \nn
   &= \E_{U \sim \ensemble} \sum_{b \in \{0,1\}^n} (\alpha_{b,U} + \beta_{b, U})^2 \nn
   &= \E_{U \sim \ensemble} \sum_{b \in \{0,1\}^n} \alpha_{b,U}^2 +
   \E_{U \sim \ensemble} \sum_{b \in \{0,1\}^n} 
   \beta_{b, U}^2 + 
   2 \E_{U \sim \ensemble} \sum_{b \in \{0,1\}^n} 
   \alpha_{b,U}\beta_{b, U},
\end{align}
where we define 
\[
\alpha_{b,U} \overset{\mathrm{def}}{=} \sqrt{\bra{b} \channel(U \sigma U^\dagger) \ket{b}} \bra{b}U \mathcal{M}_{\ensemble,\channel}^{-1, \dagger} (S) U^\dagger \ket{b},
\]
\[
\beta_{b,U}\overset{\mathrm{def}}{=} \sqrt{\bra{b} \channel(U \sigma U^\dagger) \ket{b}} \bra{b}U \mathcal{M}_{\ensemble,\channel}^{-1, \dagger} (T) U^\dagger \ket{b}.
\]
Then, by the Cauchy-Schwarz inequality, 
\begin{align}
   \circled{1}  
    &\leq \E_{U \sim \ensemble} \sum_{b \in \{0,1\}^n} \alpha_{b,U}^2 +
   \E_{U \sim \ensemble} \sum_{b \in \{0,1\}^n} 
   \beta_{b, U}^2 + 
   2 \sqrt{ \E_{U \sim \ensemble} \sum_{b \in \{0,1\}^n} 
   \alpha_{b,U}^2}\sqrt{  \E_{U \sim \ensemble} \sum_{b \in \{0,1\}^n} \beta_{b, U}^2} \nn
   &= 
  \left( \sqrt{ \E_{U \sim \ensemble} \sum_{b \in \{0,1\}^n} 
   \alpha_{b,U}^2} + \sqrt{  \E_{U \sim \ensemble} \sum_{b \in \{0,1\}^n} \beta_{b, U}^2}\right)^2.
\end{align}
Plugging this back into the original expression gives 
\begin{align}
   \norm{S + T}_{\mathrm{shadow}, \ensemble, \channel} 
   &= \max_{\sigma \in \mathbb{D}_{2^n}}     \sqrt{ \E_{U \sim \ensemble} \sum_{b \in \{0,1\}^n} \alpha_{b,U}^2} + \sqrt{  \E_{U \sim \ensemble} \sum_{b \in \{0,1\}^n} \beta_{b, U}^2} \nn
   &\leq \max_{\sigma \in \mathbb{D}_{2^n}}     \sqrt{ \E_{U \sim \ensemble} \sum_{b \in \{0,1\}^n} \alpha_{b,U}^2} + \max_{\sigma \in \mathbb{D}_{2^n}}  \sqrt{  \E_{U \sim \ensemble} \sum_{b \in \{0,1\}^n} \beta_{b, U}^2} \nn
   &= \norm{S}_{\mathrm{shadow}, \ensemble, \channel}  + \norm{T}_{\mathrm{shadow}, \ensemble, \channel}.
\end{align}
Secondly, observe that absolute homogeneity follows immediately by linearity.
\item 
We shall verify that if $\mathcal E \in \Lambda_n$ is satisfied, then the shadow seminorm is point-separating/positive semi-definite, which implies that it is also a norm.

Suppose that $\norm{T}_{\mathrm{shadow}, \ensemble, \channel} = 0$. Then
\begin{align}
   &\max_{\sigma \in \mathbb{D}_{2^n}} \sqrt{\E_{U \sim \ensemble} \sum_{b \in \{0,1\}^n} \bra{b} \channel(U \sigma U^\dagger \ket{b} \bra{b}U \mathcal{M}_{\ensemble, \channel}^{-1, \dagger}(T) U^\dagger \ket{b}^2} = 0 \nn
   &\implies  \bra{b} \channel(U \sigma U^\dagger \ket{b} \bra{b}U \mathcal{M}_{\ensemble, \channel}^{-1, \dagger}(T) U^\dagger \ket{b}^2 = 0, \quad \forall \sigma \in \mathbb{D}_{2^n}, U \in \ensemble, b \in \{0,1\}^n.
\end{align}
Without loss of generality, choose $U = \mathbb{I}$. Since $\channel \in \Lambda_n$, we know that $\forall \sigma \in \mathbb{D}_{2^n}$, there is a $b$ such that $\bra{b}\channel(\sigma) \ket{b} \neq 0$. 
Therefore, we can conclude that 
\[
  \bra{b} \mathcal{M}_{\ensemble, \channel}^{-1, \dagger}(T) \ket{b}^2 = 0, \quad \forall b \in \{0,1\}^n 
  \implies
  \mathcal{M}_{\ensemble, \channel}^{-1, \dagger}(T) = 0
  \implies T = 0,
\]
which completes the proof of our proposition.
\end{enumerate}
\end{proof}

\section{Inconsequential Noise}\label{app:inconsequential-noise}

Recall from \cref{lemma:2design} that when the unitary ensemble is a $2$-design, the shadow channel is a depolarizing channel with depolarizing parameter $f(\channel) = \frac{1}{2^{2n}-1} \Big( \tr(\channel \circ \mathrm{diag}) - \frac{1}{2^n} \tr(\channel (\mathbb I)) \Big)$.
In this setting, 
we characterize inconsequential noise (i.e., the quantum channels that do not affect the classical shadows protocol). 

\begin{claim}\label{claim:inconsequential_noise}
Let $\Udag$ be an $n$-qubit 2-design and let $\channel$ be a linear superoperator. $\channel$ has no effect on $\mathcal M_\ensemble$ (i.e., $\mathcal{M}_{\ensemble, \channel} = \mathcal{M}_\ensemble$) if and only if $\tr(\channel \circ \mathrm{diag}) = 2^n$.
Also, 
$\channel$ has no effect on $\mathcal M_\ensemble$ (i.e., $\mathcal{M}_{\ensemble, \channel} = \mathcal{M}_\ensemble$) if and only if $\langle b \rvert \channel(\lvert b \rangle\!\langle b \rvert) \lvert b \rangle = 1, \,\forall\, b \in \{0,1\}^n$. 
\end{claim}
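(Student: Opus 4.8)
The plan is to use \cref{lemma:2design}, which tells us that for an $n$-qubit $2$-design the shadow channel is the depolarizing channel $\mathcal M_{\ensemble,\channel} = f(\channel)\cdot\mathrm{id} + \big(\tfrac{1}{2^n}\tr(\channel(\mathbb I)) - f(\channel)\big)\tr(\cdot)\tfrac{\mathbb I}{2^n}$ with $f(\channel) = \tfrac{1}{2^{2n}-1}\big(\tr(\channel\circ\diag) - \tfrac{1}{2^n}\tr(\channel(\mathbb I))\big)$. First I would compute the noiseless shadow channel $\mathcal M_\ensemble = \mathcal M_{\ensemble,\mathbb I}$ as the special case $\channel = \mathbb I$: here $\tr(\mathbb I\circ\diag) = 2^n$ and $\tr(\mathbb I(\mathbb I)) = 2^n$, so $f(\mathbb I) = \tfrac{2^n-1}{2^{2n}-1} = \tfrac{1}{2^n+1}$ and the second coefficient becomes $1 - \tfrac{1}{2^n+1} = \tfrac{2^n}{2^n+1}$. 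Thus $\mathcal M_\ensemble = \mathcal D_{n,1/(2^n+1)}$.

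For the first equivalence: since two depolarizing channels $\mathcal D_{n,f}$ and $\mathcal D_{n,f'}$ on $\mathcal L(\mathbb C^{2^n})$ are equal if and only if $f = f'$ (they agree on a traceless operator exactly when the scalar parameters match), $\mathcal M_{\ensemble,\channel} = \mathcal M_\ensemble$ holds iff $f(\channel) = \tfrac{1}{2^n+1}$ \emph{and} the trace-part coefficients agree. A cleaner route, avoiding the non-trace-preserving subtlety, is to compare $\mathcal M_{\ensemble,\channel}$ and $\mathcal M_\ensemble$ directly from the formula in \cref{lemma:2design}: the difference is $\big(f(\channel) - \tfrac{1}{2^n+1}\big)A + \big(\tfrac{1}{2^n}\tr(\channel(\mathbb I)) - f(\channel) - \tfrac{1}{2^n}\cdot 2^n + \tfrac{1}{2^n+1}\big)\tr(A)\tfrac{\mathbb I}{2^n}$, and setting the coefficient of $A$ to zero forces $f(\channel) = \tfrac{1}{2^n+1}$, i.e. $\tr(\channel\circ\diag) - \tfrac{1}{2^n}\tr(\channel(\mathbb I)) = \tfrac{2^{2n}-1}{2^n+1} = 2^n - 1$; substituting back into the trace-part coefficient and simplifying shows it also vanishes precisely when additionally $\tr(\channel(\mathbb I)) = 2^n$, whence $\tr(\channel\circ\diag) = 2^n$. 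Conversely if $\tr(\channel\circ\diag) = 2^n$, then to get $\mathcal M_{\ensemble,\channel}=\mathcal M_\ensemble$ from the formula one still seems to need $\tr(\channel(\mathbb I))=2^n$; the cleanest fix is to observe (or restrict to the stated context, where $\channel$ is a genuine quantum channel, hence trace-preserving, so $\tr(\channel(\mathbb I)) = 2^n$ automatically) that under trace-preservation the single scalar condition $\tr(\channel\circ\diag)=2^n$ is equivalent to $f(\channel)=\tfrac{1}{2^n+1}$, which is equivalent to $\mathcal M_{\ensemble,\channel}=\mathcal D_{n,1/(2^n+1)}=\mathcal M_\ensemble$.

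For the second equivalence, I would unpack $\tr(\channel\circ\diag) = \sum_{b\in\{0,1\}^n}\langle b\rvert\channel(\lvert b\rangle\!\langle b\rvert)\lvert b\rangle$ using the formula $\Tr(\channel\circ\diag) = \sum_i \bra i \channel(\ketbra ii)\ket i$ recorded in \cref{sec:linear_operators_and_quantum_channels}. Each summand $\langle b\rvert\channel(\lvert b\rangle\!\langle b\rvert)\lvert b\rangle$ lies in $[0,1]$ because $\channel(\lvert b\rangle\!\langle b\rvert)$ is a density operator (using complete positivity and trace preservation), so its $\ket b$-diagonal entry is a probability; hence the sum of $2^n$ such terms equals $2^n$ if and only if every term equals $1$. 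Combining with the first equivalence gives the claim. The main obstacle is the bookkeeping around whether $\channel$ is assumed trace-preserving: the statement as written says ``linear superoperator,'' but the ``inconsequential noise'' interpretation and the clean $\tr(\channel\circ\diag)=2^n$ criterion really want trace preservation (or at least $\tr(\channel(\mathbb I))=2^n$), so I would either add that hypothesis explicitly or note that for a general linear superoperator the correct condition is the pair $f(\channel)=\tfrac1{2^n+1}$ and $\tr(\channel(\mathbb I))=2^n$, which together are equivalent to $\tr(\channel\circ\diag)=2^n$ and $\tr(\channel(\mathbb I))=2^n$ — and that the second of these is free whenever $\channel$ is a quantum channel, which is the case of interest throughout the paper.
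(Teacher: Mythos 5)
Your proposal is correct and follows essentially the same route as the paper's own (very terse) proof: invoke \cref{lemma:2design} to write both shadow channels as depolarizing channels, reduce equality to $f(\channel)=f(\mathbb I)=\tfrac{1}{2^n+1}$, i.e.\ $\tr(\channel\circ\diag)=2^n$, and then note that the $2^n$ diagonal terms $\bra b\channel(\ketbra bb)\ket b\in[0,1]$ sum to $2^n$ iff each equals $1$. Your extra care about the hypothesis is warranted: the claim says ``linear superoperator,'' yet the identification $\mathcal M_{\ensemble,\channel}=\mathcal D_{n,f(\channel)}$ and the per-term $[0,1]$ bound both need $\channel$ to be trace-preserving (or unital, respectively a genuine channel), a point the paper's one-line proof silently assumes and you correctly flag and repair.
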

\begin{proof}
\begin{align}
\mathcal{M}_{\ensemble, \channel} = \mathcal{M}_{\ensemble} \iff \mathcal{D}_{n, f(\channel)} = \mathcal{D}_{n, f(\mathbb{I})} \iff f(\channel) = f(\mathbb{I}) \iff \tr(\channel \circ \mathrm{diag}) = 2^n.
\end{align}
Also, 
\begin{align}
\mathcal{M}_{\ensemble, \channel} = \mathcal{M}_{\ensemble} \iff \tr(\channel \circ \mathrm{diag}) = 2^n \iff \langle b \rvert \channel(\lvert b \rangle\!\langle b \rvert) \lvert b \rangle = 1, \,\forall\, b \in \{0,1\}^n. 
\end{align}
\end{proof}

\section{Shadow Seminorm of \texorpdfstring{$k$}{k}-Local Observable with Product Clifford Ensemble}\label{section:shadow_norm_klocal_observable_product}
In this section we bound the shadow seminorm of a $k$-local observable when the product Clifford ensemble is subject to depolarizing noise (rather than a general quantum channel). 

\begin{proposition}
Let $O \in \mathbb H_{2^n}$ be a $k$-local observable with nontrivial part $\tilde{O}$. Let $\tilde{O} = \sum_{\mathbf{p} \in \mathbb Z_4^k} \alpha_\mathbf{p} P_\mathbf{p}$ be the expansion of $\tilde{O}$ in the Pauli basis. Let $0 \leq f \leq 1$. Then, 

\begin{align}
    \norm{O}_{\shadow, \localclifford^{\otimes n}, \mathcal{D}_{1, f}^{\otimes n}}^2 = \norm{\sum_{\mathbf p, \mathbf q \in \mathbb Z_4^k} \alpha_{\mathbf p} \alpha_{\mathbf q}\tilde{\mathcal{F}}(\mathbf p,\mathbf q)P_{\mathbf p}P_\mathbf q}_{\mathrm{sp}},
\end{align}
where 
\begin{align}
    \tilde{\mathcal{F}}(\mathbf p, \mathbf q) = \prod_{j=1}^k \tilde{f}(\mathbf p_j, \mathbf q_j),
\end{align}
with
\begin{align}
    \tilde{f}(p,q) = \begin{cases}
    1/f &\text{if $p=q=0$.} \\
    1 &\text{if $(p=0) \oplus (q=0)$.} \\
    3/f^2 &\text{if $p=q\neq 0$.} \\
    0 &\text{otherwise.}
    \end{cases}
\end{align}
\end{proposition}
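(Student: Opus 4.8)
The plan is to reduce the statement to a computation on the $k$ active qubits and then evaluate a product of single-qubit Clifford twirls using \cref{lem:identity3designsum}. First I would invoke \cref{lemma:general_locality_respecting_shadow_norm}: since $O$ acts nontrivially only on the $k$ qubits carrying $\tilde{O}$, $\norm{O}_{\shadow,\localclifford^{\otimes n},\mathcal D_{1,f}^{\otimes n}}=\norm{\tilde O}_{\shadow,\localclifford^{\otimes k},\mathcal D_{1,f}^{\otimes k}}$, so it suffices to compute the $k$-qubit shadow seminorm of $\tilde O$. By \cref{claim:shadow_channel_product_clifford}, using $\tr(\mathcal D_{1,f}\circ\diag)=1+f$, the $k$-qubit shadow channel equals $\mathcal D_{1,f/3}^{\otimes k}$, so its inverse (for $f>0$) is $\mathcal D_{1,3/f}^{\otimes k}$; since the depolarizing channel is self-adjoint for the Hilbert--Schmidt inner product, $\mathcal M^{-1,\dagger}_{\localclifford^{\otimes k},\mathcal D_{1,f}^{\otimes k}}=\mathcal D_{1,3/f}^{\otimes k}$ as well. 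Expanding $\tilde O=\sum_{\mathbf p}\alpha_{\mathbf p}P_{\mathbf p}$ and using $\mathcal D_{1,3/f}(\mathbb{I})=\mathbb{I}$ together with $\mathcal D_{1,3/f}(P)=\tfrac{3}{f}P$ for any traceless single-qubit Pauli $P$, this yields $\mathcal M^{-1,\dagger}_{\localclifford^{\otimes k},\mathcal D_{1,f}^{\otimes k}}(\tilde O)=\sum_{\mathbf p}\alpha_{\mathbf p}(3/f)^{\mathrm{wt}(\mathbf p)}P_{\mathbf p}$.

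Next I would substitute this into the definition of the shadow seminorm and expand the square, obtaining a double sum over Pauli labels $\mathbf p,\mathbf q\in\mathbb{Z}_4^k$ in which each term carries the scalar $\alpha_{\mathbf p}\alpha_{\mathbf q}(3/f)^{\mathrm{wt}(\mathbf p)+\mathrm{wt}(\mathbf q)}$ and is multiplied by $\E_{U\sim\localclifford^{\otimes k}}\sum_{b}\bra{b}\mathcal D_{1,f}^{\otimes k}(U\sigma U^\dagger)\ket{b}\,\bra{b}UP_{\mathbf p}U^\dagger\ket{b}\,\bra{b}UP_{\mathbf q}U^\dagger\ket{b}$. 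Because the ensemble, the noise channel, the Paulis and the computational basis all factorize across the $k$ qubits, and because the expression is linear in the input state, writing $\sigma$ in the $k$-qubit Pauli basis $\sigma=\sum_{\mathbf r}s_{\mathbf r}P_{\mathbf r}$ with $s_{\mathbf r}=2^{-k}\tr(\sigma P_{\mathbf r})$ and distributing turns this quantity into a product over $j\in[k]$ of identical single-qubit twirls, exactly as in the proof of \cref{proposition:pauli_shadow_norm}.

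The heart of the argument is the single-qubit evaluation. For single-qubit Paulis $P_r,P_p,P_q$ I would apply \cref{lem:identity3designsum} with $d=2$ and $\mathcal E=\mathcal D_{1,f}$, so that $\alpha=\tr(\mathcal E(\mathbb{I}))=2$ and $\beta=\tr(\mathcal E\circ\diag)=1+f$, taking $A=P_r$, $B=P_p$, $C=P_q$, and simplify using $\tr(P_rP_s)=2\delta_{rs}$, $\tr(P_r)=2\delta_{r0}$ (identifying the identity with the label $0$), and the (anti)commutation relations of $X,Y,Z$. After multiplying by the inversion factor $(3/f)^{\mathrm{wt}(P_p)+\mathrm{wt}(P_q)}$ coming from the previous step and summing the outcome against $P_r$ over $r$, the $j$-th tensor factor should reduce to the single-qubit operator carrying the weight $\tilde f(P_p,P_q)$ attached to $P_pP_q$; the four cases to check are $P_p=P_q=\mathbb{I}$, exactly one of $P_p,P_q$ equal to $\mathbb{I}$, $P_p=P_q\neq\mathbb{I}$, and $P_p\neq P_q$ both $\neq\mathbb{I}$ — the last contributing zero because the only terms in \cref{lem:identity3designsum} that survive the vanishing traces are proportional to $\tr\!\big(P_r(P_pP_q+P_qP_p)\big)$ and distinct nontrivial single-qubit Paulis anticommute.

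Reassembling the $k$ tensor factors with the mixed-product identity $\bigotimes_j(A_jB_j)=\big(\bigotimes_j A_j\big)\big(\bigotimes_j B_j\big)$, the uniform normalization factors collected across the $k$ sites cancel the $2^{-k}$ from the Pauli expansion of $\sigma$, and the whole $\sigma$-dependence collapses to $\tr\!\big(\sigma\,\tilde{\mathcal F}(\mathbf p,\mathbf q)P_{\mathbf p}P_{\mathbf q}\big)$; summing over $\mathbf p,\mathbf q$ then gives $\norm{\tilde O}^2_{\shadow,\localclifford^{\otimes k},\mathcal D_{1,f}^{\otimes k}}=\max_{\sigma\in\mathbb{D}_{2^k}}\tr(\sigma M)$ with $M=\sum_{\mathbf p,\mathbf q}\alpha_{\mathbf p}\alpha_{\mathbf q}\tilde{\mathcal F}(\mathbf p,\mathbf q)P_{\mathbf p}P_{\mathbf q}$. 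Finally, before the maximization the quantity being maximized is manifestly a sum of nonnegative terms — nonnegative diagonal entries of a density matrix times a square — so $\tr(\sigma M)\geq 0$ for every state $\sigma$; hence $M\succeq 0$ and $\max_{\sigma\in\mathbb{D}_{2^k}}\tr(\sigma M)$ equals the largest eigenvalue of $M$, i.e.\ $\norm{M}_{\mathrm{sp}}$, which is the asserted identity. The main obstacle is the third step — carrying out the single-qubit twirl and the case analysis so that the inversion factors exactly regenerate the weights $\tilde f$ — while the positivity of $M$, needed to pass from the state maximization to the spectral norm, is a small but essential extra point.
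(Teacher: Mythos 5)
Your overall route is the same as the paper's: reduce to the $k$ active qubits via \cref{lemma:general_locality_respecting_shadow_norm}, identify $\mathcal{M}^{-1,\dagger}_{\localclifford^{\otimes k},\mathcal D_{1,f}^{\otimes k}}$ with $\mathcal D_{1,3/f}^{\otimes k}$, expand in the Pauli basis, factorize the twirl qubit by qubit, and pass from $\max_{\sigma}\tr(\sigma M)$ to $\norm{M}_{\mathrm{sp}}$. The one genuine methodological difference is welcome: you evaluate each single-qubit factor in one shot with \cref{lem:identity3designsum} at $d=2$, $\mathcal E=\mathcal D_{1,f}$, $\alpha=2$, $\beta=1+f$, whereas the paper first splits $\mathcal D_{1,f}(UAU^\dagger)=fUAU^\dagger+\tfrac{1-f}{2}\tr(A)\mathbb I$ and applies the noiseless identity, recombining the $f$-factors afterwards; your version is cleaner, and your observation that $M\succeq 0$ (each pre-maximization term is nonnegative) is exactly what is needed to equate $\max_{\sigma\in\mathbb D_{2^k}}\tr(\sigma M)$ with $\norm{M}_{\mathrm{sp}}$, a point the paper glosses over.

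However, the heart of your argument---the four-case single-qubit evaluation---is only asserted, and when carried out it does not ``exactly regenerate the weights $\tilde f$'' as you claim. For $p=q=0$ the inversion factor is $1$ and, since $\mathcal D_{1,3/f}(\mathbb I)=\mathbb I$ and $\mathcal D_{1,f}$ is trace-preserving, the factor is $\E_{U}\sum_{b}\bra b\mathcal D_{1,f}(UAU^\dagger)\ket b=\tr(A)$, i.e.\ weight $1$, not the stated $\tilde f(0,0)=1/f$; equivalently, from \cref{lem:identity3designsum}, $\tfrac{4-2f}{24}\cdot 6\,\tr A+\tfrac{2f}{24}\cdot 6\,\tr A=\tr A$. (The other three cases do come out as $1$, $3/f^2$ and $0$.) So your method, done honestly, proves the formula with $\tilde f(0,0)=1$ rather than the printed one; the printed value traces to a slip in the paper's own compression step, where $\zeta(p,q)=1/f$ ``if $p=0$ or $q=0$'' wrongly includes $p=q=0$, contradicting the paper's own Case~1, $\xi[A](0,0)=\tr A$. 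The discrepancy is not vacuous: the site-wise case $\mathbf p_j=\mathbf q_j=0$ occurs whenever the Pauli expansion of $\tilde O$ contains terms of weight less than $k$ (or an identity component). A one-qubit check: for $\tilde O=\mathbb I+Z$ a direct computation gives squared seminorm $3+3/f^2$, while the displayed formula gives $2+1/f+3/f^2$. So either flag and correct the $(0,0)$ entry of $\tilde f$, or your proof as proposed fails precisely at the step you left unverified.
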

\begin{proof}
Without loss of generality, we write $O = \tilde{O} \otimes \mathbb I$, where $\tilde{O} = \sum_{\mathbf p \in \mathbb Z_4^k} \alpha_\mathbf p P_\mathbf p$. Then, 

\begin{align}
    &\norm{O}^2_{\shadow, \localclifford^{\otimes n}, \mathcal D_{1,f}^{\otimes n}} \nn 
    &\quad= \norm{\tilde{O}}^2_{\shadow, \localclifford^{\otimes k}, \mathcal D_{1,f}^{\otimes k}}\nn
    &\quad= \max_{\sigma \in \mathbb D_{2^k}} \underset{U \sim \localclifford^{\otimes k}}{\E} \sum_{b \in \B^k} \langle b \rvert \mathcal D_{1,f}^{\otimes k}(U \sigma U^\dagger) \lvert b \rangle\!\langle b \rvert U \mathcal D_{1, 3/f}^{\otimes k} (\tilde{O}) U^\dagger \lvert b \rangle^2 \nn
    &\quad= \max_{\sigma \in \mathbb D_{2^k}} \underset{U \sim \localclifford^{\otimes k}}{\E} \sum_{b \in \B^k} \langle b \rvert \mathcal D_{1,f}^{\otimes k}(U \sigma U^\dagger) \lvert b \rangle\!\langle b \rvert U \mathcal D_{1, 3/f}^{\otimes k} \left(\sum_{\mathbf p \in \mathbb Z_4^k}\alpha_\mathbf p P_\mathbf p\right) U^\dagger \lvert b \rangle^2  \nn
    &\quad= \max_{\sigma \in \mathbb D_{2^k}}\! \sum_{\mathbf p, \mathbf q \in \mathbb Z_4^k}\!\!\! \alpha_\mathbf p \alpha_\mathbf q \underbrace{\underset{U \sim \localclifford^{\otimes k}}{\E} \sum_{b \in \B^k}\!\!\! \langle b \rvert \mathcal D_{1,f}^{\otimes k}(U \sigma U^\dagger) \lvert b \rangle\!\langle b \rvert U \mathcal D_{1, 3/f}^{\otimes k} (P_\mathbf p) U^\dagger \lvert b \rangle\!\langle b \rvert U \mathcal D_{1, 3/f}^{\otimes k} (P_\mathbf q) U^\dagger \lvert b \rangle}_{\circled{1}}\!.\label{eq:shadow_norm_depolarizing_klocal_observable}
\end{align}

To evaluate $\circled{1}$, write $\sigma = \sum_{i_1\ldots i_k} \sigma_{i_1 \ldots i_k} e_{i_1} \otimes \ldots \otimes e_{i_k}$. Then, 

\begin{align}
    \circled{1} 
    &= \underset{U \sim \localclifford^{\otimes k}}{\E} \sum_{b \in \B^k} \langle b \rvert \mathcal D_{1,f}^{\otimes k}(U \sum_{i_1\ldots i_k} \sigma_{i_1 \ldots i_k} e_{i_1} \otimes \ldots \otimes e_{i_k}U^\dagger) \lvert b \rangle\nn &\qquad\cdot\langle b \rvert U \mathcal D_{1, 3/f}^{\otimes k} (P_\mathbf p) U^\dagger \lvert b \rangle\!\langle b \rvert U \mathcal D_{1, 3/f}^{\otimes k} (P_\mathbf q) U^\dagger \lvert b \rangle \nn
    &= \sum_{i_1 \ldots i_k}\!\!\sigma_{i_1 \ldots i_k} \bigotimes_{j=1}^k \underset{U_j \sim \localclifford}{\E} \sum_{b_j \in \B} \langle b_j \rvert \mathcal D_{1,f}(U_j e_{i_j} U_j^\dagger) \lvert b_j \rangle\nn&\qquad\cdot\langle b_j \rvert U_j \mathcal D_{1, 3/f} (P_{\mathbf p_j}) U_j^\dagger \lvert b_j \rangle\!\langle b_j \rvert U_j \mathcal D_{1, 3/f} (P_{\mathbf q_j}) U_j^\dagger \lvert b_j \rangle \nn
    &= \sum_{i_1 \ldots i_k}\!\!\sigma_{i_1 \ldots i_k} \prod_{j=1}^k \underset{U_j \sim \localclifford}{\E} \sum_{b_j \in \B} \langle b_j \rvert (f U_j e_{i_j} U_j^\dagger + (1-f)\tr( U_j e_{i_j} U_j^\dagger) \frac{\mathbb I}{2}) \lvert b_j \rangle\nn&\qquad\cdot\langle b_j \rvert U_j \mathcal D_{1, 3/f} (P_{\mathbf p_j}) U_j^\dagger \lvert b_j \rangle\!\langle b_j \rvert U_j \mathcal D_{1, 3/f} (P_{\mathbf q_j}) U_j^\dagger \lvert b_j \rangle \nn
    &= \sum_{i_1 \ldots i_k}\!\!\sigma_{i_1 \ldots i_k} \prod_{j=1}^k \underset{U_j \sim \localclifford}{\E} \sum_{b_j \in \B} \Big\{ f\langle b_j \rvert U_j e_{i_j} U_j^\dagger \lvert b_j \rangle + \frac{1-f}{2} \tr(e_{i_j})\Big\}\nn&\qquad\cdot\langle b_j \rvert U_j \mathcal D_{1, 3/f} (P_{\mathbf p_j}) U_j^\dagger \lvert b_j \rangle\!\langle b_j \rvert U_j \mathcal D_{1, 3/f} (P_{\mathbf q_j}) U_j^\dagger \lvert b_j \rangle \nn
    &= \sum_{i_1 \ldots i_k}\!\!\sigma_{i_1 \ldots i_k} \prod_{j=1}^k
    \bigg\{ f \underbrace{\underset{U_j \sim \localclifford}{\E} \sum_{b_j \in \B}\!\! \langle b_j \rvert U_j e_{i_j} U_j^\dagger \lvert b_j \rangle\!\langle b_j \rvert U_j \mathcal D_{1, 3/f} (P_{\mathbf p_j}) U_j^\dagger \lvert b_j \rangle\!\langle b_j \rvert U_j \mathcal D_{1, 3/f} (P_{\mathbf q_j}) U_j^\dagger \lvert b_j \rangle}_{\circled{2}}\nn&\qquad+
    \frac{1-f}{2} \tr(e_{i_j})\underbrace{\underset{U_j \sim \localclifford}{\E} \sum_{b_j \in \B} \langle b_j \rvert U_j \mathcal D_{1, 3/f} (P_{\mathbf p_j}) U_j^\dagger \lvert b_j \rangle\!\langle b_j \rvert U_j \mathcal D_{1, 3/f}(P_{\mathbf q_j}) U_j^\dagger \lvert b_j \rangle}_{\circled{3}} \bigg\}.
\end{align}

To evaluate $\circled{2}$ and $\circled{3}$, define $\xi[A](p,q)$ as 
\[
\xi[A](p, q) \overset{\mathrm{def}}{=} \underset{U \sim \localclifford}{\E} \sum_{b \in \B} \langle b \rvert U A U^\dagger \lvert b \rangle\!\langle b \rvert U \mathcal D_{1, 3/f} (P_p) U^\dagger \lvert b \rangle\!\langle b \rvert U \mathcal D_{1, 3/f} (P_q) U^\dagger \lvert b \rangle.
\]
Hence, $\circled{2} = \xi[e_{i_j}](\mathbf{p}_j, \mathbf{q}_j)$ and $\circled{3} = \xi[\mathbb I](\mathbf{p}_j, \mathbf{q}_j)$. We will now find an expression for $\xi[A](p, q)$.
\vspace{0.15in}

\textit{Case 1:} $p= q= 0$.
\vspace{-0.10in}
\begin{align}
    \xi[A](0,0) 
    &= \underset{U \sim \localclifford}{\E} \sum_{b \in \B} \langle b \rvert U A U^\dagger \lvert b \rangle\!\langle b \rvert U \mathcal D_{1, 3/f} (\mathbb I) U^\dagger \lvert b \rangle\!\langle b \rvert U \mathcal D_{1, 3/f} (\mathbb I) U^\dagger \lvert b \rangle \nn
    &= \underset{U \sim \localclifford}{\E} \sum_{b \in \B} \langle b \rvert U A U^\dagger \lvert b \rangle \nn
    &= \tr(A).
\end{align}

\textit{Case 2:} $p\neq 0, q= 0$.
\vspace{-0.10in}
\begin{align}
    \xi[A](p,0) 
    &= \underset{U \sim \localclifford}{\E} \sum_{b \in \B} \langle b \rvert U A U^\dagger \lvert b \rangle\!\langle b \rvert U \mathcal D_{1, 3/f} (P_p) U^\dagger \lvert b \rangle\!\langle b \rvert U \mathcal D_{1, 3/f} (\mathbb I) U^\dagger \lvert b \rangle \nn
    &= \frac{3}{f}\underset{U \sim \localclifford}{\E} \sum_{b \in \B} \langle b \rvert U A U^\dagger \lvert b \rangle\!\langle b \rvert U P_p U^\dagger \lvert b \rangle \nn
    &= \frac{3}{f}\tr \Big\{ A \underset{U \sim \localclifford}{\E} \sum_{b \in \B} U^\dagger\lvert b\rangle\! \langle b \rvert U \langle b \rvert U P_p U^\dagger \lvert b \rangle \Big\} \nn
    &= \frac{3}{f}\tr \Big\{ A \mathcal{M}_{\localclifford, \mathbb I}(P_p) \Big\} \nn
    &= \frac{1}{f}\tr(A P_p). 
\end{align}

\textit{Case 3:} $p = 0, q\neq 0$.
By symmetry, 
\vspace{-0.10in}
\begin{align}
    \xi[A](0,q) 
    &= \frac{1}{f}\tr(A P_q). 
\end{align}

\textit{Case 4:} $p \neq 0, q \neq 0$.
\vspace{-0.10in}
\begin{align}
    \xi[A](p,q) 
    &= \underset{U \sim \localclifford}{\E} \sum_{b \in \B} \langle b \rvert U A U^\dagger \lvert b \rangle\!\langle b \rvert U \mathcal D_{1, 3/f} (P_p) U^\dagger \lvert b \rangle\!\langle b \rvert U \mathcal D_{1, 3/f} (P_q) U^\dagger \lvert b \rangle \nn
    &=\frac{9}{f^2} \tr \Big\{ A \underset{U \sim \localclifford}{\E} \sum_{b \in \B} U^\dagger\lvert b\rangle\! \langle b \rvert U \langle b \rvert U P_p U^\dagger \lvert b \rangle\!\langle b \rvert U \mathcal P_q U^\dagger \lvert b \rangle \Big\} \nn
    &=\frac{9}{f^2} \tr \Big\{ A \frac{1}{3} \delta_{pq} \mathbb I \Big\} \nn
    &=\frac{3}{f^2} \delta_{pq} \tr(A).
\end{align}
The third equality follows from Eq.~(S36) of \cite{huang2020predicting}, which itself follows from \cref{eq:generalizedEqS36} by setting $\mathcal E = I$, $d=2$, and $\tr(B)= \tr(C)=0$. Combining the four cases gives 

\begin{align}
    \xi[A](p,q) = \begin{cases}
    \tr(A) &\text{if $p = q = 0$.} \\
    \frac{1}{f}\tr(AP_p) &\text{if $p \neq 0, q = 0$.} \\
    \frac{1}{f}\tr(AP_q) &\text{if $p = 0, q \neq 0$.} \\\
    \frac{3}{f^2}\tr(A) &\text{if $p = q \neq 0$.} \\
    0 &\text{if $p \neq q, p\neq 0, q \neq 0$.} \\
    \end{cases} = \zeta(p,q)\tr(AP_pP_q), 
\end{align}
where 
\begin{align}
    \zeta(p,q) = \begin{cases}
    1/f &\text{if $p=0$ or $q=0$.}\\
    3/f^2 &\text{if $p=q \neq 0$.}\\
    0 &\text{otherwise.}\\
    \end{cases}
\end{align}
Applying this to $\circled{2}$ and $\circled{3}$ gives 
\[
\circled{2} = \xi[e_{i_j}](\mathbf{p}_j, \mathbf{q}_j) = 
\zeta(\mathbf{p}_j, \mathbf{q}_j)\tr(e_{i_j}P_{\mathbf{p}_j}P_{\mathbf{q}_j})
\]
and 
\[
\circled{3} = \xi[\mathbb I](\mathbf{p}_j, \mathbf{q}_j) = \zeta(\mathbf{p}_j, \mathbf{q}_j)\tr(P_{\mathbf{p}_j}P_{\mathbf{q}_j}) = 
2\zeta(\mathbf{p}_j, \mathbf{q}_j)\delta_{\mathbf p_j \mathbf q_j}.
\]
Plugging these expressions into $\circled{1}$ gives 
\begin{align}
\circled{1}
&= \sum_{i_1 \ldots i_k}\!\!\sigma_{i_1 \ldots i_k} \prod_{j=1}^k\bigg\{ f \zeta(\mathbf{p}_j, \mathbf{q}_j)\tr(e_{i_j}P_{\mathbf{p}_j}P_{\mathbf{q}_j})    +\frac{1-f}{2} \tr(e_{i_j})2\zeta(\mathbf{p}_j, \mathbf{q}_j)\delta_{\mathbf p_j \mathbf q_j}\bigg\} \nn
&= \sum_{i_1 \ldots i_k}\!\!\sigma_{i_1 \ldots i_k}\Big(\prod_{j=1}^k \zeta(\mathbf{p}_j, \mathbf{q}_j)\Big) \prod_{j=1}^k\bigg\{ \underbrace{f \tr(e_{i_j}P_{\mathbf{p}_j}P_{\mathbf{q}_j})    +(1-f) \tr(e_{i_j})\delta_{\mathbf p_j \mathbf q_j}}_{\circled{4}}\bigg\}.
\end{align}
When $\mathbf p_j \neq \mathbf q_j$, $\circled{4} = f \tr(e_{i_j}P_{\mathbf p_j}P_{\mathbf q_j})$. When $\mathbf p_j = \mathbf q_j$, $\circled{4} = f \tr(e_{i_j} P_{\mathbf p_j}^2) + (1-f)\tr(e_{i_j}) = \tr(e_{i_j})$. Hence, 
\[\circled{4} = f^{\mathbbm 1_{\mathbf p_j \neq \mathbf q_j}} \tr(e_{i_j}P_{\mathbf p_j}P_{\mathbf q_j}).\] 
\begin{align}
 \circled{1}
&= \sum_{i_1 \ldots i_k}\!\!\sigma_{i_1 \ldots i_k}\Big(\prod_{j=1}^k \zeta(\mathbf{p}_j, \mathbf{q}_j)\Big) \prod_{j=1}^k\bigg\{ f^{\mathbbm 1_{\mathbf p_j \neq \mathbf q_j}} \tr(e_{i_j}P_{\mathbf p_j}P_{\mathbf q_j})\bigg\} \nn
&= \prod_{j=1}^k \underbrace{f^{\mathbbm 1_{\mathbf p_j \neq \mathbf q_j}}\zeta(\mathbf{p}_j, \mathbf{q}_j)}_{\circled{5}} \sum_{i_1 \ldots i_k}\!\!\sigma_{i_1 \ldots i_k} \tr((e_{i_1} \otimes \ldots \otimes e_{i_k})(P_{\mathbf p_1} \otimes \ldots \otimes P_{\mathbf p_k})(P_{\mathbf q_1} \otimes \ldots \otimes P_{\mathbf q_k})).
\end{align}

Let $\tilde{f}(p, q) = f^{\mathbbm 1_{\mathbf p_j \neq \mathbf q_j}}\zeta(\mathbf{p}_j, \mathbf{q}_j)$. Then, 
\begin{align}
    \tilde{f}(p, q) = \begin{cases}
    f &\text{if $p \neq q$.} \\
    1 &\text{if $p = q$.} \\
    \end{cases} \times 
    \begin{cases}
    1/f &\text{if $p=0$ or $q=0$.}\\
    3/f^2 &\text{if $p=q \neq 0$.}\\
    0 &\text{otherwise.}\\
    \end{cases} = 
    \begin{cases}
    1/f &\text{if $p=q=0$.} \\
    1 &\text{if $(p=0) \oplus (q=0)$.} \\
    3/f^2 &\text{if $p=q\neq 0$.} \\
    0 &\text{otherwise.}
    \end{cases}
\end{align}
and $\circled{5} = \tilde{f}(\mathbf p_j, \mathbf q_j)$. Let $\tilde{\mathcal F}(\mathbf p, \mathbf q) = \prod_{j= 1}^k \tilde{f}(\mathbf p_j, \mathbf q_j)$. Then,  

\begin{align}
 \circled{1}
&= \tilde{\mathcal{F}}(\mathbf p, \mathbf q) \sum_{i_1 \ldots i_k}\!\!\sigma_{i_1 \ldots i_k} \tr((e_{i_1} \otimes \ldots \otimes e_{i_k})P_{\mathbf p}P_{\mathbf q}) \nn
&= \tilde{\mathcal{F}}(\mathbf p, \mathbf q) \tr(\sigma P_{\mathbf p}P_{\mathbf q}).
\end{align}

Plugging the expression for $\circled{1}$ into \cref{eq:shadow_norm_depolarizing_klocal_observable} completes the proof. 

\begin{align}
    \norm{O}^2_{\shadow, \localclifford^{\otimes n}, \mathcal D_{1,f}^{\otimes n}}
    &= \max_{\sigma \in \mathbb D_{2^k}}\! \sum_{\mathbf p, \mathbf q \in \mathbb Z_4^k}\!\!\! \alpha_\mathbf p \alpha_\mathbf q \tilde{\mathcal{F}}(\mathbf p, \mathbf q) \tr(\sigma P_{\mathbf p}P_{\mathbf q}) \nn
    &= \max_{\sigma \in \mathbb D_{2^k}} \tr\Big\{ \sigma  \sum_{\mathbf p, \mathbf q \in \mathbb Z_4^k}\!\!\! \alpha_\mathbf p \alpha_\mathbf q \tilde{\mathcal{F}}(\mathbf p, \mathbf q) P_{\mathbf p}P_{\mathbf q}\Big\} \nn
    &= \norm{\sum_{\mathbf p, \mathbf q \in \mathbb Z_4^k}\!\!\! \alpha_\mathbf p \alpha_\mathbf q \tilde{\mathcal{F}}(\mathbf p, \mathbf q) P_{\mathbf p}P_{\mathbf q}}_{\mathrm{sp}}\!\!.
\end{align}

\end{proof}

\section{Noisy Input States}
\label{sec:noisy_input_states}

Throughout the main text, our assumption has been that the input state $\rho$ is prepared without any errors. But what if the input state given is itself noisy? In this appendix, we consider the case where
in addition to the noise described in the noisy measurement primitive of \cref{def:noisy_measurement_primitive}, the input state $\rho$ is subject to the noise channel $\mathcal K$. In other words, instead of the intended transformation $\rho \mapsto U\rho U^\dag$, the input state transforms as $\rho \mapsto \mathcal E(U \mathcal K(\rho) U^\dag)$. This scenario is equivalent to a noise model where the unitary operation $U$ is replaced by one where a noise channel acts both before and after the perfect implementation of $U$ (see \cref{footnote:GTM_noise}).

With this change,  \cref{eq:output_procedure} becomes
\begin{align}
\label{eq:output_procedure_new}
U^\dagger \lvert \hat{b} \rangle\!\langle \hat{b} \rvert U \quad \text{with probability} \quad P_b(\hat{b}) \overset{\text{def}}{=} \langle \hat{b}\rvert \channel(U \mathcal K(\rho) U^\dagger) \lvert \hat{b} \rangle \quad \text{where} \quad \hat{b} \in \{0,1\}^n.
\end{align}
The noisy shadow channel in \cref{eq:shadow_channel} is modified to
    \begin{align}\label{eq:shadow_channel_new}
    \mathcal M_{\ensemble,\channel, \mathcal K}(\rho) \overset{\mathrm{def}}{=}  \E_{U \sim \ensemble} \sum_{b \in \{0,1\}^n} \bra{b} \channel(U \mathcal K(\rho) U^\dagger) \ket{b} U^\dagger \ket{b}\!\bra{b} U = 
    \left(\shadowchannel{\ensemble}{\channel}\circ \mathcal K\right)(\rho).
    \end{align}
and the noisy classical shadow of \cref{eq:classical_shadow} becomes
\begin{align}\label{eq:classical_shadow_new}
\hat{\rho} &= \hat{\rho}(\ensemble, \channel, \mathcal K, \hat{U}, \hat{b}) \overset{\mathrm{def}}{=} 
\mathcal M^{-1}_{\ensemble,\channel,\mathcal K}(\hat{U}^\dagger \lvert \hat{b}\rangle\!\langle \hat{b} \rvert \hat{U})
\nonumber\\
&= (\mathcal K^{-1} \circ \mathcal M_{\ensemble,\channel}^{-1})(\hat{U}^\dagger \lvert \hat{b}\rangle\!\langle \hat{b} \rvert \hat{U})
,
\end{align}
where we have assumed that both the shadow channel and the noise channel $\mathcal K$ are invertible linear superoperators. As before, we do not assume that the inverses are themselves quantum channels.

Next, the shadow seminorm of \cref{eq:noisy_shadow_norm} becomes modified to
\begin{align}
\label{eq:noisy_shadow_norm_new}
\norm{O}_{\shadow, \ensemble, \channel,\mathcal K} &= \max_{\sigma \in \mathbb{D}_{2^n}} 
\sqrt{\underset{U \sim \ensemble}{\E} \sum_{b \in \{0,1\}^n} \langle b \rvert \channel(U \sigma U^\dagger) \lvert b \rangle\!\langle b \rvert U \mathcal{M}^{-1,\dagger}_{\ensemble, \channel,\mathcal K}(O) U^\dagger \lvert b \rangle^2 }
\nonumber \\
&= \max_{\sigma \in \mathbb{D}_{2^n}} 
\sqrt{\underset{U \sim \ensemble}{\E} \sum_{b \in \{0,1\}^n} \langle b \rvert \channel(U \sigma U^\dagger) \lvert b \rangle\!\langle b \rvert U
\mathcal K^{-1,\dagger}
(\mathcal{M}^{-1,\dagger}_{\ensemble, \channel}(O)) U^\dagger \lvert b \rangle^2 }.
\end{align}

Hence, by following the same argument as in the main text, we arrive at \cref{thm:general-theorem}, but with the shadow seminorm $\norm{\cdot}_{\shadow, \ensemble, \channel}$ replaced by
$\norm{\cdot}_{\shadow, \ensemble, \channel,\mathcal K}$. Consequently, the only changes needed to adapt \cref{algo:full_protocol} to this case are
\begin{enumerate}
    \item In step \ref{step:full_protocol_2} of \cref{algo:full_protocol}, replace the shadow seminorm with \cref{eq:noisy_shadow_norm_new}.
    \item In step \ref{step:full_protocol_5} of \cref{algo:full_protocol}, replace $\mathcal E(U\rho U^\dag)$ with  $\mathcal E(U \mathcal K(\rho) U^\dag)$.
    \item In step \ref{step:full_protocol_8} of \cref{algo:full_protocol}, replace each occurrence of $\mathcal M^{-1}_{\mathcal U,\mathcal E}$ with $\mathcal K^{-1} \circ \mathcal M^{-1}_{\mathcal U,\mathcal E}$.
\end{enumerate}

\bibliographystyle{unsrt}
\bibliography{refs}

\end{document}